\crefname{prop}{Proposition}{Propositions}
\crefname{ineq}{inequality}{inequalities}
\newtheorem{theorem}{Theorem}[section] %comment out section sometimes...
\newtheorem{lemma}[theorem]{Lemma}
\newtheorem{fact}[theorem]{Fact}
\newtheorem{problem}{Problem}%[section]
\newtheorem{corollary}[theorem]{Corollary}
\newtheorem{definition}[theorem]{Definition}
\newtheorem{remark}[theorem]{Remark}
\newtheorem{THM}{Theorem}
\crefname{THM}{Theorem}{Theorems}
\newcommand{\E}{\mathbb{E}}
\newcommand{\N}{\mathbb{N}}
\newcommand{\F}{\mathbb{F}}
\newcommand{\poly}{\operatorname{poly}}
\newcommand{\polylog}{\operatorname{polylog}}
\newcommand{\supp}{\text{support}}
\newcommand{\pr}{{\prime}}
\newcommand{\U}{\mathbf{U}}
\newcommand{\X}{\mathbf{X}}
\renewcommand{\L}{\mathbf{L}}
\newcommand{\W}{\mathbf{W}}
\newcommand{\Y}{\mathbf{Y}}
\newcommand{\zo}{\{0,1 \}}
\newcommand{\Ext}{\mathsf{Ext}}
\newcommand{\LRE}{\mathsf{LRE}}
\newcommand{\Leak}{\mathsf{Leak}}
\algnewcommand{\LineComment}[1]{\State \(\triangleright\) #1}
\begin{document}
%%%%%%%%%%%%%%%%%%%%%%%%%%%%%%%%%%%%%%%%%%%%%%%%%%
% TITLE PAGE
% does it sound like the former implies the latter? it doesn't, that's the point...
%\title{Small-space extractors for polylogarithmic entropy\\and explicit extremal designs}
\title{Improved Extractors for Small-Space Sources}
\author{Eshan Chattopadhyay\thanks{Supported by NSF CAREER Award 2045576.}\\
Cornell University\\
\texttt{eshanc@cornell.edu}
\and
Jesse Goodman\footnotemark[1]\\
Cornell University\\
\texttt{jpmgoodman@cs.cornell.edu}
}
\pagenumbering{Alph} % to avoid error
\begin{titlepage}
\maketitle
%%%%%%%%%%%%%%%%%%%%%%%%%%%%%%%%%%%%%%%%%%%%%%%%%%
\begin{abstract}
We study the problem of extracting random bits from weak sources that are sampled by algorithms with limited memory. This  model of \emph{small-space sources} was introduced by Kamp, Rao, Vadhan and Zuckerman (STOC'06), and falls into a line of research initiated by Trevisan and Vadhan (FOCS'00) on extracting randomness from weak sources that are sampled by computationally bounded algorithms. Our main results are the following.

\begin{enumerate}
    \item We obtain near-optimal extractors for small-space sources in the polynomial error regime. For space \(s\) sources over \(n\) bits, our extractors require just \(k\geq s\cdot\polylog(n)\) entropy. This is an exponential improvement over the previous best result, which required \(k\geq s^{1.1}\cdot2^{\log^{0.51} n}\) (Chattopadhyay and Li, STOC'16).% In particular, for space \(s=\polylog(n)\), our extractors improve the entropy requirement from \(k\geq 2^{\log^{0.51}n}\) to \(k\geq\polylog(n)\).    
    
    \item We obtain improved extractors for small-space sources in the negligible error regime. For space \(s\) sources over \(n\) bits, our extractors require entropy \(k\geq n^{1/2+\delta}\cdot s^{1/2-\delta}\), whereas the previous best result required \(k\geq n^{2/3+\delta}\cdot s^{1/3-\delta}\) (Chattopadhyay, Goodman, Goyal and Li, STOC'20).% In particular, for space \(s=n^\delta\), our extractors improve the entropy requirement from \(k\geq n^{2/3+\delta}\) to \(k\geq n^{1/2+\delta}\).
    
\end{enumerate}

To obtain our first result, the key ingredient is a new reduction from small-space sources to affine sources, allowing us to simply apply a good affine extractor.

To obtain our second result, we must develop some new machinery, since we do not have low-error affine extractors that work for low entropy. Our main tool is a significantly improved extractor for adversarial sources, which is built via a simple framework that makes novel use of a certain kind of leakage-resilient extractors (known as \emph{cylinder intersection extractors}), by combining them with a general type of extremal designs. Our key ingredient is the first derandomization of these designs, which we obtain using new connections to coding theory and additive combinatorics.

%, where just a few (unknown) sources are guaranteed to contain any entropy. Such sources are known as adversarial sources, and our key contribution here is a significantly improved and simplified extractor for adversarial sources.
%
%
%
%
%Such sources are known as adversarial sources, and our new contribution here is a 
%
%
%
%adversarial sources, which are defined as a sequence of many independent sources, where just a few (unknown) sources 
%
%
%many independent sources, where just a few (unknown) sources are guaranteed to contain any entropy. This reduction to so-called adversarial sources 
%
%
%This technique has a natural barrier at entropy \(\sqrt{n}\), and thus in some sense we push this reduction to the extreme. To reach this barrier, we make novel use of a certain kind of leakage-resilient extractors known as \emph{cylinder intersection extractors} (introduced by Kumar, Meka, and Sahai, FOCS'19), by combining them with a general type of extremal designs. Our key ingredient is the first derandomization of these designs, which we obtain using new connections to coding theory and additive combinatorics.
%

\end{abstract}
%%%%%%%%%%%%%%%%%%%%%%%%%%%%%%%%%%%%%%%%%%%%%%%%%%
\thispagestyle{empty}
\end{titlepage}
\pagenumbering{arabic}

\newpage

%%%%%%%%%%%%%%%%%%%%%%%%%%%%%%%%%%%%%%%%%%%%%%%%%%%%%%%%%%%%%%%%%%%%%%%%%%%%%%%%%%%%%%%%%%%%%%%%%%%%%%%%%%%%%%%%%%%%%%%%%%%%%%%%%%%%%%%%%%
% INTRODUCTION
%%%%%%%%%%%%%%%%%%%%%%%%%%%%%%%%%%%%%%%%%%%%%%%%%%%%%%%%%%%%%%%%%%%%%%%%%%%%%%%%%%%%%%%%%%%%%%%%%%%%%%%%%%%%%%%%%%%%%%%%%%%%%%%%%%%%%%%%%%
\section{Introduction}
Randomness is a powerful computational resource that has found beautiful applications in algorithm design, cryptography, and combinatorics (see \cite{vadhan2012pseudorandomness} for an excellent survey). Unfortunately, such applications require access to uniform bits, but randomness harvested from natural phenomena (e.g., radioactive decay, atmospheric noise) rarely looks so pure. Such motivates the study of \emph{randomness extractors}, which are algorithms that convert these weak sources of randomness into distributions that are close to uniform:

\begin{definition}[Randomness extractor] % is this too similar to [cggl19]?
Let \(\mathcal{X}\) be a family of distributions over \(\zo^n\). A function \(\Ext:\zo^n\to\zo^m\) is an \emph{extractor} for \(\mathcal{X}\) with error \(\epsilon\) if for every \(\X\in\mathcal{X}\),
\[
|\Ext(\X)-\U_m|\leq\epsilon,
\]
where \(\U_m\) is the uniform distribution over \(\zo^m\), and \(|\cdot|\) denotes statistical distance.
\end{definition}

% does double cite look bad?
Beyond purifying natural sources of randomness, extractors have found deep connections to complexity theory, cryptography, coding theory, and combinatorics (see, e.g., \cite{shaltiel2011introduction,vadhan2012pseudorandomness}). Constructing these objects has thus produced a fruitful line of research over the past 30 years, where various distribution families \(\mathcal{X}\) and errors \(\epsilon\) have been considered depending on the motivating application.

In order for extraction to be possible, each source \(\X\in\mathcal{X}\) must have \emph{some} randomness. In this field, it is standard to measure the randomness content of \(\X\) as its \emph{min-entropy}, defined as $H_\infty(\X):=\min_{x}\log(1/\Pr[\X=x])$. Unfortunately, it turns out that a min-entropy requirement alone is not enough to enable extraction. Indeed, an easy folklore argument shows that even if every source \(\X\in\mathcal{X}\) has min-entropy \(k\geq n-1\), there cannot exist an extractor \(\Ext\) for \(\mathcal{X}\) that achieves nontrivial error \(\epsilon<1/2\).

To circumvent this impossibility result, researchers have considered two main directions. In the first direction, one assumes that each source \(\X\in\mathcal{X}\) comes with a uniform seed \(\U_d\), which can be used to extract uniform bits from the rest of the source, which has some min-entropy guarantee. Extractors in this setting are called \emph{seeded extractors}, and near-optimal constructions of these objects are now known \cite{lu2003extractors,guruswami2009unbalanced,dvir2013extensions}. In this paper, we focus on the second direction, where one assumes each source \(\X\in\mathcal{X}\) has some additional structure beyond its min-entropy guarantee.%In particular, we study the problem of extracting randomness from \emph{small-space sources}, which we discuss next.

% endow

% say small space sources capture more general things like bit-fixing monte carlo
\paragraph{Samplable sources}
One natural way to equip each distribution \(\X\in\mathcal{X}\) with some additional structure is to assume that it can be \emph{sampled efficiently}, i.e., generated by an algorithm that has limited computational resources. Such sources were introduced by Trevisan and Vadhan \cite{trevisan2000extracting}, under the suggestion that they are a good model for distributions that would actually arise in nature. In \cite{trevisan2000extracting}, and the follow-up works of Viola \cite{viola2014extractors} and Li \cite{li2016improved}, the authors consider \emph{circuit sources}: distributions that can be sampled by small circuits. Such sources can be thought of as distributions sampled by algorithms with limited \emph{time}.

% include reference to small space sampling paper?
In this paper, we consider distributions that can be sampled by algorithms with limited \emph{memory}. Known as \emph{small-space sources}, this family of distributions was introduced by Kamp, Rao, Vadhan, and Zuckerman \cite{kamp2006small}, and further studied in recent work \cite{chattopadhyay2016extractors,adversarial-sources}. To define this class of sources formally, one uses \emph{branching programs} to model the evolution of state in the small-space algorithm. A branching program of width \(w\) and length \(n\) is a directed acyclic graph with \(n+1\) layers, where the first layer has one node, the remaining layers have \(w\) nodes each, and every edge starting in layer \(i\) terminates in layer \(i+1\). Small-space sources are then defined as follows.

\begin{definition}[Small-space source]\label{def:small-space-source}
A distribution \(\X\) over \(\zo^n\) is a \emph{space \(s\) source} if it is generated by a random walk starting on the first layer of a branching program of width \(2^s\) and length \(n\), where each edge is labeled with an output bit and some transition probability.
\end{definition}

Beyond their motivation in modeling distributions that one might actually find in nature, small-space sources are powerful enough to capture several other well-studied models. As noted in \cite{kamp2006small}, small-space sources can simulate: von Neumann's model of a coin with unknown bias \cite{von195113}; the finite Markov chain model of Blum \cite{blum1986independent}; the space-bounded models of Vazirani \cite{vazirani1987efficiency} and Koenig and Maurer \cite{koenig2004extracting,koenig2005generalized}; and the popular models of oblivious bit-fixing and symbol-fixing sources \cite{chor1985bit,kamp2006deterministic} and independent sources \cite{chor1988unbiased}. In fact, it is suggested in \cite{kamp2006small} that the only model of sources that appears unrelated to small-space sources is the class of \emph{affine sources} \cite{gabizon2008deterministic}.

\subsection{Summary of our results}
In this paper, we explicitly construct two significantly improved extractors for small-space sources. Along the way, we prove a new structural result for small-space sources, and provide new explicit constructions of several related pseudorandom objects. Our extractors follow easily from these new key ingredients, which may be of independent interest. We formally state these results, below.

\subsubsection{Small-space extractors for polylogarithmic entropy}

In our first main theorem, we construct near-optimal extractors for small-space sources in the polynomial error regime.

\begin{THM}\label{thm:MAIN:small-space-polylog-entropy}
	There exists a universal constant \(C>0\) such that for all \(n,k,s\in\N\) satisfying \(k\geq s\cdot\log^C(n)\), there exists an explicit extractor \(\Ext:\zo^n\to\zo^m\) for space \(s\) sources with min-entropy \(k\), which has output length \(m=(k/s)^{\Omega(1)}\) and error \(\epsilon=n^{-\Omega(1)}\).
\end{THM}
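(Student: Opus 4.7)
Following the idea indicated in the abstract, I would prove \cref{thm:MAIN:small-space-polylog-entropy} in two stages. First, establish a \emph{structural reduction}: every space-$s$ source with min-entropy $k \geq s\cdot\log^{C}(n)$ is $n^{-\Omega(1)}$-close to a convex combination of affine sources on $\F_2^n$, each with min-entropy at least $k' = k/\polylog(n)$. Second, instantiate an explicit affine extractor that works for min-entropy $k' \geq \log^{C'}(n)$ with polynomially small error (e.g., a Li-style construction) and compose. Since convex combinations do not hurt extractor guarantees, the composition immediately yields the theorem: the output length is $(k')^{\Omega(1)} = (k/s)^{\Omega(1)}$ and the error is $n^{-\Omega(1)}$.

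\paragraph{Executing the reduction.}
The canonical tool for attacking a small-space source is to find ``bottleneck'' layers in the branching program where the state distribution concentrates, since conditioning on the state there splits the source into independent sub-sources generated by smaller branching programs. My first move would be a \cite{kamp2006small}-style bottleneck argument: locate layers $0 = \ell_0 < \ell_1 < \cdots < \ell_t = n$ with $t = \polylog(n)$ so that, conditioned on the states $\mathbf{V}_0,\dots,\mathbf{V}_t$ visited there, the source factors as an independent product $\X^{(1)}\cdots\X^{(t)}$, with each block $\X^{(j)}$ carrying substantial entropy $\Omega(k/t)$. Since an independent product of affine sources is affine, and a convex combination of convex combinations of affine sources is a convex combination of affine sources, it suffices to show that each conditional block $\X^{(j)}$ is close to a convex combination of affine sources on $\F_2^{\ell_j - \ell_{j-1}}$.

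\paragraph{Main obstacle.}
The hard part is the per-block statement. A generic small-space branching program computes highly non-affine functions of its internal coins, so some genuine structural insight is required: merely being low-width is not enough to force affinity. The plan I would try is to further decompose each block by distinguishing ``free'' transitions (where several out-edges carry comparable probability and contribute a nearly uniform coin flip to the output) from ``forced'' transitions (where one out-edge dominates and the output bit is essentially fixed). After conditioning on the outcomes of the forced transitions and on any ``bad'' free transitions whose output bit depends non-affinely on prior coins, the hope is that the remaining coordinates are uniform on an affine subspace whose dimension absorbs nearly all of the block's entropy. Making this rigorous --- ensuring that each conditioning step loses only a $\polylog(n)$ factor in entropy and contributes only $n^{-\Omega(1)}$ error, uniformly over the $\polylog(n)$ blocks --- is where the bulk of the technical work lies, and is the main obstacle to carrying out the plan. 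Once this is in place, plugging in any known low-entropy affine extractor finishes the proof.
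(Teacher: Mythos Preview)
Your high-level plan---reduce small-space sources to (convex combinations of) affine sources, then plug in Li's affine extractor---is exactly what the paper does. The gap is entirely in how you execute the reduction.

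Your proposed per-block argument is the wrong target. After conditioning on $t$ layer states, you ask that each block (itself a small-space source with some entropy) be close to a convex combination of \emph{high-dimensional} affine sources. But that is essentially the statement you set out to prove, just on a shorter input; the ``free vs.\ forced transitions'' heuristic does not escape this circularity, since the output bits of a width-$2^s$ branching program need not be affine in any hidden coins even after heavy conditioning. (Separately, the claim that every block carries $\Omega(k/t)$ entropy is not something the bottleneck argument delivers: after fixing states you only control the \emph{total} remaining entropy, and a Markov argument yields a few good blocks, not all of them.)

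The paper's key insight is that you never need per-block affine structure beyond dimension~$1$. Any source with min-entropy $\geq 1$ is a convex combination of flat sources supported on two points, and a two-point distribution on $\F_2^\ell$ is literally a one-dimensional affine source. Hence if you can exhibit $\Gamma$ \emph{independent} blocks each with entropy $\geq 1$, their concatenation is already a convex combination of (simple bit-block, hence affine) sources of dimension $\Gamma$. The work is then purely combinatorial: split $\X$ into $2\Gamma$ equal slices by fixing the states at $2\Gamma-1$ layers, and run a win-win recursion---either $\Gamma$ of the $2\Gamma$ slices have entropy $\geq 1$ (done), or the entropy is concentrated in fewer than $\Gamma$ slices, so halve those slices and repeat. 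The recursion terminates in $O(\log(n/k))$ rounds; all state-fixings across all rounds are bundled into a single random variable so that \cref{lem:entropy-drop} is applied once, yielding $\Gamma = \Omega\bigl(k/(s\log(n/k))\bigr)$ with error $2^{-\Omega(k)}$.
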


Thus, our extractor requires min-entropy \(k\geq s\cdot\log^C(n)\), which is an exponential improvement over the previous best requirement \cite{chattopadhyay2016extractors} of \(k\geq s^{1.1}\cdot2^{\log^{0.51}(n)}\). In particular, in the natural setting of sources sampled by \(s=\polylog(n)\) space algorithms, our extractor is the first construction that works for polylogarithmic entropy. Non-constructively, it is known that small-space extractors exist for min-entropy \(k\geq O(s + \log n + \log(1/\epsilon))\), and thus our result is nearly optimal when the desired error is at most polynomially small.

The key ingredient we use to prove \cref{thm:MAIN:small-space-polylog-entropy} is a new structural result, which establishes a connection between small-space sources and \emph{affine sources}. An affine source \(\X\) over \(n\) bits with min-entropy \(k\) is a distribution that is uniform over some (unknown) affine subspace of \(\F_2^n\). A long line of work has considered the problem of constructing extractors for affine sources \cite{gabizon2008deterministic,devos2010simple,bourgain2007construction,Yehudayoff11,Li11,rao2009low,li2016improved,CGL21}, and in this work we show that such extractors can also extract from small-space sources. In particular, we prove the following.

\begin{THM}\label{thm:MAIN:structural-result}
Let \(\X\) be a space \(s\) source over \(\zo^n\) with min-entropy \(k\). Then \(\X\) is \(2^{-\Omega(k)}\)-close to a convex combination of affine sources with min-entropy \(\Omega(\frac{k}{s\log(n/k)})\).
\end{THM}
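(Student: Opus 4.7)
The plan is a two-stage decomposition: first, use the checkpointing technique of Kamp-Rao-Vadhan-Zuckerman~\cite{kamp2006small} to break the random walk into short independent blocks; second, convert each block source into a convex combination of affine sources, and use conditional independence across blocks to take products. Concretely, in the first stage I would partition the \(n\) layers of the branching program into \(t\) contiguous pieces of length \(\ell = n/t\) and condition on the \(t-1\) checkpoint nodes visited at the internal block boundaries. Each checkpoint lies in a layer of at most \(2^s\) vertices, so this costs at most \((t-1)s\) bits of min-entropy except on a set of checkpoint sequences of total probability \(2^{-\Omega(k)}\); conditional on a typical checkpoint sequence, \(\X\) factors as a product \(\X_1 \circ \cdots \circ \X_t\) of independent block sources, each itself the output of a random walk on a width-\(2^s\) branching program of length \(\ell\).

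In the second stage I would express each \(\X_i\) as a convex combination of affine sources over \(\zo^\ell\). The key idea is to further condition on the sequence of nodes visited inside block \(i\): after this conditioning each output bit becomes independent, and each bit is either deterministic (when the out-edge from its node is uniquely labeled) or biased (when two parallel edges carry the two labels). To extract affine structure from the biased coordinates I would group them into consecutive pairs and apply a von Neumann-style decomposition: a pair of independent biased bits with biases \(p_1,p_2 \in (0,1)\) is a convex combination of the uniform distribution on the affine line \(\{01,10\}\), the uniform distribution on the affine line \(\{00,11\}\), and four singletons, with total weight on the two affine components at least \(\Omega(\min_i \min(p_i, 1-p_i))\). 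Since products of affine subspaces are affine subspaces, combining across blocks yields a convex combination of affine sources over \(\zo^n\) whose min-entropies add. Taking \(t \approx k/s\) so that roughly half the entropy survives checkpointing and each block has length \(\ell \approx sn/k\), balancing the cost of within-block conditioning against the per-pair yield should give the advertised bound.

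The main obstacle will be making the per-block conversion quantitative. Conditioning on an entire within-block node sequence costs \(\ell s\) bits, which can exceed \(h_i := H_\infty(\X_i)\) itself, so one must condition only at carefully chosen layers --- most likely via further recursive checkpointing inside each block. One then needs to argue that on a typical within-block node sequence the biases encountered are bounded away from \(0\) and \(1\) on a set of layers carrying enough of the conditional entropy to make the pairing effective, via a Chernoff-type concentration along the walk with the atypical (highly skewed) events absorbed into the \(2^{-\Omega(k)}\) error budget. Carefully quantifying how much affine entropy survives this pairing is what I expect to produce the \(1/(s\log(n/k))\) loss in the final bound.
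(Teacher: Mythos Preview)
Your second stage is where the plan breaks down, and the fix is not along the lines you suggest. Conditioning on the full within-block node sequence costs \(\ell s\) bits of min-entropy per block, and with \(t\approx k/s\) blocks of length \(\ell\approx ns/k\) this is \(\Theta(ns)\) bits total---far more than \(k\). You recognize this and propose further recursive checkpointing to control the cost, but that recursion is being aimed at the wrong target: you are trying to reach a product of \emph{independent biased bits} so that a von~Neumann pairing can manufacture affine lines. This is both expensive and unnecessary. The paper's key observation (attributed to~\cite{adversarial-sources}) is that \emph{any} source with min-entropy at least \(1\) is already a convex combination of affine sources of dimension exactly \(1\): write it as a mixture of flat sources on two points \(\{x,y\}\), and each such flat source is uniform on the affine line \(x+\{0,x\oplus y\}\). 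No conditioning to independent bits, no pairing, no bias control is needed; a block contributes one unit of affine entropy as soon as it has one bit of min-entropy.

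Once you have that observation, the real question becomes: how many independent blocks with min-entropy \(\geq 1\) can you manufacture? Your fixed-length first stage does not answer this, because after checkpointing into \(t\) equal blocks the remaining \(\approx k/2\) bits of entropy could sit entirely in one block. The paper handles this by making the checkpointing \emph{adaptive}: split into \(2\Gamma\) equal pieces; if the top \(\Gamma\) pieces each have entropy \(\geq 1\), stop; otherwise the entropy is concentrated in fewer than \(\Gamma\) pieces, so halve those and repeat. The recursion terminates in \(O(\log(n/k))\) rounds (otherwise the surviving pieces would be too short to hold the entropy), and all the checkpoint fixings across all rounds are bundled into a single random variable over at most \(2^{O(s\Gamma\log(n/k))}\) values, so a single application of the min-entropy chain rule gives the \(2^{-\Omega(k)}\) error. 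Setting \(\Gamma=\Theta\bigl(k/(s\log(n/k))\bigr)\) balances the books. The \(\log(n/k)\) loss thus comes from the recursion depth in the adaptive splitting, not from any pairing yield.
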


By combining this structural result with the explicit affine extractor of Li \cite{li2016improved}, which works for \(\polylog(n)\) min-entropy and has polynomially small error, we immediately obtain \cref{thm:MAIN:small-space-polylog-entropy}. Furthermore, if we are only interested in outputting one bit with constant error, we can use the recent affine extractor of Chattopadhyay, Goodman, and Liao \cite{CGL21} to extract from small-space sources with min-entropy \(k\geq s\cdot\log^{2+o(1)}(n)\).

% Beyond its application in proving \cref{thm:MAIN:small-space-polylog-entropy}, the structural result \cref{thm:MAIN:structural-result} strengthens the connection between affine sources and samplable sources. Viola showed that circuit sources are close to a convex combination of affine sources \cite{viola2014extractors}, and \cref{thm:MAIN:structural-result} shows that such a result also holds for small-space sources. Thus, affine extractors can extract uniform bits from distributions generated in limited time or limited space, underscoring their role as a central object in pseudorandomness.

\subsubsection{Small-space extractors with exponentially small error}

While polynomially small error suffices for many applications, it is sometimes important to achieve negligible error in applications such as cryptography \cite{dodis2004possibility}. However, since the best low-error affine extractors require entropy \(k\geq\Omega(n/\sqrt{\log\log n})\) \cite{bourgain2007construction,Yehudayoff11,Li11}, \cref{thm:MAIN:structural-result} does not yield any new result in the negligible error setting.

In our next main result, we develop some new machinery in order to obtain improved low-error extractors for small-space sources. Until recently, the best extractors for such sources \cite{kamp2006small} required entropy \(k\geq Cn^{1-\gamma}s^{\gamma}\), where \(\gamma>0\) is some tiny constant and \(C\) is a large one. In \cite{adversarial-sources}, the entropy requirement was improved to \(k\geq Cn^{2/3+\delta}s^{1/3-\delta}\). We reduce this entropy requirement further, and prove the following.

\begin{THM}\label{thm:main:space}
For any fixed \(\delta\in(0,1/2]\) there is a constant \(C>0\) such that for all \(n,k,s\in\N\) satisfying \(k\geq Cn^{1/2+\delta}s^{1/2-\delta}\), there exists an explicit extractor \(\Ext:\zo^n\to\zo^m\) for space \(s\) sources of min-entropy \(k\), with output length \(m=n^{\Omega(1)}\) and error \(\epsilon=2^{-n^{\Omega(1)}}\).
\end{THM}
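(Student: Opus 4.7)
The plan is to reduce the problem of extracting from small-space sources to that of extracting from \emph{adversarial sources}, following the high-level strategy of \cite{adversarial-sources}; the gains will come from a sharper reduction together with a substantially stronger adversarial-source extractor. An adversarial source here is a distribution $(\X_1,\dots,\X_t)$ in which a $(1/2+\delta)$-fraction of the blocks are independent and have good min-entropy, while the remaining blocks may be arbitrary functions of the good ones. Note that \cref{thm:MAIN:structural-result} will not be enough on its own: in the low-entropy regime of interest it produces affine sources whose min-entropy is too small for any known low-error affine extractor.

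Concretely, I would partition the $n$ output bits of the branching program into $t$ consecutive blocks of length $n/t$. Because the program has width $2^s$, conditioning on the $t-1$ intermediate boundary states costs at most $s(t-1)$ bits of entropy and makes the $t$ blocks mutually independent. A standard averaging argument then guarantees that, with high probability over this fixing, at least a $(1/2+\delta)$-fraction of the blocks individually carry min-entropy at least $k'$ for an appropriate target $k'$. Treating the remaining blocks as adversarially corrupted places us exactly in the adversarial-source setting, with $t$ blocks, bad-fraction $1/2-\delta$, and good-block min-entropy $k'$.

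The main technical task is then to construct a low-error extractor for such adversarial sources. I would build one by combining two ingredients: \emph{cylinder intersection extractors}, which extract from a small tuple of blocks while remaining secure under leakage from all the other blocks (this leakage-resilience is precisely what lets us absorb the adversarial corruption), and \emph{extremal designs}, a family of small subsets $S_1,\dots,S_\ell \subseteq [t]$ such that for every $G \subseteq [t]$ with $|G|\ge(1/2+\delta)t$, a constant fraction of the $S_i$ lie entirely inside $G$. Applying a cylinder intersection extractor on each $S_i$ yields $\ell$ output strings, most of which are close to uniform and mutually independent; these can be merged via a standard extractor for somewhere-random sources to produce the final output with error $2^{-n^{\Omega(1)}}$.

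The central obstacle is the \emph{explicit} construction of extremal designs with strong enough parameters: random designs easily satisfy the covering requirement, but an efficient derandomization appears to be new. I would pursue this through connections to evaluation codes and additive-combinatorial sumset constructions, exploiting their structural covering properties to argue that most $S_i$ fall inside any large $G$. Once the design is in hand, the parameter bookkeeping becomes a balancing act among $n/t \ge k'$ (each block has room for the needed entropy), $st \ll k$ (boundary conditioning does not destroy the entropy), and $tk' \gtrsim k$ (the good blocks collectively account for the entropy); optimizing these constraints yields precisely $k \ge Cn^{1/2+\delta}s^{1/2-\delta}$, with the exponent on $n$ dropping from the $2/3+\delta$ of \cite{adversarial-sources} down to $1/2+\delta$ as a direct consequence of the new adversarial-source extractor.
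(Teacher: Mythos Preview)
Your high-level plan---condition on boundary states to reduce to adversarial sources, then build the adversarial-source extractor from cylinder-intersection extractors plus extremal designs---matches the paper's. But two of your key design choices are off, and the first one kills the parameters you claim.

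The fatal issue is your requirement that a \((1/2+\delta)\)-fraction of the \(t\) blocks be good. A direct count shows this forces \(k\gtrsim n\): if fewer than \(K\) blocks have entropy \(\ge k'\), the total entropy after conditioning is at most \((K-1)(n/t)+(t-K+1)k'\), so guaranteeing \(K=(1/2+\delta)t\) good blocks already needs \(k-st\gtrsim(1/2+\delta)n\). Your three balancing constraints therefore do \emph{not} optimize to \(k\ge Cn^{1/2+\delta}s^{1/2-\delta}\). The entire point of the paper's improvement is that its adversarial-source extractor (\cref{thm:main:adversarial}) works when only \(K\ge N^{\delta}\) of the \(N\) blocks are good, not a constant fraction; dropping from \(K\ge N^{1/2}\) (the previous best) to \(K\ge N^{\delta}\) is precisely what pushes the exponent on \(n\) from \(2/3+\delta\) down to \(1/2+\delta\). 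Accordingly, the design property you need is \emph{small independence number} (\(\alpha(G)<N^{\delta}\), so that \emph{one} hyperedge lands inside the good set), not that a constant fraction of the \(S_i\) lie in every majority \(G\).

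Two secondary corrections. First, after conditioning on the boundary states the \(t\) blocks are mutually independent; the bad blocks are not ``arbitrary functions of the good ones'' but independent sources with possibly zero entropy---this is the paper's adversarial model (\cref{def:intro:adversarial-sources}) and is easier than what you describe. Second, the combining step is simply XOR of all the LRE outputs, not a somewhere-random merger: the single activated hyperedge outputs uniform bits, and since every other hyperedge shares at most \(r-2\) vertices with it (by the \((N,r,r-1)\)-design property), leakage-resilience guarantees the XOR with all other outputs preserves uniformity. Your claim that the activated outputs are ``mutually independent'' is false in general (overlapping hyperedges share sources), so the merger route would require additional work that the XOR-plus-leakage argument avoids entirely.
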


% is it confusing that we're really improving? should we say for example, for constant space...
Observe that the line of improvements described above (from \cite{kamp2006small} to \cite{adversarial-sources} to \cref{thm:main:space}) is strict, since we always have \(s<n\) (or else the bounds are trivial). In particular, note that for, say \(s=n^\delta\) space, the entropy requirement has dropped from \(k\geq O(n^{1-\gamma})\) to \(k\geq O(n^{2/3+\delta})\) to \(k\geq O(n^{1/2+\delta})\). 

To prove \cref{thm:main:space}, we start with the standard approach \cite{kamp2006small} of reducing small-space sources to the class of \emph{adversarial sources} \cite{adversarial-sources}. Informally, an adversarial source \(\X\) consists of many independent sources, where only a few of them are guaranteed to be ``good'' (i.e., contain some min-entropy). Formally, an \((N,K,n,k)\)-adversarial source \(\X\) consists of \(N\) independent sources \(\X_1,\dots,\X_N\), each over \(n\) bits, with the guarantee that at least \(K\) of them have min-entropy at least \(k\). Such sources have applications in generating a (cryptographic) common random string in the presence of adversaries, and in harvesting randomness from unreliable sources.

To prove \cref{thm:main:space}, we explicitly construct significantly improved extractors for adversarial sources:

\begin{THM}\label{thm:main:adversarial}
There is a universal constant \(C>0\) such that for any fixed \(\delta>0\) and all sufficiently large \(N,K,n,k\in\N\) satisfying \(k\geq\log^Cn\) and \(K\geq N^\delta\), there exists an explicit extractor \(\Ext:(\zo^n)^N\to\zo^m\) for \((N,K,n,k)\)-adversarial sources, with output length \(m=k^{\Omega(1)}\) and error \(\epsilon=2^{-k^{\Omega(1)}}\).	
\end{THM}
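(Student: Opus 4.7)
The plan is to construct $\Ext$ via a modular ``designs plus combiner'' framework. Let $\mathcal{D}=\{S_1,\ldots,S_M\}$ be an explicit family of size-$t$ subsets of $[N]$, where $t=t(\delta)$ depends only on $\delta$, and let $\Ext_{CI}\colon(\zo^n)^t\to\zo^{m_1}$ be a \emph{cylinder intersection extractor}, i.e., a leakage-resilient multi-source extractor that extracts from $t$ independent sources of min-entropy $\log^C n$ with error $2^{-k^{\Omega(1)}}$, and whose output remains close to uniform even when conditioned on ``cylinder-intersection'' type leakage computed from the sources. On input $\X=(\X_1,\ldots,\X_N)$, set $\Y_i=\Ext_{CI}(\X_{S_i})$ for each $i\in[M]$, and output $f(\Y_1,\ldots,\Y_M)$ for a suitable resilient function $f$.

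For the analysis, let $G\subseteq[N]$ with $|G|\geq K\geq N^\delta$ index the good sources. The design $\mathcal{D}$ is required to satisfy a \emph{containment property}: for every such $G$, all but an $\eta$-fraction of the subsets $S_i$ satisfy $S_i\subseteq G$, where $\eta$ is well below the error tolerance of $f$. For each $i$ with $S_i\subseteq G$, the tuple $\X_{S_i}$ consists of $t$ independent sources of min-entropy $k$, so $\Y_i$ is $2^{-k^{\Omega(1)}}$-close to uniform. The cylinder intersection property of $\Ext_{CI}$ controls the correlations between different $\Y_i,\Y_j$ that share input coordinates, ensuring that the joint distribution of the ``good'' $\Y_i$'s remains close to a product of uniforms even after conditioning on the remaining $\Y_i$'s. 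A standard $(\eta,\epsilon)$-resilient combiner (e.g., Ajtai--Linial) then yields a final output that is $2^{-k^{\Omega(1)}}$-close to uniform, with $m=k^{\Omega(1)}$ output bits.

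The main obstacle, and the heart of the contribution, is the explicit construction of designs $\mathcal{D}$ with the required containment property. A random family works once $M=\poly(N)$, but derandomizing is delicate because containment must hold simultaneously over all $\binom{N}{K}$ choices of $G$. The plan is to interpret $\mathcal{D}$ as the supports of codewords of an explicit code over $[N]$, so that the containment property translates into a strong list-recovery-type condition. Verifying this condition in the relevant low-entropy regime $K=N^\delta$ appears to require additive-combinatorial tools (for instance, Pl\"unnecke--Ruzsa style sumset inequalities and sum-product estimates), which are used to rule out too much additive structure in the adversary's choice of $G$. Once the designs and the cylinder intersection extractor are in hand, the remaining step is to verify that the parameters compose correctly through the $\Ext_{CI}$ application and the resilient combiner, which is a routine bookkeeping argument.
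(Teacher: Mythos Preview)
Your proposal has a genuine gap: the ``containment property'' you ask of the design is impossible in the regime $K=N^\delta$. If for \emph{every} good set $G\subseteq[N]$ with $|G|\geq N^\delta$ all but an $\eta$-fraction of the $S_i$ satisfy $S_i\subseteq G$, then in particular most of the $S_i$ must lie inside any fixed $G$; but two such sets $G,G'$ of size $N^\delta$ can be disjoint, so no single family $\mathcal{D}$ can be mostly contained in both. Even a random family fails this: a random $t$-subset lands inside a fixed $G$ with probability roughly $N^{-(1-\delta)t}$, so the fraction of $S_i$ contained in $G$ is tiny, not close to $1$. Consequently the resilient-function combiner you propose cannot be driven, since it needs a large majority of the $\Y_i$ to be (close to) uniform, and you will only ever be able to guarantee that \emph{one} is.

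The paper's construction differs from yours in exactly the two places this forces. First, the design property used is much weaker: one only needs that for every $G$ of size $K$ there is \emph{at least one} hyperedge $S_i\subseteq G$. This is precisely the condition that the independence number of the $r$-uniform hypergraph is below $K$, and the paper derandomizes R\"odl--\v{S}inajov\'a to get explicit $(N,r,r-1)$-designs with independence number $\leq O(N^{2/r})<N^\delta$ for suitable constant $r$ (via BCH codes plus Sidorenko's bound on zero-sum-free sets, not list-recovery or sum-product). Second, because only a single $\Y_{i^\ast}$ is guaranteed to be uniform, the combiner is simply the bitwise \textsf{XOR} of all the $\Y_i$; what makes this work is the bounded pairwise intersections of the design together with the cylinder-intersection (leakage-resilience) guarantee of $\Ext_{CI}$, which says that $\Y_{i^\ast}$ remains close to uniform even conditioned on all the other $\Y_j$'s (each of which, after fixing the sources outside $S_{i^\ast}$, is a function of at most $r-1$ inputs of $\Ext_{CI}$). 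So your instinct to use cylinder-intersection extractors is right, but the combiner should be \textsf{XOR}, and the correct design goal is small independence number together with bounded pairwise intersections, not your containment property.
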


Previously, the best extractor for this setting \cite{adversarial-sources} required \(K\geq N^{0.5+o(1)}\) good sources, and our improvement to \(K\geq N^\delta\) is crucial in obtaining better extractors for small-space sources. An added bonus is that our extractor construction is arguably much simpler compared to \cite{adversarial-sources}. 

To prove \cref{thm:main:adversarial}, we develop a simple new framework for extracting from adversarial sources by combining (i) a general type of combinatorial design; and (ii) a specific kind of leakage-resilient extractor \cite{kms,focs2020merged}. While such leakage-resilient extractors were recently constructed explicitly in \cite{focs2020merged}, the only known construction of such designs is probabilistic \cite{sts}.

Thus, the key ingredient we use to prove \cref{thm:main:adversarial}, and subsequently \cref{thm:main:space}, is the first explicit construction of such designs. In more detail, an \((n,r,s)\)-design is an \(r\)-uniform hypergraph over \(n\) vertices with pairwise hyperedge intersections of size \(<s\). To instantiate our framework, we need explicit \((n,r,s)\)-designs with small independence number\footnote{Recall that an \emph{independent set} in a hypergraph is a subset of vertices that contain no hyperedge, and the \emph{independence number} of a hypergraph is the size of its largest independent set.} \(\alpha\). Previously, Chattopadhyay, Goodman, Goyal and Li \cite{adversarial-sources} constructed \((n,3,2)\)-designs with independence number \(\alpha\leq O(n^{0.923})\). To obtain our improved extractors in \cref{thm:main:adversarial,thm:main:space}, we need designs with much smaller independence number. Our final main theorem constructs exactly such designs.

\begin{THM}\label{thm:main:designs}
	For all constants \(r\geq s\in\N\) with \(r\) even, there exist explicit \((n,r,s)\)-designs \((G_n)_{n\in\N}\) with independence number
	\[
	\alpha(G_n)\leq O(n^{\frac{2(r-s)}{r}}).
	\]
\end{THM}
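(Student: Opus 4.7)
The plan is to construct the designs $G_n$ explicitly by associating hyperedges to elements of a structured algebraic family (the coding-theoretic input), and to control the independence number via an additive-combinatorial argument (using the evenness of $r$).

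Concretely, I would identify the vertex set with a carefully chosen algebraic object: for a prime power $q$ and integer $a$ with $q^a = \Theta(n)$, take $V_n \subseteq \mathbb{F}_q^a$. Hyperedges will be size-$r$ subsets indexed by an explicit algebraic family, e.g., low-degree polynomials in one or two variables, codewords of an appropriately chosen constant-weight code, or evaluations along an algebraic curve. The parity assumption ``$r$ even'' enters the construction via a symmetric pairing of the $r$ coordinates into two halves of size $r/2$, which is crucial for the $L^2$ argument below. Explicitness and polynomial-time enumeration of the hyperedges will follow directly from the algebraic description.

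The pairwise-intersection bound $|E \cap E'| < s$ is then a coding-theoretic statement: two distinct algebraic parameters give objects whose coordinate-wise agreement is bounded by the degree/dimension of the family, which by choice of parameters is $< s$. This is essentially a Bezout / Schwartz--Zippel / MDS-distance bound.

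The main step is the independence number bound $\alpha(G_n) \leq O(n^{2(r-s)/r})$. Given $S \subseteq V_n$ of the target size, I would count the number of algebraic parameters whose associated hyperedge is contained in $S$, and show this count is positive. A first-moment estimate yields only $\alpha = O(n^{(r-s)/r})$, which is too weak by a factor-of-two in the exponent. To sharpen this, I would apply Cauchy--Schwarz: using the evenness of $r$, decompose each hyperedge into its two size-$(r/2)$ halves and square the count, so the resulting expression is a sum over ordered pairs of $r/2$-subsets contained in $S$, with a diagonal ``collision'' term that one must bound. The collision term is controlled by an additive-combinatorial input (e.g., an incidence / character-sum bound, or a Sidon-type estimate), which then delivers the improved exponent $2(r-s)/r$.

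The hardest step will be the independence-number analysis, and in particular handling adversarial $S$ that are themselves algebraically structured (e.g., concentrated on a subfield, affine subspace, or level set of a low-degree polynomial). Such $S$ can defeat naive counting, so the Cauchy--Schwarz / collision step has to be robust to structured inputs. I expect this to require either a careful choice of algebraic family that spreads hyperedges across all such substructures, or a combination of several algebraic views of $V_n$ unioned together, so that every large $S$ is forced to contain a hyperedge in at least one view.
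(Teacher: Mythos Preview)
Your proposal names the right themes (coding theory, additive combinatorics, the role of $r$ being even) but does not identify the actual mechanism, and one of your stated steps is in error.

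First, a sign confusion: you say a first-moment estimate gives $\alpha = O(n^{(r-s)/r})$ and that this is ``too weak by a factor of two in the exponent.'' But $n^{(r-s)/r}$ is \emph{smaller} than the target $n^{2(r-s)/r}$; it would be a stronger bound, not a weaker one (and in fact it is essentially the probabilistic optimum of R\"odl--\v{S}inajov\'a). What is true is that first-moment heuristics suggest this threshold but prove nothing for adversarial $S$; that is the obstacle, not the exponent.

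Second, your Cauchy--Schwarz plan points the wrong way. To show a set $S$ is not independent you need a \emph{lower} bound on the number of hyperedges contained in $S$; Cauchy--Schwarz applied to the product of the two half-edge indicators produces an \emph{upper} bound. You have not said what quantity you actually square, what the ``collision term'' is concretely, or which incidence/character-sum estimate controls it uniformly over structured $S$. As written this is a hope rather than an argument, and I do not see how to make it yield exactly $n^{2(r-s)/r}$.

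The paper's route is different and much more direct. Take a \emph{linear} $[n,k,d]$ code $Q\subseteq\F_2^n$ with $d>2(r-s)$ --- specifically a BCH code, so $k\geq n-(r-s)\log n$ --- and let the hyperedges be the supports of the weight-$r$ codewords $Q\cap\Delta_r$. The design property is immediate from the distance. For the independence bound, an independent set of size $\alpha$ corresponds to a subcube $A\subseteq\F_2^n$ of dimension $\alpha$ meeting no weight-$r$ codeword; because $Q$ is linear, $A\cap Q$ is a \emph{subspace} of $\F_2^\alpha$ (after projection) of dimension at least $\alpha-(n-k)$ containing no vector of Hamming weight $r$. The additive-combinatorial input is then a single black box: Sidorenko's bound on $r$-term zero-sum-free sets in $\F_2^m$, which implies that any subspace of $\F_2^\alpha$ avoiding weight $r$ has dimension at most $\alpha-\tfrac{r}{2}\log\alpha+O(r\log r)$. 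Combining,
\[
\tfrac{r}{2}\log\alpha \;\leq\; (r-s)\log n + O(r\log r),
\]
i.e.\ $\alpha \leq O\bigl(n^{2(r-s)/r}\bigr)$. The factor $2$ in the exponent and the evenness of $r$ both come directly out of Sidorenko's theorem, not from any Cauchy--Schwarz step.

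So the key idea you are missing is: linearity of the code converts ``large independent set'' into ``large subspace with no weight-$r$ vector,'' and Sidorenko's zero-sum bound handles the latter in one stroke.
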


\cref{thm:main:designs} gives the first derandomization of a result by R\"{o}dl and \v{S}inajov\'{a} \cite{sts}, and our explicit designs are optimal up to a factor of \(2\) in the power. We show that it is easy to extend \cref{thm:main:designs} to also work for odd \(r\) (up to a small loss in parameters), and we also show that our construction remains explicit for most \emph{super-constant} \(r,s\): we refer the reader to \cref{sec:designs} for more detail.

Finally, we can combine our explicit designs with the leakage-resilient extractors from \cite{focs2020merged} to obtain our improved adversarial sources (\cref{thm:main:adversarial}), which immediately yields our improved extractors for small-space sources (\cref{thm:main:space}). It is known that the technique of reducing small-space sources to adversarial sources has a barrier at min-entropy \(\sqrt{n}\)  (see \cref{rem:root-n-barrier}). Thus, the result in \cref{thm:main:space} has almost the best parameters one can hope to achieve using this technique.
%%%%%%%%%%%%%%%%%%%%%%%%%%%%%%%%%%%%%%%%%%%%%%%%%%%%%%%%%%%%%%%%%%%%%%%%%%%%%%%%%%%%%%%%%%%%%%%%%%%%%%%%%%%%%%%%%%%%%%%%%%%%%%%%%%%%%%%%%%
% INTRODUCTION
%%%%%%%%%%%%%%%%%%%%%%%%%%%%%%%%%%%%%%%%%%%%%%%%%%%%%%%%%%%%%%%%%%%%%%%%%%%%%%%%%%%%%%%%%%%%%%%%%%%%%%%%%%%%%%%%%%%%%%%%%%%%%%%%%%%%%%%%%%

%%%%%%%%%%%%%%%%%%%%%%%%%%%%%%%%%%%%%%%%%%%%%%%%%%%%%%%%%%%%%%%%%%%%%%%%%%%%%%%%%%%%%%%%%%%%%%%%%%%%%%%%%%%%%%%%%%%%%%%%%%%%%%%%%%%%%%%%%%
% OVERVIEW
%%%%%%%%%%%%%%%%%%%%%%%%%%%%%%%%%%%%%%%%%%%%%%%%%%%%%%%%%%%%%%%%%%%%%%%%%%%%%%%%%%%%%%%%%%%%%%%%%%%%%%%%%%%%%%%%%%%%%%%%%%%%%%%%%%%%%%%%%%
\section{Overview of Techniques}\label{sec:overview}

We use this section to sketch the explicit constructions of our small-space extractors. We start with our low-error small space extractors (\cref{thm:main:space}) and the ingredients that go into it (\cref{thm:main:adversarial,thm:main:designs}). Then, we sketch the construction of our small-space extractor for polylogarithmic entropy (\cref{thm:MAIN:small-space-polylog-entropy}) and its key ingredient (\cref{thm:MAIN:structural-result}).

\subsection{Small-space extractors with exponentially small error}\label{sec:over_low_ext}

To construct our low-error small-space extractors, the first step is to use a standard reduction \cite{kamp2006deterministic} (which we slightly optimize) from small-space sources to adversarial sources. This reduction starts with the observation of \cite{kamp2006small} that if we chop up the small space source \(\X\) into \(t\) consecutive (equal-sized) chunks, and \emph{condition on any fixing} of the vertices reached at the end of each chunk \emph{in the random walk that generates} \(\X\), then these \(t\) chunks become \(t\) independent sources. Furthermore, if \(\X\) originally had \(k\) bits of entropy, then it follows from the entropy chain rule that \(\X\) will still have roughly \(k-st\) bits of entropy. A Markov argument then shows that at least a few of the \(t\) sources will have relatively high entropy. In other words, \(\X\) now looks like an adversarial source, and we may now focus on constructing (low-error) extractors for adversarial sources.

\paragraph{Improved low-error extractors for adversarial sources}
To construct our low-error extractors for adversarial sources, we develop a new framework that combines a certain type of \emph{leakage-resilient extractor} (LRE) with the \((n,r,s)\)-designs discussed earlier. An LRE for \(r\) sources offers the guarantee that its output looks uniform \emph{even conditioned on} the output of many \emph{leakage} functions, each called on up to \(r-2\) of the same inputs fed to the original LRE. Furthermore, recall that an \((n,r,s)\)-design is an \(r\)-uniform hypergraph over \(n\) vertices with pairwise hyperedge intersections of size \(<s\).

Now, given an \((N,K,n,k)\)-adversarial source \(\X\), we extract from it as follows, using an LRE and an \((N,r,r-1)\)-design \(G\) with independence number \(\alpha(G)<K\). First, we identify the vertices of our design with the \(N\) independent sources in \(\X\). Then, for each hyperedge in our design, we call a leakage-resilient extractor on the \(r\) sources it contains, and finish by taking the bitwise XOR over the outputs of the LRE calls. 

This construction successfully outputs uniform bits for the following reasons. Because \(\alpha(G)<K\), we are guaranteed that \emph{some} LRE call is given \emph{only} good sources. By the \emph{extractor} property of the LRE, this call will output uniform bits. Meanwhile, the \emph{bounded intersection} property of the \((N,r,r-1)\)-design, paired with the \emph{leakage-resilience} property of the LRE, guarantees that these uniform bits still look uniform \emph{even after taking their bitwise XOR with the outputs of all other LRE calls}. Using these ideas, we actually provide a slightly more general framework to combine \((N,r,s)\)-designs with LREs of various strength. Our framework leverages the ``\emph{activation vs. fragile correlation}'' paradigm introduced in \cite{adversarial-sources}, yet it is able to do so in a much more simple, general, and effective way, by combining two very general pseudorandom objects: LREs and designs.

To make our framework explicit, we will need explicit LREs and explicit designs with small independence number. Our explicit LREs will come from the work of Chattopadhyay et al. \cite{focs2020merged}, where they gave the first explicit LREs that work for entropy \(k=o(n)\), and in fact their LREs work for entropy \(k\geq\polylog(n)\). Thus all that remains is to provide an explicit construction of designs with small independence number.\footnote{Explicit \((N,3,2)\)-designs with independence number \(\alpha<O(N^{0.923})\) were constructed in \cite{adversarial-sources}. However, we need more general \((N,r,r-1)\)-designs to push the independence number low enough to obtain our desired adversarial extractors, and (to the best of our knowledge) no such explicit designs were known prior to our work.} We provide such a construction in this paper, and sketch it below.

\paragraph{Explicit designs with small independence number} In order to construct our \((n,r,s)\)-designs \(G=(V,E)\), we start with a linear code \(Q\subseteq\F_2^n\) of distance \(d>2(r-s)\), and then restrict it to the set \(Q_r\subseteq Q\) of elements in \(Q\) that have Hamming weight \(r\). Our design \(G=(V,E)\) is constructed by identifying \(V\) with \([n]\), and by creating a hyperedge for each \(x\in Q_r\) in the natural way. The distance of the code and the definition of \(Q_r\) immediately guarantees that \(G\) is an \((n,r,s)\)-design.

In order to upper bound the independence number \(\alpha(G)\) of our design, we observe that any independent set in \(G\) corresponds to a subcube \(S\subseteq\F_2^n\) that contains no vector in \(Q\) of weight \(r\); in other words, since \(Q\) is a \emph{linear} code, this means that the \emph{subspace} \(T^\ast:=S\cap Q\) has no vector of Hamming weight \(r\). If our linear code \(Q\) had very high dimension, then even if the subcube \(S\) was relatively small, we would have found a relatively large subspace \(T^\ast\) containing no vector of Hamming weight \(r\). But intuitively, it seems like as the dimension of a subspace grows large enough, at some point it must be guaranteed to have such a vector. It turns out this is true, and it follows immediately from Sidorenko's recent bounds \cite{sidorenko2018extremal,sidorenko2020generalized} on the size of sets in \(\F_2^n\) containing no \(r\) elements that sum to zero. Thus if \(Q\) has large enough dimension, \(S\) cannot be too large, and thus neither can \(\alpha(G)\). All that remains is to explicitly construct (the weight-\(r\) vectors of) a high-dimensional linear code \(Q\subseteq\F_2^n\) with distance \(d>2(r-s)\), which can easily be done using BCH codes \cite{bch-codes-bc,bch-codes-h}.

\subsection{Small-space extractors for polylogarithmic entropy}

Unfortunately, it is impossible to extract from small-space sources with entropy \(k<\sqrt{n}\) using a reduction of the previous type (i.e., to adversarial sources), since setting \(t\geq\sqrt{n}\) will leave \(k-st\leq k-1\cdot\sqrt{n}<0\) bits of entropy after the above fixing, while setting \(t<\sqrt{n}\) will produce a chunk of size \(n/t>\sqrt{n}>k\), which could hold all of the entropy and thus make extraction impossible. To circumvent this barrier, we provide a new reduction from small-space sources to \emph{affine sources}. This reduction bypasses the \(\sqrt{n}\) barrier by \emph{adaptively} choosing vertices to fix: this was not possible above, because such adaptive fixings can produce independent sources of unknown and varying lengths, which cannot be captured by adversarial sources. We describe our new reduction in more detail below.

\paragraph{A reduction from small-space sources to affine sources} Our new reduction from small-space sources to affine sources starts the same way as before: by fixing \(t\) vertices in the random walk generating the space \(s\) source \(\X\), to create \(t\) independent sources with roughly \(k-st\) bits of total entropy. The key idea now is to use a nice observation of \cite{adversarial-sources}, which says that \emph{any} source with entropy at least \(1\) is a convex combination of \emph{affine sources} with entropy \(1\). Given this observation, we can say that as long as \(t^\pr\) of the \(t\) independent sources have \emph{just one bit of entropy}, then \(\X\) currently looks like a convex combination of affine sources with min-entropy \(t^\pr\).

On the other hand, if \emph{no} \(t^\pr\) of the \(t\) independent sources have just one bit of entropy, then the \(k-st\) remaining bits of entropy must be \emph{very} highly concentrated on the \(t^\pr-1\) most entropic independent sources. In this case, we can simply recursively apply the reduction on these \(t^\pr-1\) independent sources. Because the entropy rate increases on each recursive call, we know the recursion must eventually stop, or else we will end up with a source with entropy rate exceeding \(1\), a contradiction. Thus, via a \emph{win-win argument}, we are able to show that \(\X\) is a convex combination of affine sources with entropy \(t^\pr\).

We show that even if \(\X\) starts with entropy just \(k\geq\polylog(n)\), our resulting affine source will have almost all of the entropy of the original source; namely, \(t^\pr\) will barely be smaller than \(k\). We are able to achieve such an efficient reduction for two reasons. First, our use of \emph{affine sources} allows an \emph{adaptive} and \emph{recursive} reduction that bypasses the \(k\geq\sqrt{n}\) entropy barrier arising from existing reductions to source types of fixed lengths (like \emph{total-entropy sources} \cite{kamp2006small} and \emph{adversarial sources} \cite{adversarial-sources}). Second, our reduction to a sequence of \emph{\(t^\pr\) independent sources with entropy \(1\)} (which we argue is an affine source with entropy \(t^\pr\) using the observation of \cite{adversarial-sources}) results in a \emph{negligible} amount of lost entropy from each recursive step, whereas similar recursive reductions  to \emph{a constant number of sources with relatively high entropy} \cite{chattopadhyay2016extractors} are forced to lose much more entropy in each such step. As a result, we are able to bypass the \(k\geq2^{\sqrt{\log n}}\) entropy barrier of \cite{chattopadhyay2016extractors}.
 
 Finally, we note that by carefully tracking the random variables that pop up in our recursion, we are able to describe all of the fixings that occur throughout the recursion \emph{by the fixing of a single random variable}. As a result, we only need to apply the chain rule for min-entropy (\cref{lem:entropy-drop}) \emph{once}, which keeps the error of our reduction very low: \(2^{-\Omega(k)}\), compared to an error of \(2^{-k^{\Omega(1)}}\) in the recursive reduction of \cite{chattopadhyay2016extractors}.

\paragraph{Organization} In \cref{sec:prelims} we provide several preliminaries.  In the remainder of our paper, we follow a \emph{bottom-up} strategy for presenting our main results. In \cref{sec:designs}, we provide an explicit construction of designs with small independence number, proving \cref{thm:main:designs}. In \cref{sec:adversarial-sources}, we show how to combine these designs with leakage-resilient extractors to create a new, simple framework for extraction from adversarial sources. By instantiating our framework with our explicit designs and the explicit leakage-resilient extractors of \cite{focs2020merged}, we obtain our improved extractors for adversarial sources, \cref{thm:main:adversarial}. In \cref{sec:small-space}, we provide the standard reduction from small-space sources to adversarial sources for completeness, and we apply our adversarial extractors (\cref{thm:main:adversarial}) to obtain our small-space extractors with exponentially small error, \cref{thm:main:space}. In \cref{sec:space:polylog-entropy}, we provide our \emph{new reduction} from small-space sources to \emph{affine sources} (\cref{thm:MAIN:structural-result}) and apply the affine extractor of Li \cite{li2016improved} to obtain our small-space extractors for polylogarithmic entropy, \cref{thm:MAIN:small-space-polylog-entropy}. We conclude with some remarks and present some open problems in \cref{sec:conclusions}.

\section{Preliminaries}\label{sec:prelims}

\paragraph{General notation} Given two strings \(x,y\in\zo^m\), we let \(x\oplus y\) denote their bitwise XOR. For a number \(n\in\N\), \([n]\) denotes the interval \([1,n]\subseteq\N\). We let \(\circ\) denote string concatenation, and for a collection \(\{x_i:i\in I\}\) indexed by some finite set \(I\), we let \((x_i)_{i\in I}\) denote the concatenation of all strings \(x_i,i\in I\). If \(I\) is already equipped with some total order, this is used to determine the concatenation order; otherwise, \(I\) is arbitrarily identified with \([|I|]\) to induce a total ordering. Given a domain \(\mathcal{D}\), and some string \(x\in\mathcal{D}^N\), we let \(x_i\in\mathcal{D}\) denote the value at the \(i^\text{th}\) coordinate of \(x\). Given a subset \(S\subseteq[N]\), we let \(x_S := (x_i)_{i\in S}\). Even if \(\mathcal{D}=\mathcal{R}^n\) for some other domain \(\mathcal{R}\) and number \(n\in\N\), the definition of \(x_S\in\mathcal{D}^{|S|}\) does not change.

\paragraph{Basic coding theory and extractor definitions} We let \(\F_2\) denote the finite field of size two, and we let \(\F_2^n\) denote a vector space over this field. The \emph{Hamming weight} of a vector \(x\in\F_2^n\) is defined as \(\Delta(x):=\#\{i\in[n] : x_i=1\}\), and the \emph{Hamming distance} between two vectors \(x,y\in\F_2^n\) is defined as \(\Delta(x,y):=\Delta(x-y)\), where the subtraction is over \(\F_2\). The \emph{standard basis vectors} in \(\F_2^n\) is the collection \(\mathcal{E}^\ast:=\{e_i\}_{i\in[n]}\), where \(e_i\in\F_2^n\) holds a 1 at coordinate \(i\) and \(0\) everywhere else, and a \emph{subcube} is a subspace spanned by some subset of \(\mathcal{E}^\ast\). An \((n,k,d)\)-code is a subset \(Q\subseteq\F_2^n\) of size \(2^k\) with the guarantee that any two distinct points \(x,y\in Q\) have Hamming distance \(\Delta(x,y)\geq d\). A linear \([n,k,d]\)-code is simply an \((n,k,d)\) code that is a subspace. Finally, we say that a source \(\X\) over \(\zo^n\) is an \((n,k)\) source if it has min-entropy at least \(k\), and we say that an extractor \(\Ext\) an \(N\)-source extractor for entropy \(k\) if it is an extractor for a family of sources \(\mathcal{X}\), where each \(\X\in\mathcal{X}\) consists of \(N\) independent \((n,k)\) sources.

\paragraph{Discrete probability} In general, for a random variable \(\X:\Omega\to V\), we are only concerned with the distribution over \(V\) induced by \(\X\). We will therefore typically not define the outcome space \(\Omega\), and can assume it has any form we like (so long as the distribution induced by \(\X\) does not change). Given random variables \(\X,\Y\) and any \(y\in\supp(\Y)\), we let \((\X\mid\Y=y)\) denote a random variable that takes value \(x\) with probability \(\Pr[\X=x\mid\Y=y]\). Given a random variable \(\X\) and a family of random variables \(\mathcal{Y}\), we say that \(\X\) is a \emph{convex combination} of random variables from \(\mathcal{Y}\) if there exists a random variable \(\mathbf{Z}\) such that for each \(z\in\supp(\mathbf{Z})\), it holds that \((\X\mid\mathbf{Z}=z)\in\mathcal{Y}\). We define the \emph{statistical distance} between two random variables \(\X,\Y\) over \(V\) as
\[
|\X-\Y|:=\max_{S\subseteq V}|\Pr[\X\in V]-\Pr[\Y\in V]|=\frac{1}{2}\sum_{v\in V}|\Pr[\X=v]-\Pr[\Y=v]|,
\]
and we say that \(\X,\Y\) are \emph{\(\epsilon\)-close} if \(|\X-\Y|\leq\epsilon\). Given these definitions, the following two standard facts are easy to show, and are extremely useful.

\begin{fact}\label{fact:stat-dist:add-constant}
For any random variable \(\X\sim\zo^m\) and any constant \(c\in\zo^m\), it holds that
\[
|\X-\U_m|=|(\X\oplus c)-\U_m|.
\]
\end{fact}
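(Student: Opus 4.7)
The plan is to unfold the definition of statistical distance and exploit two elementary facts: XOR with a fixed constant $c$ is a bijection on $\zo^m$, and the uniform distribution is invariant under this bijection. These together should collapse the identity to a trivial re-indexing of the sum.

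First I would write
\[
|(\X \oplus c) - \U_m| = \frac{1}{2}\sum_{v \in \zo^m}\bigl|\Pr[\X \oplus c = v] - \Pr[\U_m = v]\bigr| = \frac{1}{2}\sum_{v \in \zo^m}\bigl|\Pr[\X = v \oplus c] - 2^{-m}\bigr|,
\]
where the second equality uses that $\X \oplus c = v$ iff $\X = v \oplus c$, and that $\Pr[\U_m = v] = 2^{-m}$ for every $v$. Then I would perform the change of variables $u = v \oplus c$, which is a bijection of $\zo^m$ with itself (with inverse $v = u \oplus c$), so that the sum over all $v \in \zo^m$ becomes a sum over all $u \in \zo^m$. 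This yields
\[
|(\X \oplus c) - \U_m| = \frac{1}{2}\sum_{u \in \zo^m}\bigl|\Pr[\X = u] - 2^{-m}\bigr| = |\X - \U_m|,
\]
where the final equality is again just the definition of statistical distance together with $\Pr[\U_m = u] = 2^{-m}$.

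There is essentially no obstacle here: the argument is a one-line change of variables, and the only subtlety worth flagging is that the bijectivity of $v \mapsto v \oplus c$ (which follows from $c \oplus c = 0$, so the map is an involution) is what makes the relabeling legitimate. Equivalently, one could phrase the proof by noting that $\U_m \oplus c$ has the same distribution as $\U_m$, so $|(\X \oplus c) - \U_m| = |(\X \oplus c) - (\U_m \oplus c)|$, and then invoking the standard fact that statistical distance is preserved under any (deterministic) bijection applied simultaneously to both random variables.
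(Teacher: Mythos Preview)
Your proof is correct. The paper does not actually provide a proof of this fact; it simply states it as a ``standard fact'' that is ``easy to show,'' so your change-of-variables argument (or the equivalent observation that $\U_m\oplus c$ has the same distribution as $\U_m$) is exactly the kind of routine verification the authors had in mind.
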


\begin{fact}\label{fact:stat-dist:convex-combination}
For any random variables \(\X,\Y\), where \(\X\sim\zo^m\), it holds that
\[
|\X-\U_m|\leq\E_{y\sim\Y}[|(\X\mid\Y=y)-\U_m|].
\]
\end{fact}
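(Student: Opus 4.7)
The plan is to prove this by unfolding the definition of statistical distance as a sum over outcomes, applying the law of total probability to rewrite each marginal probability of $\X$ as an average over conditionals on $\Y$, and then pulling the expectation outside via the triangle inequality. This is a standard convexity argument for statistical distance, so no creative step is required; I only need to be careful that $\U_m$ does not depend on $\Y$, so its probability of hitting any $v\in\zo^m$ also equals its expectation over $y\sim\Y$.

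Concretely, I would begin with the identity
\[
|\X-\U_m|=\tfrac{1}{2}\sum_{v\in\zo^m}\bigl|\Pr[\X=v]-\Pr[\U_m=v]\bigr|,
\]
supplied in the preliminaries. For each fixed $v$, I would write $\Pr[\X=v]=\sum_{y\in\supp(\Y)}\Pr[\Y=y]\cdot\Pr[\X=v\mid\Y=y]=\E_{y\sim\Y}[\Pr[(\X\mid\Y=y)=v]]$, and similarly note the trivial identity $\Pr[\U_m=v]=\E_{y\sim\Y}[\Pr[\U_m=v]]$, where the inner quantity is constant in $y$. Subtracting these and using linearity of expectation gives
\[
\Pr[\X=v]-\Pr[\U_m=v]=\E_{y\sim\Y}\bigl[\Pr[(\X\mid\Y=y)=v]-\Pr[\U_m=v]\bigr].
\]

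Next I would apply the triangle inequality (equivalently, Jensen's inequality on $|\cdot|$) to pull the absolute value inside the expectation, obtaining
\[
\bigl|\Pr[\X=v]-\Pr[\U_m=v]\bigr|\le\E_{y\sim\Y}\bigl[\,\bigl|\Pr[(\X\mid\Y=y)=v]-\Pr[\U_m=v]\bigr|\,\bigr].
\]
Summing over $v\in\zo^m$, multiplying by $1/2$, and swapping the (finite) sum with the expectation would then yield
\[
|\X-\U_m|\le\E_{y\sim\Y}\!\left[\tfrac{1}{2}\sum_{v\in\zo^m}\bigl|\Pr[(\X\mid\Y=y)=v]-\Pr[\U_m=v]\bigr|\right]=\E_{y\sim\Y}\bigl[|(\X\mid\Y=y)-\U_m|\bigr],
\]
which is exactly the desired bound.

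There is no real obstacle here: the only place where one might trip is in justifying the swap of sum and expectation, but since $\X$ takes values in the finite set $\zo^m$ this is just a finite rearrangement and needs no convergence hypothesis. I would not include more commentary than this in the final writeup, as the result is a routine convexity fact about total variation distance and is used in the paper only as a black box.
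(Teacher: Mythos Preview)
Your proof is correct and is exactly the standard convexity argument one would give; the paper itself does not supply a proof of this fact, stating only that it is ``easy to show'' and leaving it to the reader. There is nothing to compare against, and your writeup is entirely appropriate.
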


Finally, we will need the following standard lemma about conditional min-entropy.
\begin{lemma}[\hspace{1sp}\cite{mw97}]\label{lem:entropy-drop}
Let \(\X,\Y\) be random variables such that \(\Y\) can take at most \(\ell\) values. Then for any \(\epsilon>0\), it holds that
\[
\Pr_{y\sim\Y}[H_\infty(\X\mid\Y=y)\geq H_\infty(\X)-\log\ell-\log(1/\epsilon)]\geq1-\epsilon.
\]
\end{lemma}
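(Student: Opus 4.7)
The plan is a direct Markov-style argument unpacking the definition of min-entropy. Write $k := H_\infty(\X)$ and call $y \in \supp(\Y)$ \emph{bad} if $H_\infty(\X \mid \Y = y) < k - \log\ell - \log(1/\epsilon)$. It then suffices to prove $\Pr_{y \sim \Y}[y \text{ is bad}] < \epsilon$, since the complementary event is precisely the statement of the lemma.

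The first step is to unfold what badness means: for each bad $y$, by definition of conditional min-entropy there exists a witness $x(y)$ (take the argmax) satisfying
\[
\Pr[\X = x(y) \mid \Y = y] \;>\; 2^{-(k - \log\ell - \log(1/\epsilon))} \;=\; \frac{\ell}{\epsilon \cdot 2^k}.
\]
The key move is then to bound the joint probability $\Pr[\X = x(y),\, \Y = y]$ from two sides. Expanding it as $\Pr[\Y = y] \cdot \Pr[\X = x(y) \mid \Y = y]$ and plugging in the lower bound just obtained shows it is strictly larger than $\Pr[\Y = y] \cdot \ell/(\epsilon \cdot 2^k)$. On the other hand, the joint probability is trivially at most $\Pr[\X = x(y)] \leq 2^{-k}$, where the last inequality uses the overall min-entropy assumption on $\X$. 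Rearranging these two bounds against each other yields the pointwise estimate $\Pr[\Y = y] < \epsilon/\ell$ for every bad $y$.

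Finally, I would close the argument by a trivial union bound: since $\Y$ takes at most $\ell$ distinct values, the set of bad $y$'s has size at most $\ell$, so summing the pointwise bound gives $\Pr_{y \sim \Y}[y \text{ is bad}] < \ell \cdot (\epsilon/\ell) = \epsilon$, as desired. I do not anticipate any real obstacle here: this is the textbook chain-rule-for-min-entropy argument of Maurer--Wolf~\cite{mw97}, and the only minor care needed is in tracking strict versus non-strict inequalities so that the final conclusion lines up with the "$\geq$" phrasing of the lemma.
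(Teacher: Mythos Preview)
Your argument is correct and is exactly the standard Markov-style proof of this chain-rule lemma. Note, however, that the paper does not actually prove \cref{lem:entropy-drop}: it is stated in the preliminaries as a standard result and attributed to~\cite{mw97} without proof, so there is no ``paper's own proof'' to compare against. Your write-up would serve perfectly well as the omitted proof.
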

%%%%%%%%%%%%%%%%%%%%%%%%%%%%%%%%%%%%%%%%%%%%%%%%%%%%%%%%%%%%%%%%%%%%%%%%%%%%%%%%%%%%%%%%%%%%%%%%%%%%%%%%%%%%%%%%%%%%%%%%%%%%%%%%%%%%%%%%%%
% PRELIMS
%%%%%%%%%%%%%%%%%%%%%%%%%%%%%%%%%%%%%%%%%%%%%%%%%%%%%%%%%%%%%%%%%%%%%%%%%%%%%%%%%%%%%%%%%%%%%%%%%%%%%%%%%%%%%%%%%%%%%%%%%%%%%%%%%%%%%%%%%% 

%%%%%%%%%%%%%%%%%%%%%%%%%%%%%%%%%%%%%%%%%%%%%%%%%%%%%%%%%%%%%%%%%%%%%%%%%%%%%%%%%%%%%%%%%%%%%%%%%%%%%%%%%%%%%%%%%%%%%%%%%%%%%%%%%%%%%%%%%%
% EXPLICIT DESIGNS
%%%%%%%%%%%%%%%%%%%%%%%%%%%%%%%%%%%%%%%%%%%%%%%%%%%%%%%%%%%%%%%%%%%%%%%%%%%%%%%%%%%%%%%%%%%%%%%%%%%%%%%%%%%%%%%%%%%%%%%%%%%%%%%%%%%%%%%%%%
\section{Explicit extremal designs via slicing codes and zero-sum sets}\label{sec:designs}

In this section, we will construct our explicit designs and thereby prove \cref{thm:main:designs}. Before we state the formal theorem and proof, we begin with some background and discussion on \((n,r,s)\)-designs.

\subsection{Background and discussion}

A \emph{combinatorial design} is a special type of \emph{well-balanced set system}, where each set has the same size, and no two sets intersect at too many points. More formally, we say that an \(r\)-uniform hypergraph \(G=(V,E)\) over \(n\) vertices is an \emph{\((n,r,s)\)-design}, or \emph{\((n,r,s)\)-partial Steiner system}, if \(|e_1\cap e_2|<s\) for all distinct \(e_1,e_2\in E\). Beyond the fact that they are pseudorandom objects themselves, it turns out that \((n,r,s)\)-designs enjoy several interesting applications in pseudorandomness.

% is it ok to say notable here?
%In order to construct a PRG, the authors identify the vertices of a design with the uniform bits in the seed of their PRG, and they call a computationally hard function on each hyperedge of the design to produce the output of their PRG.

A notable application of designs is in the seminal work of Nisan and Wigderson \cite{nisan1994hardness}, where they are used to construct \emph{pseudorandom generators} (PRGs). In this application, the authors require (and provide) explicit designs that are \emph{extremal} in the sense that they have a large number of hyperedges. More recently, explicit designs of a different extremal flavor have been used in the construction of extractors: in \cite{adversarial-sources}, Chattopadhyay, Goodman, Goyal, and Li show how to construct extractors for adversarial sources using explicit partial Steiner triple systems (\((n,3,2)\)-designs) with \emph{small independence number}.

Given these applications, it is natural to ask about the smallest possible independence number of more general \((n,r,s)\)-designs. R\"{o}dl and \v{S}inajov\'{a} answered this question in 1994, proving the following:

\begin{theorem}[\hspace{1sp}\cite{sts}]\label{thm:rodl-sinajova}%make theorem later
Given any \(n\geq r\geq s\in\N\) with \(r\geq2\), there exists an \((n,r,s)\)-design \(G\) with independence number
\[
\alpha(G)\leq C_{r,s}\cdot n^{\frac{r-s}{r-1}}(\log n)^{\frac{1}{r-1}},
\]
where \(C_{r,s}=C(r,s)\) depends only on \(r,s\).
\end{theorem}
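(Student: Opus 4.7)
The plan is to use the probabilistic method with an alteration step, which is the classical approach for extremal set system existence results of this flavor. Set $p := c \cdot n^{s-r}$ for a small constant $c = c(r,s) > 0$ to be chosen later, and let $H$ be a random $r$-uniform hypergraph on $[n]$ in which each of the $\binom{n}{r}$ potential hyperedges is included independently with probability $p$. The expected number of hyperedges is $p\binom{n}{r} = \Theta_{r,s}(n^s)$, while the expected number of \emph{bad pairs} (unordered pairs of distinct hyperedges sharing at least $s$ vertices) is $\sum_{s' \geq s} \binom{n}{r}\binom{r}{s'}\binom{n-r}{r-s'} p^2 / 2 = O_{r,s}(p^2 n^{2r-s}) = O_{r,s}(c^2 n^s)$, which for $c$ small enough is much less than the expected edge count. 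Deleting one edge from each bad pair therefore produces, with positive probability, an $(n,r,s)$-design $G$ with $\Theta_{r,s}(n^s)$ hyperedges.

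The heart of the argument is controlling the independence number of $G$. I would set $\alpha := C_0 \cdot n^{(r-s)/(r-1)} (\log n)^{1/(r-1)}$ for a large constant $C_0 = C_0(r,s)$, and show that with positive probability no $\alpha$-subset of $[n]$ is independent in $G$. Fix such a set $I$. The expected number of hyperedges of $H$ inside $I$ is $p \binom{\alpha}{r} = \Theta(\alpha^r / n^{r-s}) = \Theta(\alpha \log n)$, with the implicit constant tunable via $C_0$; a Chernoff bound shows that, except with probability $\exp(-\Omega(\alpha \log n)) = n^{-\Omega(C_0 \alpha)}$, the number of hyperedges of $H$ inside $I$ exceeds half its expectation. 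Next, each edge $e \in \binom{I}{r}$ present in $H$ is in some bad pair with probability at most $\binom{r}{s}\binom{n-r}{r-s} p = O(c)$ (union bounding over the potential bad-pair partners), so in expectation only an $O(c)$ fraction of the edges of $H[I]$ are removed in the alteration step; for $c$ small this means a positive fraction survive in $G$. Taking a union bound over the $\binom{n}{\alpha} \leq n^{\alpha}$ choices of $I$ goes through, provided $C_0$ is chosen large enough to force the Chernoff exponent to exceed, say, $2\alpha \log n$.

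The main obstacle I anticipate is the technical coupling between the edge-count event (a sum of independent indicators, easily handled by Chernoff) and the deletion-count event (a more complicated random variable, since whether $e \in H[I]$ is deleted depends on all of $H$, not just $H[I]$). The cleanest fix is to argue via a second-moment or Azuma-type concentration on the deletion count, both of which decay fast enough to survive the $n^{\alpha}$ union bound once $c$ is small and $C_0$ is large. An alternative route that bypasses the alteration bookkeeping entirely is to first construct a near-perfect $(n,r,s)$-packing via the R\"{o}dl nibble---yielding a design with $(1-o(1))\binom{n}{s}/\binom{r}{s}$ edges and almost-regular degree $\Theta(n^{s-1})$---and then invoke a Spencer-type independence-number bound for $r$-uniform hypergraphs of bounded maximum degree and low codegree, whose parameters substitute in to give exactly $\alpha(G) = O(n^{(r-s)/(r-1)} (\log n)^{1/(r-1)})$.
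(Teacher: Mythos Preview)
The paper does not actually prove this theorem; it is quoted from R\"{o}dl and \v{S}inajov\'{a}, and the only remark the paper makes about the proof is that the authors ``apply the Lov\'{a}sz Local Lemma to show that a random $r$-uniform hypergraph is such a design.'' So there is no detailed argument in the paper to compare against line by line, but your route is genuinely different from the one the paper attributes to the original: you use alteration together with a union bound over all $\alpha$-sets, rather than the Local Lemma.

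Your first approach is essentially sound and lands on the right exponent, and you correctly flag the one delicate step: once you delete an edge from every bad pair, the event ``$I$ still contains an edge of $G$'' is no longer a monotone function of the edge indicators, so Chernoff does not apply directly, and whatever tail bound you use on the deletion count must have exponent exceeding $\alpha\log n$ to beat the $\binom{n}{\alpha}$ union bound. This can be pushed through (for instance: first reveal $H$ outside $I$, which deterministically marks which $e\subseteq I$ are ``externally endangered''; a Bernstein bound controls the number of such $e$; then Chernoff on the remaining independent indicators inside $I$ handles the rest, and internal bad pairs have negligible expectation since $s\geq 2$). But this bookkeeping is exactly what the LLL route in the original avoids: with the asymmetric Local Lemma one declares both ``$e,e'$ are both present'' (for each potential bad pair) and ``$I$ contains no edge of $H$'' (for each $\alpha$-set $I$) as bad events and dispatches them simultaneously, so there is no alteration step and hence no coupling to untangle.

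Your alternative route has a real gap. A ``Spencer-type independence-number bound for $r$-uniform hypergraphs of bounded maximum degree and low codegree'' is a \emph{lower} bound on $\alpha(G)$, not an upper bound: the Ajtai--Koml\'{o}s--Pintz--Spencer--Szemer\'{e}di line of results says that sparse, uncrowded hypergraphs have \emph{large} independent sets. Bounded degree by itself cannot force $\alpha(G)$ to be small (a perfect matching has maximum degree $1$ and independence number $n-n/r$). To get an \emph{upper} bound on $\alpha$ from a nibble packing you would still need a probabilistic argument showing that the random packing produced by the nibble has, with positive probability, no independent set of size $\alpha$ --- and that analysis is not materially simpler than your first approach.
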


In fact, they also showed this result is tight up to the term \(C_{r,s}\) that depends only on \(r,s\).

In order to prove \cref{thm:rodl-sinajova}, R\"{o}dl and \v{S}inajov\'{a} apply the Lov\'{a}sz Local Lemma to show that a \emph{random} \(r\)-uniform hypergraph is such a design. Thus, while their result proves the existence of such designs, it does not provide an explicit way to construct them  - and, unfortunately, an explicit construction is needed if one hopes to apply this result to construct other explicit objects (like extractors). Furthermore, all subsequent work appears to focus on improving the term \(C_{r,s}\) \cite{eustis2013independence,eustis2013hypergraph} or extending their result to more general types of designs \cite{grable1995minimum,kostochka2014independent,tian2018bounding}, while still relying on probabilistic constructions.

In this section, we will provide explicit constructions of these extremal designs. Our designs give the first derandomization of \cref{thm:rodl-sinajova}, and differ from the optimal bound by just a square.

\subsection{Main theorem about explicit designs}
We are now ready to state our main theorem that describes our construction of explicit designs with small independence number.

\begin{theorem}[\cref{thm:main:designs}, formal version]\label{thm:formal:main-design}
There exists an Algorithm \(\mathcal{A}\) such that given any \(n\geq r\geq s\in\N\) as input with \(r\) an even number, \(\mathcal{A}\) runs in time \(\poly\left(\binom{n}{r}\right)\) and outputs an \((n,r,s)\)-design \(G\) with independence number
\begin{align}\label[ineq]{ineq:main-design-bound}
\alpha(G)\leq C_{r,s}\cdot n^{\frac{2(r-s)}{r}},
\end{align}
where \(C_{r,s}=C\cdot r^4\) for some universal constant \(C\geq1\).
\end{theorem}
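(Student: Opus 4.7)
The plan is to implement the blueprint outlined in Section~2.1 at a precise quantitative level. I would take $Q \subseteq \F_2^n$ to be a binary BCH code of designed distance $d = 2(r-s)+1$, which is linear and has dimension $\dim Q \geq n - \lceil (r-s) \rceil \cdot \log_2(n+1)$ when $n+1$ is a power of $2$ (handling general $n$ by trivial padding). Its weight-$r$ slice $Q_r := \{x \in Q : \Delta(x) = r\}$ has size at most $\binom{n}{r}$, and since BCH codes have a trivially computable parity-check matrix one can enumerate $Q_r$ in time $\poly(\binom{n}{r})$ by scanning weight-$r$ vectors and checking membership in $Q$. The hypergraph $G = ([n], E)$ with $E = \{\supp(x) : x \in Q_r\}$ is then our candidate design.

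Next I would verify that $G$ really is an $(n,r,s)$-design. For distinct $x, y \in Q_r$, linearity gives $x + y \in Q \setminus \{0\}$, so $\Delta(x+y) \geq d > 2(r-s)$. Since both $x$ and $y$ have weight exactly $r$, the identity $\Delta(x+y) = 2r - 2|\supp(x) \cap \supp(y)|$ forces $|\supp(x) \cap \supp(y)| < s$, as required.

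The main work, and the step I expect to be the true obstacle, is establishing the independence number bound \cref{ineq:main-design-bound}. Given an independent set $I \subseteq [n]$, let $S := \{v \in \F_2^n : \supp(v) \subseteq I\}$ be the corresponding subcube (of dimension $|I|$) and set $T^* := S \cap Q$. Independence of $I$ means no hyperedge of $G$ lies inside $I$, i.e., $T^*$ contains no vector of Hamming weight exactly $r$. A standard dimension count for the intersection of a subspace with the kernel of the $n-\dim Q$ BCH parity checks gives
\[
\dim T^* \;\geq\; |I| - (n - \dim Q) \;\geq\; |I| - (r-s)\log_2(n+1).
\]
On the other hand, the recent theorems of Sidorenko~\cite{sidorenko2018extremal,sidorenko2020generalized} on sets in $\F_2^m$ with no $r$ distinct elements summing to zero imply (using that $r$ is even, so the relevant weight-$r$ codewords of an ambient subspace translate into $r$-term zero-sum relations among the standard basis images in $T^*$) an upper bound of the form $\dim T^* \leq \bigl(1 - \tfrac{2}{r}\bigr) |I| + O_r(\log |I|)$, or equivalently $|T^*| \leq C'_r |I|^{(r-2)/r} \cdot 2^{(1-2/r)|I|}$; combining this with the lower bound above and solving for $|I|$ yields $|I| \leq C_{r,s} \cdot n^{2(r-s)/r}$ with $C_{r,s} = O(r^4)$. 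The delicate part here is casting the ``no weight-$r$ vector'' hypothesis on the subspace $T^*$ into the exact hypothesis of Sidorenko's theorem and tracking the polynomial factor so that the final constant is $O(r^4)$; I would do this by applying Sidorenko's bound to the image of a generator matrix of $T^*$ viewed as a multiset of columns in $\F_2^{\dim T^*}$, where a weight-$r$ codeword corresponds precisely to $r$ columns summing to zero.

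Finally, the running time is immediate: enumerating $Q_r$ and emitting each support as a hyperedge costs $\poly(\binom{n}{r})$. With these three pieces in place---construction via BCH slicing, the trivial distance-to-intersection computation, and Sidorenko's zero-sum bound for the independence count---the theorem follows. The odd-$r$ extension mentioned after the theorem statement should then come by applying the even-$r$ construction with parameter $r+1$ and deleting a vertex from each hyperedge, at a small loss in $C_{r,s}$.
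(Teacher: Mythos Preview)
Your overall architecture is exactly the paper's: slice a BCH code at weight $r$, verify the design property via the distance, and bound the independence number by intersecting a subcube with the code and invoking Sidorenko. The enumeration-by-scan running time and the odd-$r$ fix are also the paper's.

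However, the crucial quantitative step is misstated in a way that would make your derivation collapse. You write that Sidorenko yields
\[
\dim T^* \;\le\; \Bigl(1-\tfrac{2}{r}\Bigr)|I| \;+\; O_r(\log|I|),
\]
but this is not what the bound gives. The correct translation is via the \emph{parity-check} matrix of $T^*$ (not the generator matrix): its $|I|$ columns live in $\F_2^{|I|-\dim T^*}$ and a weight-$r$ codeword of $T^*$ is precisely $r$ of those columns summing to zero. Sidorenko's inequality $\beta_r(m)\le Cr^3\,2^{2m/r}$ then says $|I|\le Cr^3\,2^{2(|I|-\dim T^*)/r}$, i.e.
\[
\dim T^* \;\le\; |I| \;-\; \tfrac{r}{2}\log|I| \;+\; O(r\log r).
\]
Note the difference: the subtracted term is of order $r\log|I|$, \emph{not} $(2/r)|I|$. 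With your stated inequality, combining with the lower bound $\dim T^*\ge |I|-(r-s)\log(n+1)$ would actually force $|I|=O_{r,s}(\log n)$, far stronger than $n^{2(r-s)/r}$ and indeed false; this should have been a warning sign. With the correct form, the combination gives
\[
\tfrac{r}{2}\log|I| \;\le\; (r-s)\log(n+1) + O(r\log r),
\]
which is exactly $|I|\le C\,r^{O(1)}\,n^{2(r-s)/r}$, and tracking constants recovers the $C_{r,s}=O(r^4)$ dependence. Also, your claim that ``a weight-$r$ codeword corresponds to $r$ columns of the generator matrix summing to zero'' is false for the generator matrix; it holds for the parity-check matrix (the paper's \cref{lem:subspace-hamming}). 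Once you fix these two points, your proof is the paper's proof.
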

\begin{remark}\label{odd-designs}
It is easy to extend \cref{thm:formal:main-design} to construct \((n,r,s)\)-designs \((G_n)_{n\in\N}\) with odd \(r\), at the expense of a small loss in parameters: simply construct an \((n,r+1,s)\)-design \(G_n^\pr\) using \cref{thm:formal:main-design}, and remove an arbitrary vertex from each hyperedge to create \(G_n\). \(G_n\) will be an \((n,r,s)\)-design, and will have independence number \(\alpha(G_n)\leq C_{r+1,s}\cdot n^{\frac{2(r+1-s)}{r+1}}\).
\end{remark}

For all constants \(r\geq s\in\N\) with \(r\) even, \cref{thm:formal:main-design} constructs an explicit family of \((n,r,s)\)-designs \((G_n)_{n\in\N}\) with small independence number. Like the non-explicit designs of \cref{thm:rodl-sinajova} from \cite{sts}, our derandomization focuses on the case where \(r,s\) are constant. However, it turns out that even for most \emph{super-constant} \(r,s\), our algorithm is still efficient. In particular, before proving \cref{thm:formal:main-design}, we make (and quickly prove) the following remark.

\begin{remark}\label{rem:super-constant}
Let \(\mathcal{A}\) be the algorithm from \cref{thm:formal:main-design}, and let \(m=m(n,r,s)\) be the number of hyperedges in the design produced by \(\mathcal{A}\) on input \((n,r,s)\). Then for any functions \(r=r(n),s=s(n)\), Algorithm \(\mathcal{A}\) is guaranteed to run in time \(\poly(n,m)\) over the collection \(\mathcal{I}=\{(n,r(n),s(n))\}_{n\in\N}\) as long as at least one of the following holds:
\begin{itemize}
\item The functions \(r,s\) are constant: \(r(n)=O(1)\) and \(s(n)=O(1)\); or
\item There is a constant \(\epsilon>0\) such that \cref{ineq:main-design-bound} is bounded above by \(O(n^{1-\epsilon}),\forall (n,r,s)\in\mathcal{I}\).
\end{itemize}
\end{remark}

The first bullet in \cref{rem:super-constant} reiterates the fact that the algorithm in \cref{thm:formal:main-design} is efficient when \(r,s\) are constant. The second bullet gives a more general remark on the performance of Algorithm \(\mathcal{A}\) on super-constant \(r,s\): it says that as long as \cref{thm:formal:main-design} gave a ``non-trivial'' bound on the independence number in the first place, then the algorithm will run efficiently. This effectively covers all ``interesting'' regimes of \(r,s\): indeed, the main application of selecting non-constant \(r,s\) would be to achieve independence bounds that are stronger than those achieved by constant \(r,s\) (and any constant \(r,s\) that achieve \(\alpha(G)<n\) in \cref{thm:formal:main-design} in fact achieve the second bullet).

To prove that Algorithm \(\mathcal{A}\) is efficient given the condition in the second bullet, we use standard bounds on Tur\'{a}n numbers. The Tur\'{a}n number \(T(n,\beta,r)\) is defined as the fewest number of edges in an \(r\)-uniform hypergraph with no independent set of size \(\beta\), and it is known \cite{sidorenko1995we} that \(T(n,\beta,r)\geq{n\choose r}/{\beta\choose r}\). Thus, the second bullet implies the number of edges, \(m=m(n,r,s)\), in the design is at least
\[T(n,Cn^{1-\epsilon}+1,r)\geq T(n,n^{1-\epsilon/2},r)\geq{n\choose r}/{n^{1-\epsilon/2}\choose r}\geq{n\choose r}/{n\choose r}^{1-\epsilon/4}\geq{n\choose r}^{\epsilon/4},\]
where we use the observation that the Tur\'{a}n number is non-increasing in its second argument, the fact that we can assume \(n,r\) are sufficiently large (since otherwise the efficiency claim is trivial), and a simple application of Stirling's formula. Thus, Algorithm \(\mathcal{A}\) runs in time \({n\choose r}=\poly(n,m)\). In fact, since we gave a lower bound on \(m\) based on the independence number, it trivially holds that \emph{any} algorithm that achieves independence numbers as small as \(\mathcal{A}\) must output \(m\) edges, meaning that the runtime of \(\mathcal{A}\) is optimal up to constant powers. This completes our discussion on \cref{rem:super-constant}.

\subsection{Proof of \cref{thm:formal:main-design}}\label{sec5:proof}
We now turn to proving \cref{thm:formal:main-design}. We start with the simple observation that hypergraphs over \(n\) vertices can be identified with subsets of \(\F_2^n\). In particular, any subset \(T\subseteq\F_2^n\) induces a hypergraph \(G_T=(V,E)\) in the following way: identify \(V\) with \([n]\), and for each \(x\in T\) add a hyperedge \(e\subseteq[n]\) to \(E\) that contains exactly the coordinates that take the value \(1\) in \(x\). Using this correspondence, we can instead focus on constructing special subsets of \(\F_2^n\), and thereby leverage the tools of linear algebra and coding theory.

To obtain our designs, we will need to explicitly construct a subset \(T\subseteq\F_2^n\) such that (1) \(G_T\) is an \((n,r,s)\)-design; and (2) \(G_T\) has small independence number. We can make sure this happens via the following two simple facts, which describe how these hypergraph properties can be identified with properties of subsets in \(\F_2^n\).

\begin{fact}\label{fact:constant-weight-code}
For any subset \(T\subseteq\F_2^n\), the hypergraph \(G_T\) is an \((n,r,s)\)-design if and only if (i) every \(x\in T\) has \(\Delta(x)=r\); and (ii) any two distinct \(x,y\in T\) have \(\Delta(x,y)>2(r-s)\).
\end{fact}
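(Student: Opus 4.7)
The plan is to prove both directions by directly unpacking the correspondence $x \leftrightarrow e_x$ between vectors in $\F_2^n$ and their supports in $[n]$, and then translating the defining properties of an $(n,r,s)$-design into the claimed Hamming-weight and Hamming-distance conditions. By construction $|e_x|=\Delta(x)$, so condition (i) is equivalent to $G_T$ being $r$-uniform. The only substantive content is therefore the equivalence of condition (ii) with the bound on pairwise hyperedge intersections.

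First I would record the key identity: for any $x,y \in \F_2^n$ of Hamming weight exactly $r$, one has
\[
\Delta(x,y) \;=\; 2\bigl(r - |e_x\cap e_y|\bigr).
\]
This is a simple counting argument: partition the coordinates $[n]$ into the four sets determined by the values of $(x_i,y_i)$. If $|e_x\cap e_y|=t$, then exactly $t$ coordinates are $(1,1)$, and since $|e_x|=|e_y|=r$, exactly $r-t$ coordinates are $(1,0)$ and exactly $r-t$ are $(0,1)$. The Hamming distance counts precisely the $(1,0)$ and $(0,1)$ coordinates, giving $\Delta(x,y)=2(r-t)$.

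Next I would use this identity to close the loop in both directions. For the forward direction, suppose $G_T$ is an $(n,r,s)$-design. Then $r$-uniformity forces $\Delta(x)=|e_x|=r$ for every $x\in T$, giving (i); and for distinct $x,y\in T$, the bound $|e_x\cap e_y|<s$ combined with the identity gives $\Delta(x,y)=2(r-|e_x\cap e_y|)>2(r-s)$, giving (ii). For the reverse direction, (i) immediately yields that $G_T$ is $r$-uniform, and for distinct $x,y\in T$ the identity rearranges to $|e_x\cap e_y|=r-\Delta(x,y)/2<r-(r-s)=s$, verifying the design intersection condition.

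Since the statement is essentially bookkeeping, there is no genuine obstacle; the only place to be careful is the counting that yields the identity above, and in particular making sure to use the assumption $\Delta(x)=\Delta(y)=r$ so that the $(1,0)$ and $(0,1)$ classes both have size $r-t$. With that identity in hand, both directions are immediate substitutions.
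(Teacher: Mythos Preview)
Your proposal is correct and follows the same approach as the paper: both argue that condition (i) is equivalent to $r$-uniformity and that, given weight-$r$ vectors, condition (ii) is equivalent to the intersection bound via the identity $\Delta(x,y)=2(r-|e_x\cap e_y|)$. The paper's proof is simply a terser version that leaves the derivation of this identity implicit, whereas you spell out the four-way partition of coordinates explicitly.
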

\begin{proof}
The two conditions are sufficient because the first one guarantees that \(G_T\) will be \(r\)-uniform, and the second one guarantees that any two edges in \(G_T\) intersect at \(<s\) points. They are both necessary because if the first does not hold, \(G_T\) will not be \(r\)-uniform, and if the first holds but the second does not, then two edges will end up sharing \(\geq s\) points.
\end{proof}

\begin{fact}\label{fact:subcube}
For any subset \(T\subseteq \F_2^n\), the hypergraph \(G_T\) has independence number \(\alpha(G_T)<\ell\) if and only if every subcube \(A\subseteq\F_2^n\) of dimension at least \(\ell\) has at least one point in \(T\).
\end{fact}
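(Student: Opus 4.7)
The plan is to set up a natural bijection between subsets $S \subseteq [n]$ and subcubes of $\F_2^n$, under which independent sets in $G_T$ correspond exactly to subcubes disjoint from $T$. To each $S \subseteq [n]$, I associate the subcube $A_S := \text{span}\{e_i : i \in S\} = \{x \in \F_2^n : \supp(x) \subseteq S\}$, which has dimension exactly $|S|$. This is clearly a bijection between $2^{[n]}$ and the collection of subcubes. The single key observation is then: $S$ is independent in $G_T$ if and only if $A_S \cap T = \emptyset$. This holds because, by the definition of $G_T$, the hyperedges are precisely the sets $\supp(x)$ for $x \in T$, so ``$S$ contains some hyperedge of $G_T$'' is the same as ``there exists $x \in T$ with $\supp(x) \subseteq S$,'' which is the same as ``$A_S \cap T \neq \emptyset$.''

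From this observation, both directions follow immediately. For the ``only if'' direction, I will argue the contrapositive: if some subcube $A$ of dimension at least $\ell$ satisfies $A \cap T = \emptyset$, then $A = A_S$ for some $S$ with $|S| \geq \ell$, and the key observation makes $S$ an independent set of size at least $\ell$, yielding $\alpha(G_T) \geq \ell$. For the ``if'' direction, if $\alpha(G_T) \geq \ell$, I take an independent set $S$ of size at least $\ell$; then $A_S$ is a subcube of dimension at least $\ell$ that, by the key observation, is disjoint from $T$, contradicting the hypothesis.

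There is no real obstacle here; the proof is essentially unwinding the definitions through the $S \leftrightarrow A_S$ correspondence. The only minor bookkeeping is matching the two ``at least $\ell$'' quantifiers on the two sides, which is handled by the trivial facts that subsets of independent sets remain independent and that every subcube contains subcubes of all smaller dimensions. One mild edge case worth noting is $0 \in T$: then $\emptyset$ is a hyperedge, so no subset of $[n]$ is independent and both sides of the equivalence hold trivially for any $\ell \geq 1$. In our application we have $T \subseteq \F_2^n$ consisting of weight-$r$ vectors with $r \geq 1$, so this case does not arise.
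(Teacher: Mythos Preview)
Your proof is correct and follows essentially the same route as the paper: both establish the correspondence $S\leftrightarrow A_S=\mathrm{span}\{e_i:i\in S\}$ and use the observation that $S$ is independent in $G_T$ iff $A_S\cap T=\emptyset$, then read off both directions by contrapositive. Your write-up is simply more detailed (and handles the $0\in T$ edge case explicitly), but there is no substantive difference in approach.
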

\begin{proof}
If \(\alpha(G_T)\geq\ell\), there is an independent set \(S\subseteq V=[n]\) of size at least \(\ell\), and thus the subcube \(A:=span(\{e_i\}_{i\in S})\) of dimension \(\ell\) has no points in \(T\). If there is a subcube \(A\subseteq\F_2^n\) of dimension \(\ell\) with no points in \(T\), the set \(S\subseteq[n]\) indexing the standard basis vectors that span \(A\) must have size \(\ell\) and constitute an independent set in \(G_T\).
\end{proof}

By \cref{fact:constant-weight-code} and \cref{fact:subcube}, we see that the task of constructing an \((n,r,s)\)-design \(G\) with small independence number is \emph{equivalent} to the task of constructing a subset \(T\subseteq\F_2^n\) with the following \emph{three properties}:
\begin{enumerate}
\item \(T\) lies in the Hamming slice \(\Delta_r:=\{x\in\F_2^n : \Delta(x)=r\}\),
\item Points in \(T\) have pairwise Hamming distance \(>2(r-s)\), and
\item Any subcube of \emph{relatively small} dimension intersects \(T\).
\end{enumerate}

In order to construct a set \(T\subseteq\F_2^n\) with these three properties, we use connections to \emph{coding theory} and \emph{zero-sum problems}. In particular, recall that an \((n,k,d)\)-code is a subset \(Q\subseteq\F_2^n\) of size \(2^k\) with the guarantee that any two distinct points \(x,y\in Q\) have Hamming distance \(\Delta(x,y)\geq d\). Thus, if we take any \((n,k,d)\)-code \(Q\subseteq\F_2^n\) with \(d>2(r-s)\) and intersect it with the Hamming slice \(\Delta_r\), we obtain a set \(T=Q\cap\Delta_r\) that enjoys properties (1) and (2). In order to endow it with property (3), we will need to start with some code \(Q\) such that for any relatively large subcube \(S\), the set \(S\cap T = S\cap (Q\cap\Delta_r)=(S\cap Q)\cap\Delta_r\) is non-empty.

The trick here is to start with a \emph{linear} code \(Q\). A \emph{linear} \([n,k,d]\)-code \(Q\subseteq\F_2^n\) is simply an \((n,k,d)\) code that is also a subspace. The condition \((S\cap Q)\cap\Delta_r\neq\emptyset\) required for property (3) now becomes more concrete: since \(Q\) is a subspace, \(S\cap Q\) is also a subspace, and thus we can make sure it contains some vector of Hamming weight \(r\) as long as we can show that \emph{every} large subspace contains such a vector. In particular, defining \(\Lambda_r(n)\) to be the dimension of the largest subspace \(R\subseteq\F_2^n\) containing no vector of Hamming weight \(r\), we prove the following lemma.

\begin{lemma}\label{lem:conditional-main-design}
If \(Q\subseteq\F_2^n\) is a linear \([n,k,d]\)-code with \(d>2(r-s)\), then the hypergraph \(G_{Q\cap\Delta_r}\) is an \((n,r,s)\)-design with independence number \(\alpha=\alpha(G_{Q\cap\Delta_r})\) that obeys the following inequality:
\[
\alpha-\Lambda_r(\alpha)\leq n-k
\]
\end{lemma}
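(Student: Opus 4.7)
My plan is to separate the two claims: first that $G_{Q \cap \Delta_r}$ is an $(n,r,s)$-design, and second the independence-number inequality.

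For the design property, I would simply invoke \cref{fact:constant-weight-code}. The set $T := Q \cap \Delta_r$ lies in $\Delta_r$ by construction, so condition (i) holds. For condition (ii), any two distinct $x,y \in T$ lie in $Q$, so $x - y \in Q$ (over $\F_2$) is a nonzero codeword; thus $\Delta(x,y) = \Delta(x-y) \geq d > 2(r-s)$. Hence $G_T$ is an $(n,r,s)$-design.

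For the independence-number bound, the key is to translate the combinatorial quantity $\alpha$ back into a linear-algebraic statement via \cref{fact:subcube}. Let $\alpha := \alpha(G_T)$. By that fact (applied with $\ell = \alpha$, so that $\alpha(G_T) < \alpha + 1$ fails for $\ell = \alpha$), there is a subcube $S \subseteq \F_2^n$ of dimension $\alpha$ with $S \cap T = \emptyset$, i.e.\ $S \cap Q$ contains no vector of Hamming weight $r$. I would then apply the standard dimension formula for subspaces of $\F_2^n$:
\[
\dim(S \cap Q) \;\geq\; \dim(S) + \dim(Q) - n \;=\; \alpha + k - n.
\]

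The final step, which I expect to be the one the reader needs to think about most, is bridging the gap between $\Lambda_r(n)$ (a statement about subspaces of the \emph{full} space $\F_2^n$) and the subspace $S \cap Q$ of dimension $\geq \alpha + k - n$. The point is that $S$ is a subcube, i.e.\ the span of some subset $B \subseteq \mathcal{E}^\ast$ of standard basis vectors, with $|B| = \alpha$. Every vector in $S$ is zero outside the $\alpha$ coordinates indexed by $B$, so restricting to those coordinates gives a Hamming-weight-preserving linear isomorphism between $S$ and $\F_2^\alpha$. Under this isomorphism, $S \cap Q$ maps to a subspace $R \subseteq \F_2^\alpha$ with no vector of Hamming weight $r$, so by definition of $\Lambda_r$,
\[
\dim(S \cap Q) \;=\; \dim(R) \;\leq\; \Lambda_r(\alpha).
\]
Chaining with the previous bound gives $\alpha + k - n \leq \Lambda_r(\alpha)$, i.e.\ $\alpha - \Lambda_r(\alpha) \leq n - k$, as desired. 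No hard calculations arise; the only subtlety is the observation about restricting to a subcube's coordinates preserving Hamming weight, which unlocks the use of $\Lambda_r(\alpha)$ rather than the weaker $\Lambda_r(n)$.
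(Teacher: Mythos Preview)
Your proof is correct and follows essentially the same route as the paper's: invoke \cref{fact:constant-weight-code} for the design property, use \cref{fact:subcube} to get a subcube of dimension $\alpha$ missing $Q\cap\Delta_r$, apply the subspace dimension bound, and then project onto the $\alpha$ active coordinates to land in $\F_2^\alpha$ so that $\Lambda_r(\alpha)$ applies. The only cosmetic difference is that the paper phrases the last step as a projection $\pi:\F_2^n\to\F_2^\alpha$ rather than a Hamming-weight-preserving isomorphism $S\to\F_2^\alpha$, but these are the same map restricted to $S$.
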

\begin{proof}
It follows immediately from \cref{fact:constant-weight-code} that \(G_{Q\cap\Delta_r}\) is an \((n,r,s)\)-design. By \cref{fact:subcube}, there is a subcube \(A=span(e_{i_1},\dots,e_{i_\alpha})\subseteq\F_2^n\) of dimension \(\alpha\) that does not intersect \(Q\cap\Delta_r\). Thus, if we define \(A^\pr:=A\cap Q\), then \(A^\pr\) contains no vector of Hamming weight \(r\), and furthermore it has dimension \(dim(A^\pr)=dim(A\cap Q)\geq dim(A)+dim(Q)-n=\alpha+k-n\). Notice now that if we define the projection \(\pi:\F_2^n\to\F_2^\alpha\) as the map \((x_1,\dots,x_n)\mapsto(x_{i_1},\dots,x_{i_\alpha})\), then the subset \(\pi(A^\pr)\) is still a subspace (albeit now of \(\F_2^\alpha\)) of dimension \(dim(\pi(A^\pr))\geq\alpha+k-n\) containing no vector of Hamming weight \(r\). Thus, by definition of \(\Lambda_r\), it must hold that \(\alpha+k-n\leq dim(\pi(A^\pr))\leq\Lambda_r(\alpha)\).
\end{proof}

In order to construct an \((n,r,s)\)-design from \cref{lem:conditional-main-design} with the smallest possible independence number \(\alpha\), we will want an explicit \([n,k,d>2(r-s)]\)-linear code with the largest possible dimension \(k\), along with a strong upper bound on \(\Lambda_r(n)\). We start with the latter.

Getting a good upper bound on \(\Lambda_r(n)\) is closely related to the theory of \emph{zero-sum problems}. In this field, one parameter of great interest is the (generalized) \emph{Erd\H{o}s-Ginzburg-Ziv constant}(s) of a finite abelian group. Given \(n\geq r\in\N\) where \(r\) is even, this parameter is defined for \(\F_2^n\) as the smallest integer \(s_r(n)\) such that any \emph{sequence} of \(s_r(n)\) values in \(\F_2^n\) contains a subsequence of length \(r\) that sums to zero. For our application, it will be more convenient to use an almost identical parameter \(\beta_r(n)\), defined as the size of the largest \emph{subset} of \(\F_2^n\) containing no \(r\) elements that sum to zero. Using slightly different terminology, the relationship between \(\beta_r(n)\) and \(\Lambda_r(n)\) was shown in \cite{sidorenko2020generalized}. We include it here, in our language, for completeness.
% prob should be consistent with n,r notation (subscripts n stuff) be consistent about \lambda, ie put subscript r !!!

\begin{lemma}[\hspace{1sp}\cite{sidorenko2020generalized}]\label{lem:subspace-hamming}
For every \(n\geq r\in\N\) where \(r\) is even,
\[
\beta_r(n-\Lambda_r(n))\geq n.
\]
\end{lemma}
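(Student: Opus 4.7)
The plan is to construct the required subset of $\F_2^{n-\Lambda_r(n)}$ via a quotient-space projection. Let $R \subseteq \F_2^n$ be a subspace of dimension $\Lambda_r(n)$ containing no vector of Hamming weight exactly $r$ (such $R$ exists by the definition of $\Lambda_r$). Set $m := n - \Lambda_r(n)$, fix an isomorphism $\F_2^n / R \cong \F_2^m$, and let $\pi \colon \F_2^n \to \F_2^m$ be the induced quotient map. Define $v_i := \pi(e_i) \in \F_2^m$ for each $i \in [n]$, and let $S := \{v_1, \ldots, v_n\}$. The goal is to show that $S$ is a subset of $\F_2^m$ of size $n$ containing no $r$ elements summing to zero, which immediately gives $\beta_r(m) \geq n$.

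The central zero-sum-freeness claim is the easy half. Suppose for contradiction that distinct indices $i_1 < i_2 < \cdots < i_r$ satisfy $v_{i_1} + \cdots + v_{i_r} = 0$ in $\F_2^m$. Then $e_{i_1} + \cdots + e_{i_r}$ lies in $\ker \pi = R$, and since the indices are pairwise distinct, this vector has Hamming weight exactly $r$, contradicting the defining property of $R$. So among distinct positions, no $r$ of the $v_i$'s sum to zero.

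It then remains to show that the $v_i$'s are pairwise distinct, so that $|S| = n$. A collision $v_i = v_j$ for $i \neq j$ is equivalent to $e_i + e_j \in R$, i.e.\ a weight-$2$ vector in $R$. For $r = 2$ this is immediately forbidden. For even $r \geq 4$, the plan is to exploit the parity of $r$: any $r/2$ \emph{disjoint} weight-$2$ vectors in $R$ sum to a weight-$r$ vector in $R$, which is forbidden. Combined with the observation that the relation $i \sim j \iff e_i + e_j \in R$ is transitive (two weight-$2$ vectors sharing a coordinate sum to a third weight-$2$ vector), this forces the collision classes to be cliques of size at most $r-1$, with at most $r/2-1$ nontrivial classes. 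Thus at most $O(r^2)$ indices participate in any collision, so the set of distinct $v_i$'s has size at least $n - O(r^2)$, and one can augment it with $O(r^2)$ fresh elements of $\F_2^m$ (which has $2^m \gg n$ elements available) while preserving the no-$r$-sum-zero condition.

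The main obstacle is this last step: carefully justifying the augmentation (or, equivalently, choosing $R$ among all maximal weight-$r$-free subspaces so that no weight-$2$ vectors appear at all). The hypothesis that $r$ is even is essential here, as it is precisely what lets the weight-$2$-summation trick rule out large matchings in the collision graph; this is presumably where the argument of \cite{sidorenko2020generalized} requires the most care. Once this is handled, the lemma follows by exhibiting $S$ as a set of size $n$ in $\F_2^m$ with no $r$-element zero-sum.
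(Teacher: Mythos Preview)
Your quotient-map construction is exactly the paper's argument in different clothing: the paper picks a basis $v_1,\dots,v_d$ of $R^\perp$, forms the matrix $M\in\F_2^{d\times n}$ with these as rows, and observes that the $n$ columns $Me_1,\dots,Me_n\in\F_2^d$ have no $r$ of them (by index) summing to zero. Since $\ker M=R$, the map $x\mapsto Mx$ is precisely your quotient projection $\pi$, and the columns are your $v_i=\pi(e_i)$. The paper then concludes $\beta_r(d)\geq n$ in one line, without discussing distinctness at all.

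The ``main obstacle'' you flag --- that the $v_i$ need not be pairwise distinct --- is real if $\beta_r$ is read literally as a subset quantity, but your proposed fixes cannot close it. Take $n=r=4$: the only weight-$4$ vector in $\F_2^4$ is the all-ones vector, so any hyperplane avoiding it (e.g.\ $\{x:x_1=0\}$) witnesses $\Lambda_4(4)=3$, giving $d=1$; but $|\F_2^1|=2$, so no subset of size $4$ exists, and neither augmentation nor a clever choice of $R$ can help. The same example shows there is \emph{no} $3$-dimensional subspace of $\F_2^4$ avoiding all weight-$2$ and weight-$4$ vectors, killing your alternative plan of choosing $R$ weight-$2$-free.

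The resolution is that, despite the word ``subset'' in the definition, $\beta_r$ is being used here as the sequence quantity $s_r-1$ (note the paper calls $\beta_r$ ``almost identical'' to $s_r$ and attributes the lemma to Sidorenko with ``slightly different terminology''). Under that reading the $n$ columns form a length-$n$ sequence in $\F_2^d$ with no $r$-term zero-sum subsequence, distinctness is irrelevant, and your argument --- stripped of the augmentation step --- is complete and identical to the paper's.
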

\begin{proof}
Let \(R\subseteq\F_2^n\) be a subspace of dimension \(k:=\Lambda_r(n)\) that contains no vector of Hamming weight \(r\), and define \(d:=n-k\). Let \(v_1,\dots,v_d\) be a basis for the orthogonal complement of \(R\), and define the matrix \(M\in\F_2^{d\times n}\) so that its \(i^\text{th}\) row is \(v_i\). Notice that \(R\) contains exactly the solutions to \(Mx=0\), and thus \(R\) has a vector of Hamming weight \(r\) if and only if there are \(r\) columns in \(M\) that sum to zero. By definition of \(R\), we know \(R\) has no such vector, and thus \(n\leq\beta_r(d)=\beta_r(n-\Lambda_r(n))\).
\end{proof}

% remark somewhere it is tight?
To get a good upper bound on \(\Lambda_r(n)\), we need a good upper bound on \(\beta_r(n)\). In 2018, Sidorenko provided a very strong bound of this type:% should we remark that it is tight?

\begin{theorem}[\hspace{1sp}\cite{sidorenko2018extremal}, Theorem 4.4]\label{thm:sidorenko-main}
There is a universal constant \(C>0\) such that for every \(n,r\in\N\) where \(r\) is even,
\[
\beta_{r}(n)\leq C\cdot r^3\cdot2^{2n/r}.
\]
\end{theorem}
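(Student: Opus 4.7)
My plan is to prove this via a second-moment / additive-energy argument applied to $s$-fold sums, where $s := r/2$. Let $A \subseteq \F_2^n$ be a set with no $r$ distinct elements summing to zero, and define the representation function $R(v) := |\{(a_1,\dots,a_s) \in A^s : a_1 + \cdots + a_s = v\}|$. Since $\sum_v R(v) = |A|^s$ and $R$ is supported on $\F_2^n$, Cauchy--Schwarz gives $\sum_v R(v)^2 \geq |A|^{2s}/2^n$. In $\F_2$-arithmetic, $\sum_v R(v)^2$ is precisely the number $E_s(A)$ of ordered $2s$-tuples in $A$ summing to zero (because $\sum a_i = \sum b_i$ iff $\sum a_i + \sum b_i = 0$). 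The entire proof thus reduces to establishing an upper bound of the form $E_s(A) \leq C \cdot r^{O(1)} \cdot |A|^s$; combining the two bounds yields $|A|^s \leq C \cdot r^{O(1)} \cdot 2^n$, and a Stirling computation using $((2s-1)!!)^{1/s} = O(r)$ together with careful tracking of constants gives the stated $C \cdot r^3 \cdot 2^{2n/r}$.

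To upper bound $E_s(A)$ I would partition the contributing tuples by their \emph{collision pattern} $\pi$, a set partition of $[2s]$ recording which positions carry equal entries. A tuple with all distinct entries would be a forbidden $r$-zero-sum, so every contributing tuple has at least one block of size $\geq 2$. For a pattern with block sizes $p_1,\dots,p_k$ taking distinct values $c_1,\dots,c_k \in A$, the sum-to-zero condition collapses in $\F_2$ to $\sum_{i : p_i \text{ odd}} c_i = 0$. The main contribution comes from \emph{all-even} patterns (the sum constraint is vacuous), maximized by the perfect matchings ($k = s$, all blocks of size $2$), which contribute at most $(2s-1)!! \cdot |A|^s$ tuples. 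Patterns with exactly two odd blocks contribute zero (two distinct $\F_2$-elements cannot sum to $0$), and the all-singletons pattern contributes zero by the hypothesis.

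The main obstacle is controlling patterns with $k_{\text{odd}} \in \{4, 6, \dots, 2s-2\}$ odd blocks, where the constraint requires $k_{\text{odd}}$ distinct elements of $A$ summing to zero --- a smaller zero-sum \emph{not} directly ruled out. The naive ``last coordinate is determined'' bound on the number $Z_t(A)$ of $t$-fold distinct-element zero-sums is too weak; for instance, a $k_{\text{odd}}=4$ pattern with many size-$1$ blocks can reach $k \approx s+2$, threatening a contribution of order $|A|^{s+1}$ that would swamp the perfect-matching term. My plan to overcome this is to induct on $r$: the inductive hypothesis $|A| \leq C_t \cdot 2^{2n/t}$ for each even $t < r$ can be re-fed (via a second Cauchy--Schwarz on $E_{t/2}(A)$) into a strengthened bound on $Z_t(A)$ that is enough to make each intermediate pattern strictly subleading. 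The genuinely delicate step is keeping the combinatorial multiplicities --- number of partitions, block labelings, induction constants --- tame enough that the recursion closes with only a polynomial $r^3$ loss rather than a super-polynomial one; this bookkeeping is where I expect the real technical work, and where I would follow the strategy of Sidorenko's original analysis.
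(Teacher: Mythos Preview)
The paper does not prove this statement at all: \cref{thm:sidorenko-main} is imported verbatim from Sidorenko~\cite{sidorenko2018extremal} and used as a black box (it is plugged into \cref{lem:subspace-hamming} to obtain \cref{cor:subspace-hammer}, and that is the only role it plays). There is therefore no ``paper's own proof'' to compare your proposal against.

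For what it is worth, your sketch --- Cauchy--Schwarz on the $s$-fold representation function, then bounding the resulting additive energy by a collision-pattern decomposition --- is a reasonable outline of the kind of argument that lives in Sidorenko's original paper, and you have correctly identified the genuine difficulty (controlling patterns with an intermediate number of odd blocks, where smaller zero-sums are not forbidden by hypothesis). But since the present paper treats the result as a citation, your write-up is orthogonal to anything that needs to be supplied here; if you want to include a proof, it should be attributed to and checked against \cite{sidorenko2018extremal} rather than framed as filling a gap in this paper.
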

% say somewhere that we're using the extremal properties of subspaces in \(\F_2^n\)?
By plugging this bound into \cref{lem:subspace-hamming}, we get the following corollary.
\begin{corollary}[\hspace{1sp}\cite{sidorenko2020generalized}]\label{cor:subspace-hammer}
There is a universal constant \(C>0\) such that for any \(n\geq r\in\N\) where \(r\) is even, the largest subspace \(S\subseteq\F_2^n\) with no vector of Hamming weight \(r\) has dimension
\[
\Lambda_r(n)\leq n-(r\log n -3r\log r - r\log C)/2.
\]
\end{corollary}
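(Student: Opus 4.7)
The plan is to simply combine \cref{lem:subspace-hamming} with \cref{thm:sidorenko-main} and take logarithms. These two earlier results already do essentially all of the work: \cref{lem:subspace-hamming} translates a subspace of $\F_2^n$ with no vector of Hamming weight $r$ into a subset of a smaller ambient space with no $r$ elements summing to zero, and \cref{thm:sidorenko-main} bounds the size of any such subset.

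More concretely, I would first invoke \cref{lem:subspace-hamming} to write $\beta_r(n-\Lambda_r(n))\geq n$. Then I would apply \cref{thm:sidorenko-main} with argument $n-\Lambda_r(n)$ (which is a non-negative integer, since any subspace of $\F_2^n$ has dimension at most $n$) to upper bound the left-hand side, yielding
\[
n \;\leq\; \beta_r(n-\Lambda_r(n)) \;\leq\; C\cdot r^3\cdot 2^{2(n-\Lambda_r(n))/r}.
\]
Taking base-$2$ logarithms of both sides gives $\log n\leq \log C + 3\log r + \tfrac{2(n-\Lambda_r(n))}{r}$, and rearranging isolates $\Lambda_r(n)$ on the left to deliver exactly the claimed bound.

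There is essentially no hard step here; the only thing to be slightly careful about is that $r$ is even (so that \cref{thm:sidorenko-main} applies), which is already part of the hypothesis, and that $n-\Lambda_r(n)\geq 0$ so the invocation of $\beta_r$ and $\Lambda_r$ on integer arguments makes sense. Since the constant $C$ from \cref{thm:sidorenko-main} is the same universal constant appearing in the statement of this corollary, no parameter loss or additional absorption of constants is needed, and the proof is complete in a few lines.
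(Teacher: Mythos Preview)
Your proposal is correct and matches the paper's approach exactly: the paper states the corollary as obtained ``by plugging this bound [\cref{thm:sidorenko-main}] into \cref{lem:subspace-hamming},'' which is precisely the combination-and-take-logarithms argument you wrote out.
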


We are finally ready to prove our main design lemma, which reduces the problem of constructing \((n,r,s)\)-designs with small independence number to constructing high-dimensional linear codes.

\begin{lemma}[Main design lemma]\label{lem:main:design}
There is a universal constant \(C>0\) such that for every \(n\geq r\geq s\) with \(r\) even, if \(Q\subseteq\F_2^n\) is a linear \([n,k,d]\)-code with \(d>2(r-s)\), then \(G_{Q\cap\Delta_r}\) is an \((n,r,s)\)-design with independence number
\[
\alpha(G_{Q\cap\Delta_r})\leq C\cdot r^3\cdot2^{2(n-k)/r}.
\]
\end{lemma}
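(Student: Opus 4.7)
The strategy is simply to chain together the two results already established earlier in the section: \cref{lem:conditional-main-design}, which converts the independence number $\alpha := \alpha(G_{Q\cap\Delta_r})$ into an inequality involving $\Lambda_r$, and \cref{cor:subspace-hammer}, which gives a clean upper bound on $\Lambda_r$. Combining these should leave us with an inequality in $\alpha$ alone, which we then solve.

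\textbf{Step 1: Setup.} Since $Q$ is a linear $[n,k,d]$-code with $d > 2(r-s)$, \cref{lem:conditional-main-design} immediately tells us that $G_{Q\cap\Delta_r}$ is an $(n,r,s)$-design and that
\[
\alpha - \Lambda_r(\alpha) \leq n - k.
\]
So it suffices to upper-bound any $\alpha$ satisfying this inequality.

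\textbf{Step 2: Dispose of a trivial case.} If $\alpha < r$, then the desired bound $\alpha \leq C r^3 \cdot 2^{2(n-k)/r}$ holds trivially (for any $C \geq 1$), so we may assume $\alpha \geq r$, which is the regime in which \cref{cor:subspace-hammer} applies.

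\textbf{Step 3: Apply the zero-sum bound.} Using \cref{cor:subspace-hammer} with $n$ replaced by $\alpha$, we get
\[
\Lambda_r(\alpha) \leq \alpha - \tfrac{1}{2}\bigl( r\log\alpha - 3r\log r - r\log C\bigr).
\]
Substituting this into the inequality from Step 1 causes the $\alpha$ on each side to cancel, leaving
\[
\tfrac{1}{2}\bigl( r\log\alpha - 3r\log r - r\log C\bigr) \leq n - k.
\]
Dividing by $r/2$ and rearranging gives $\log\alpha \leq 2(n-k)/r + 3\log r + \log C$, and exponentiating yields the claimed bound $\alpha \leq C \cdot r^3 \cdot 2^{2(n-k)/r}$ (possibly after absorbing the constant from \cref{cor:subspace-hammer} into $C$).

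\textbf{Main obstacle.} There is essentially no obstacle: all the real work has been front-loaded into \cref{lem:conditional-main-design} (a linear-algebra / projection argument) and \cref{cor:subspace-hammer} (which in turn rests on Sidorenko's bound, \cref{thm:sidorenko-main}). The only things to be careful about are (i) handling the trivial case $\alpha < r$ so that \cref{cor:subspace-hammer} is applied in its stated regime, and (ii) tracking the universal constant $C$ consistently between \cref{thm:sidorenko-main}, \cref{cor:subspace-hammer}, and the final statement.
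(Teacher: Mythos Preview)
Your proposal is correct and follows exactly the same approach as the paper: plug the bound on $\Lambda_r(\alpha)$ from \cref{cor:subspace-hammer} into the inequality $\alpha-\Lambda_r(\alpha)\leq n-k$ of \cref{lem:conditional-main-design} and solve for $\alpha$. The paper's own proof is the one-line ``Simply plug the bound on $\Lambda_r(\alpha)$ from \cref{cor:subspace-hammer} into \cref{lem:conditional-main-design}''; your write-up just makes the algebra and the trivial case $\alpha<r$ explicit.
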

\begin{proof}
Simply plug the bound on \(\Lambda_r(\alpha)\) from \cref{cor:subspace-hammer} into \cref{lem:conditional-main-design}.
\end{proof}

To complete the proof of \cref{thm:formal:main-design}, we now just need to explicitly construct a linear code with very high dimension. In 1959-1960, Bose, Ray-Chaudhuri \cite{bch-codes-bc}, and Hocquenghem \cite{bch-codes-h} explicitly constructed codes of exactly this type (see \cite{guruswami2010notes} for a great exposition of these codes, which are known as \emph{BCH codes}). In particular, they proved the following theorem.

\begin{theorem}[\hspace{1sp}\cite{bch-codes-bc,bch-codes-h}]\label{thm:main:bch-codes}
For every \(m,t\in\N\), there exists an \([n,k,d]\)-linear code \(\mathbf{BCH}_{m,t}\subseteq\F_2^n\) with block length \(n=2^m-1\), dimension \(k\geq n-mt\), and distance \(d>2t\). Furthermore, there exists an Algorithm \(\mathcal{B}\) that given any \(m,t\in\N\) and \(x\in\F_2^{n}\) as input, checks if \(x\in\mathbf{BCH}_{m,t}\) in \(\poly(n)\) time.
\end{theorem}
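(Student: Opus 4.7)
The plan is to realize \(\mathbf{BCH}_{m,t}\) as the kernel of an evaluation map into the extension field \(\F_{2^m}\). First I would explicitly construct \(\F_{2^m}\) as \(\F_2[X]/(p(X))\), where \(p\) is an irreducible polynomial of degree \(m\) over \(\F_2\); such a \(p\) can be found in \(\poly(2^m)=\poly(n)\) time by enumerating all \(2^m\) degree-\(m\) polynomials and testing each for irreducibility (e.g., by checking that it shares no nontrivial factor with \(X^{2^i}-X\) for \(i<m\)). In the same fashion I would find a primitive element \(\alpha\in\F_{2^m}^\times\) of multiplicative order exactly \(n=2^m-1\), by factoring \(n\) via trial division up to \(\sqrt{n}\) and then testing each nonzero element for primitivity. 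Finally, identifying \(c=(c_0,\dots,c_{n-1})\) with the polynomial \(c(X)=\sum_{j=0}^{n-1}c_j X^j\), I would define
\[
\mathbf{BCH}_{m,t}:=\bigl\{c\in\F_2^n : c(\alpha^i)=0 \text{ for all } i=1,2,\dots,2t\bigr\}.
\]
Linearity is immediate since each constraint is \(\F_2\)-linear in \(c\).

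For the dimension bound \(k\geq n-mt\), the key observation is a Frobenius-style reduction. Because the coefficients of \(c\) lie in \(\F_2\), squaring in \(\F_{2^m}\) gives \(c(\beta)^2=c(\beta^2)\) for every \(\beta\in\F_{2^m}\); hence \(c(\alpha^{2i})=0\) if and only if \(c(\alpha^i)=0\). Iterating, every \(i\in[2t]\) can be written as \(i=2^a\cdot b\) with \(b\) odd, and the constraint at \(i\) is equivalent to the one at \(b\in\{1,3,5,\dots,2t-1\}\). Thus it suffices to impose only the \(t\) constraints with odd \(i\leq 2t-1\). Each such constraint is a single equation in \(\F_{2^m}\), which becomes \(m\) linear equations over \(\F_2\) once an \(\F_2\)-basis of \(\F_{2^m}\) is fixed. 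Therefore \(\mathbf{BCH}_{m,t}\) is cut out by at most \(mt\) linear equations over \(\F_2\), so its dimension is at least \(n-mt\).

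For the distance bound \(d>2t\), I would argue by contradiction using the Vandermonde determinant. Suppose some nonzero \(c\in\mathbf{BCH}_{m,t}\) has Hamming weight \(w\leq 2t\), and let \(S=\{j_1,\dots,j_w\}\) be its support. Restricting the \(2t\) constraints to the first \(w\) of them yields the \(w\times w\) system \(\sum_{k=1}^{w}c_{j_k}\alpha^{i\cdot j_k}=0\) for \(i=1,2,\dots,w\), with coefficient matrix \(M\) given by \(M_{ik}=(\alpha^{j_k})^i\). This factors as \(M=V\cdot D\), where \(V\) is the standard Vandermonde matrix in the nodes \(\alpha^{j_1},\dots,\alpha^{j_w}\) and \(D=\mathrm{diag}(\alpha^{j_1},\dots,\alpha^{j_w})\). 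Since \(\alpha\) has multiplicative order \(n\), the exponents \(j_1,\dots,j_w\in\{0,\dots,n-1\}\) are distinct, so the nodes are distinct nonzero elements of \(\F_{2^m}\); hence both \(V\) and \(D\) are invertible and \(c_S=0\), contradicting \(c\neq 0\). Thus \(d\geq 2t+1>2t\).

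For the membership algorithm \(\mathcal{B}\), on input \(x\in\F_2^n\) I would evaluate \(x(\alpha^i)\) for each \(i\in[2t]\) by Horner's rule, using \(O(n)\) operations in \(\F_{2^m}\) per evaluation; each such field operation runs in \(\poly(m)=\poly(\log n)\) time, so the total runtime is \(\poly(n,t)=\poly(n)\) (and the whole theorem is vacuous once \(mt\geq n\)). The accept condition is that all \(2t\) evaluations vanish. The only conceptually delicate point is the Frobenius reduction that sharpens the codimension from the naive \(2mt\) down to \(mt\); the distance bound reduces to the classical invertibility of Vandermonde matrices, and the efficiency of \(\mathcal{B}\) reduces to standard arithmetic in \(\F_{2^m}\), so no step constitutes a real obstacle.
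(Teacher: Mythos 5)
Your proof is correct and is the standard textbook construction of a (narrow-sense, binary, primitive) BCH code: the code is defined as the set of polynomials over \(\F_2\) that vanish at \(\alpha, \alpha^2, \dots, \alpha^{2t}\) for a primitive \(\alpha \in \F_{2^m}\); the dimension bound follows from the Frobenius observation that the constraints at even powers are redundant; and the distance bound is the classical BCH bound via Vandermonde invertibility. The paper does not prove this theorem — it simply imports it from the literature (\cite{bch-codes-bc,bch-codes-h}) — so there is no internal proof to compare against, but your argument faithfully reproduces the standard one, and your remarks on explicitness (brute-force search for an irreducible polynomial and a primitive element in \(\poly(2^m) = \poly(n)\) time, then membership testing by polynomial evaluation) correctly justify the algorithmic clause.
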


By instantiating \cref{lem:main:design} with \cref{thm:main:bch-codes}, we can finally prove \cref{thm:formal:main-design}.

\begin{proof}[Proof of \cref{thm:formal:main-design}]
We start by assuming that \(n=2^m-1\) for some \(m\in\N\). Then, we let \(t=r-s\), and use \cref{thm:main:bch-codes} to define the \([n,k,d]\)-linear code \(Q:=\mathbf{BCH}_{m,t}\subseteq\F_2^n\), where \(k\geq n-mt=n-m(r-s)\) and \(d>2t=2(r-s)\). Algorithm \(\mathcal{A}\) will simply output the hypergraph \(G_{Q\cap\Delta_r}\). By \cref{lem:main:design}, we know that \(G_{Q\cap\Delta_r}\) is an \((n,r,s)\)-design with independence number
\[
\alpha(G_{Q\cap\Delta_r})\leq C\cdot r^3\cdot 2^{2(n-k)/r}\leq C\cdot r^3\cdot 2^{2mt/r}=C\cdot r^3\cdot (2^m)^{2(r-s)/r}\leq 2C\cdot r^3\cdot n^{2(r-s)/r}.
\]
Furthermore, note that \(G_{Q\cap\Delta_r}\) can be constructed in \(\poly({n\choose r})\) time if \(Q\cap\Delta_r\) can be constructed in \(\poly({n\choose r})\) time, and this can be done by simply checking (and appropriately including) whether each of the \({n\choose r}\) elements in \(\Delta_r\) belong to \(Q\), using Algorithm \(\mathcal{B}\) from \cref{thm:main:bch-codes}.

If \(n\) is of the form \(2^m\), we can follow the previous procedure to draw hyperedges around the first \(n-1\) vertices, and then add one more isolated vertex (contained in no edges) at the end to finish the hypergraph. Clearly we will still have \(\alpha(G_{Q\cap\Delta_r})\leq3C\cdot r^3\cdot n^{2(r-s)/r}\).

If \(n\) is not of the form \(2^m-1\) nor \(2^m\), then it can be written as a sum \(x_02^0 + \dots + x_d2^d\) over \(\N\), where \(d=\lceil\log n\rceil\) and each \(x_i\in\zo\). We can then follow the most recent procedure to construct a graph \(G_i\) over \(2^i\) vertices separately for each nonzero \(x_i\). The final graph \(G=\bigcup_i G_i\) is clearly still an \((n,r,s)\)-design, and it has independence number
\begin{align*}
\alpha(G)=\sum_{i:x_i=1}\alpha(G_i)\leq\sum_{0\leq i\leq\lceil\log n\rceil}3C\cdot r^3\cdot (2^i)^{2(r-s)/r}
&=3C\cdot r^3\sum_{0\leq i\leq\lceil\log n\rceil}(2^i)^{2(r-s)/r}\\
&=3C\cdot r^3\cdot\frac{(2^{\lceil\log n\rceil+1})^{2(r-s)/r}-1}{2^{2(r-s)/r}-1}.
\end{align*}
It is straightforward to verify that for a large enough universal constant \(C^\pr\), the above fraction is bounded above by \(C^\pr\cdot r\cdot n^{2(r-s)/r}\), which completes the proof.
\end{proof}
%%%%%%%%%%%%%%%%%%%%%%%%%%%%%%%%%%%%%%%%%%%%%%%%%%%%%%%%%%%%%%%%%%%%%%%%%%%%%%%%%%%%%%%%%%%%%%%%%%%%%%%%%%%%%%%%%%%%%%%%%%%%%%%%%%%%%%%%%%
% EXPLICIT DESIGNS
%%%%%%%%%%%%%%%%%%%%%%%%%%%%%%%%%%%%%%%%%%%%%%%%%%%%%%%%%%%%%%%%%%%%%%%%%%%%%%%%%%%%%%%%%%%%%%%%%%%%%%%%%%%%%%%%%%%%%%%%%%%%%%%%%%%%%%%%%%

%%%%%%%%%%%%%%%%%%%%%%%%%%%%%%%%%%%%%%%%%%%%%%%%%%%%%%%%%%%%%%%%%%%%%%%%%%%%%%%%%%%%%%%%%%%%%%%%%%%%%%%%%%%%%%%%%%%%%%%%%%%%%%%%%%%%%%%%%%
% EXTRACTORS FOR ADVERSARIAL SOURCES
%%%%%%%%%%%%%%%%%%%%%%%%%%%%%%%%%%%%%%%%%%%%%%%%%%%%%%%%%%%%%%%%%%%%%%%%%%%%%%%%%%%%%%%%%%%%%%%%%%%%%%%%%%%%%%%%%%%%%%%%%%%%%%%%%%%%%%%%%%
\section{Extractors for adversarial sources via designs and LREs}\label{sec:adversarial-sources}
Perhaps the most popular model of seedless extraction is to assume that each source \(\X \) actually consists of several \emph{independent sources} \(\X=(\X_1,\X_2,\dots,\X_N)\), each guaranteed to have some min-entropy. A long line of work has focused on constructing extractors for this setting \cite{chor1988unbiased,barak2006extracting,li2015three,cohen2016local,CZ19,Li19}, and has culminated in extractors with a near-optimal entropy requirement \cite{Li19}. Recently, the idea of generalizing this model to allow for \emph{bad sources} with \emph{no entropy guarantee} and/or \emph{limited dependence} has received considerable attention \cite{aggarwal2020extract,adversarial-sources,ball2019randomness}. Motivated by applications in generating a (cryptographic) common random string in the presence of adversaries, and in harvesting randomness from unreliable sources, Chattopadhyay, Goodman, Goyal, and Li \cite{adversarial-sources} introduced the class of \emph{adversarial sources}:

\begin{definition}[Adversarial sources]\label{def:intro:adversarial-sources}
A source \(\X\) over \((\zo^n)^N\) is an \emph{\((N,K,n,k)\)-adversarial source} if it is of the form \(\X=(\X_1,\X_2,\dots,\X_N)\), where each \(\X_i\) is an independent source over \(\zo^n\), and \emph{at least \(K\) of them are good}: i.e., there is some set \(S\subseteq[N]\) of size \(K\) such that \(H_\infty(\X_i)\geq k\), for all \(i\in S\).
\end{definition}

In fact, the authors in \cite{adversarial-sources} provide a more general definition that also allows for some limited dependence between the sources, but \cref{def:intro:adversarial-sources} is already general enough to capture many of their motivating applications and generalize several well-studied settings: \((N,N,n,k)\)-adversarial sources capture the \emph{independent source} model \cite{chor1988unbiased}, \((N,K,1,1)\)-adversarial sources capture \emph{oblivious bit-fixing sources} \cite{chor1985bit}, and \((N,K,n,n)\)-adversarial sources capture so-called \emph{symbol-fixing sources} \cite{kamp2006deterministic}. %In addition, will see that adversarial sources are closely related to \emph{total-entropy sources} \cite{koenig2004extracting,koenig2005generalized} and \emph{small-space sources} \cite{kamp2006small}.

In this section, we will show how to combine our designs from \cref{sec:designs} with a specific kind \emph{leakage-resilient extractor} (LRE) known as \emph{extractors for cylinder intersection} that was introduced in \cite{kms} (see \cref{def:lre}), in order to obtain improved extractors for adversarial sources. The following is our main result of the section:

\begin{theorem}[\cref{thm:main:adversarial}, restated]\label{thm:restated:adversarial}
 There is a universal constant \(C>0\) such that for any fixed \(\delta>0\) and all sufficiently large \(N,K,n,k\in\N\) satisfying \(k\geq\log^Cn\) and \(K\geq N^\delta\), there exists an explicit extractor \(\Ext:(\zo^n)^N\to\zo^m\) for \((N,K,n,k)\)-adversarial sources, with output length \(m=k^{\Omega(1)}\) and error~\(\epsilon=2^{-k^{\Omega(1)}}\).
\end{theorem}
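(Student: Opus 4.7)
The plan is to construct the extractor by instantiating the framework sketched in \cref{sec:over_low_ext}: combine an $(N,r,r-1)$-design from \cref{thm:formal:main-design} with an explicit leakage-resilient extractor (cylinder intersection extractor) from \cite{focs2020merged}, by XOR-ing the outputs of the LRE called on each hyperedge. First I would fix an even integer $r$ depending only on $\delta$, chosen so that $2(r-(r-1))/r = 2/r < \delta$; then \cref{thm:formal:main-design} yields an explicit $(N,r,r-1)$-design $G=(V,E)$ with $\alpha(G) \leq C_r \cdot N^{2/r} < N^\delta \leq K$ for all sufficiently large $N$. Identify $V$ with $[N]$, and let $\LRE : (\zo^n)^r \to \zo^m$ be an explicit $r$-source extractor from \cite{focs2020merged} whose output of length $m = k^{\Omega(1)}$ is $2^{-k^{\Omega(1)}}$-close to uniform even conditioned on the joint output of arbitrary leakage functions computed from any $r-2$ of its inputs (together with an arbitrary external independent source), provided each source has entropy at least $\log^C n$.

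The extractor is then
\[
\Ext(\X_1,\dots,\X_N) \;:=\; \bigoplus_{e \in E}\, \LRE(\X_e),
\qquad\text{where } \X_e := (\X_i)_{i \in e}.
\]
For the analysis, let $S \subseteq [N]$ with $|S| = K$ index the good sources of the adversarial source. Since $\alpha(G) < K$, the set $S$ is not independent in $G$, so there exists a hyperedge $e^\ast \in E$ with $e^\ast \subseteq S$. In particular, all $r$ sources in $\X_{e^\ast}$ are independent $(n,k)$-sources, so $\LRE(\X_{e^\ast})$ is $2^{-k^{\Omega(1)}}$-close to $\U_m$. The goal is to show that this near-uniformity survives XOR-ing with $\bigoplus_{e \neq e^\ast} \LRE(\X_e)$.

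Here the design property is used crucially: for every $e \neq e^\ast$, the bound $|e \cap e^\ast| \leq r-2$ means that $\LRE(\X_e)$ depends on at most $r-2$ of the sources in $\X_{e^\ast}$, together with the sources $\X_{e \setminus e^\ast}$, which are independent of $\X_{e^\ast}$. Thus the entire bundle $Z := (\LRE(\X_e))_{e \neq e^\ast}$ can be written as a function of at most $r-2$ coordinates of $\X_{e^\ast}$ together with random variables independent of $\X_{e^\ast}$. By averaging over these external independent sources and applying the cylinder-intersection leakage-resilience of $\LRE$, the output $\LRE(\X_{e^\ast})$ remains $2^{-k^{\Omega(1)}}$-close to uniform conditioned on $Z$. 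Using \cref{fact:stat-dist:add-constant} and \cref{fact:stat-dist:convex-combination}, the XOR $\LRE(\X_{e^\ast}) \oplus \bigoplus_{e \neq e^\ast} \LRE(\X_e)$ is therefore $2^{-k^{\Omega(1)}}$-close to $\U_m$.

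The main obstacle is quantitative: the ``leakage'' $Z$ consists of up to $|E| = \poly(N)$ blocks of $m$ bits each, so the total leakage length can be polynomially large in $N$. I need to check that the cylinder-intersection extractor from \cite{focs2020merged} tolerates this amount of leakage at the entropy level $k \geq \log^C n$. This should work by choosing the LRE parameters so that it withstands leakage of length $\poly(N) \cdot m$, which is $k^{\Omega(1)}$-bit leakage in each coordinate up to an acceptable polynomial slack when $k \geq \log^C n$ and $N$ is at most $2^{k^{o(1)}}$; in the opposite regime $N$ is large compared to $k$, the output length $m$ must be shrunk accordingly, but $m = k^{\Omega(1)}$ still holds. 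A secondary subtlety is confirming that the adversary's choice of $S$ is handled uniformly: since the argument above produces the good edge $e^\ast$ from $S$ without the extractor ``knowing'' $S$, and the bound on $|\Ext(\X) - \U_m|$ holds for every such choice, this is automatic.
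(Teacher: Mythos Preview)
Your overall plan matches the paper's: choose an even constant $r$ with $2/r<\delta$, take the explicit $(N,r,r-1)$-design $G$ from \cref{thm:formal:main-design} so that $\alpha(G)<N^\delta\leq K$, take the $(r,r-1)$-leakage-resilient extractor $\LRE$ from \cref{thm:imported:LRE}, and XOR the outputs of $\LRE$ over all hyperedges. The activation step (finding $e^\ast\subseteq S$) is exactly right.

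The gap is in your ``fragile correlation'' analysis. First, the sentence ``the entire bundle $Z:=(\LRE(\X_e))_{e\neq e^\ast}$ can be written as a function of at most $r-2$ coordinates of $\X_{e^\ast}$'' is false as written: different edges $e$ meet $e^\ast$ in different subsets, so $Z$ as a whole may depend on all $r$ coordinates of $\X_{e^\ast}$. Second, and more importantly, the ``main obstacle'' you identify --- that $Z$ has $|E|=\poly(N)$ blocks of leakage --- is not something you can dispatch by tuning parameters. The LRE in \cref{thm:imported:LRE} is stated for a \emph{fixed} collection of $\binom{r}{r-2}=O(1)$ leakage functions, each of length $m$; nothing in the black-box statement lets you absorb $\poly(N)$ leaks, and $N$ is a free parameter with no assumed relation to $n$ or $k$ (so your case split ``$N\leq 2^{k^{o(1)}}$ vs.\ otherwise'' is unjustified and the second branch would not yield the claimed $m=k^{\Omega(1)}$).

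The fix is a simple grouping trick, which is precisely what the paper's \cref{lem:LRE-design-framework} does. You do not need to condition on the whole bundle $Z$; you only need the XOR $\bigoplus_{e\neq e^\ast}\LRE(\X_e)$. First fix all sources outside $e^\ast$ (your ``averaging over external independent sources''). After this fixing, each $\LRE(\X_e)$ for $e\neq e^\ast$ becomes a \emph{deterministic} function of $\X_{e\cap e^\ast}$ alone. Now partition the edges $e\neq e^\ast$ according to which subset $S\in\binom{e^\ast}{r-2}$ contains $e\cap e^\ast$, and for each such $S$ define $\Leak_S(\X_S)$ to be the XOR of all the $\LRE(\X_e)$'s in that class. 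This collapses the $\poly(N)$ terms into exactly $\binom{r}{r-2}$ leakage functions, each of output length $m$, matching \cref{def:lre} on the nose. The $(r,r-1)$-leakage-resilience of $\LRE$ then gives error $\epsilon_0=2^{-k^{\Omega(1)}}$ directly, with no dependence on $N$.
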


Previously, the best explicit extractor for this setting was constructed by Chattopadhyay et al. \cite{adversarial-sources}, and required \(K\geq N^{0.5+o(1)}\) good sources. On the other hand, it is easy to give a \emph{non-explicit} extractor that requires just \(K\geq2\) good sources.\footnote{This extractor calls an optimal two-source extractor over every pair of sources in the adversarial source, and takes the XOR of the results\cite{chattopadhyay2016extractors}. To see why this works, we refer the reader to a similar proof sketch for a slightly more involved construction, provided in the following paragraphs.} Thus, while our explicit constructions greatly improve the state-of-art (and most notably break the ``\(\sqrt{N}\) barrier''), there is still a lot of room for improvement. Further improvement, however, will require significantly new techniques.

In order to prove \cref{thm:restated:adversarial}, we start by reviewing the \emph{activation vs. fragile correlation} paradigm from \cite{adversarial-sources} for extracting from adversarial sources in \Cref{sec:par_adv_sou}. We prove \Cref{thm:restated:adversarial} in \Cref{sec:new_frm_lke_ed}, where  we will describe how to extend   the  activation vs. fragile correlation technique into a  general framework for extracting from adversarial sources. We use this new framework by combining the recent explicit LREs from \cite{focs2020merged} with our new explicit designs from \cref{sec:designs} to obtain our adversarial source extrators.

% do we need to define three-source extractor somewhere?
\subsection{The \emph{activation vs. fragile correlation} paradigm of \cite{adversarial-sources}} \label{sec:par_adv_sou}
Our construction leverages the ``\emph{activation vs. fragile correlation}'' paradigm introduced in \cite{adversarial-sources} for extracting from adversarial sources. This paradigm was first introduced in an attempt to construct a low-error extractor for \((N,K,n,k)\)-adversarial sources, given just \(k\geq\polylog n\) entropy and as a few good sources, \(K\), as possible. Since there exists a three-source extractor \(\Ext_0\) for \(k_0\geq\polylog n\) entropy and exponentially small error \cite{li2015three}, a natural idea is to somehow employ this object as a subroutine. Using this idea, \cite{adversarial-sources} proposed an extractor for adversarial sources that works as follows. Given as input an adversarial source \(\X=(\X_1,\dots,\X_N)\), the extractor carefully selecting triples of sources, calls \(\Ext_0\) over each triple, and XORs the results. \cite{adversarial-sources} argued that this procedure outputs uniform bits as long as the following two properties hold:

\begin{enumerate}
\item \textbf{Activation}: some \(\Ext_0\) call is \emph{activated}, i.e., only given good sources as input.
\item \textbf{Fragile correlation}: fixing the (XOR of the) output of all other \(\mathsf{Ext}_0\) calls does not affect the output of the activated \(\mathsf{Ext}_0\) call (with high probability).
\end{enumerate}

It is not hard to see why these conditions suffice: \emph{activation} guarantees that some \(\Ext_0\) call outputs uniform bits, while \emph{fragile correlation} guarantees that these uniform bits will be propagated through to the overall output of the extractor (by \cref{fact:stat-dist:add-constant} and \cref{fact:stat-dist:convex-combination}). Thus, the main challenge considered in \cite{adversarial-sources} is determining \emph{how to select triples} such that activation and fragile correlation are guaranteed.

% should we remove (the XOR of) from parentheses?
The key idea in \cite{adversarial-sources} is to select triples using the hyperedges of a 3-uniform hypergraph, \(G=(V,E)\). Then, we know that activation is guaranteed as long as the good sources \emph{cover} some hyperedge \(e\in E\), which is guaranteed to happen whenever \(K>\alpha(G)\). In order to ensure fragile correlation, \cite{adversarial-sources} observed that it suffices to require that \(G=(V,E)\) is a \emph{partial Steiner triple system}, also known as an \((N,3,2)\)-design. Such a hypergraph guarantees that each \(\Ext_0\) call shares at most one source with the activated \(\Ext_0\) call. Thus, if we start by fixing all sources that are \emph{not} inputs to the activated \(\Ext_0\) call, it is then easy to fix the outputs of all other \(\Ext_0\) calls without introducing correlation between the inputs to the activated call. Furthermore, by \cref{lem:entropy-drop}, we can show that this process barely decreases the entropy of the inputs to the activated call, and thus its output remains uniform.

% be consistent with tenses!
This shows that the construction above provides a low-error extractor for \((N,K,n,k)\)-adversarial sources, where \(k\geq\polylog n\) and \(K>\alpha(G)\). Thus, the goal becomes to explicitly construct an \((N,3,2)\)-design \(G=(V,E)\) with small independence number. Using cap set bounds, Chattopadhyay et al. \cite{adversarial-sources} construct such an object with \(\alpha(G)<N^{0.923}\), and thus gave an explicit extractor when there are \(K\geq N^{0.923}\) good sources. In order to improve this requirement on \(K\), it is natural to try to construct an \((N,3,2)\)-design with smaller independence number. However, this seems difficult, and furthermore the tightness of \cref{thm:rodl-sinajova} implies that this technique cannot possibly give an extractor that requires fewer than \(K\geq N^{0.5+o(1)}\) good sources.

Chattopadhyay et al. \cite{adversarial-sources} take a different approach. By using objects known as \emph{strong two-source condensers} \cite{bcdts18} and \emph{non-malleable extractors} \cite{chattopadhyay2020nonmalleable}, the authors are able to create more robust versions of three-source extractors. These robust extractors have stronger conditioning properties, and allow the authors to use different hypergraphs (beyond \((N,3,2)\)-designs) in their construction. As a result, they are able to reduce the requirement on good sources from \(K\geq N^{0.923}\) to \(K\geq N^{0.5+o(1)}\). Unfortunately, however, the conditioning properties of their robust subroutine extractors are extremely specific, and as a result they can only be combined with very specialized types of hypergraphs. These hypergraphs offer no clean generalization of \((N,3,2)\)-designs, and furthermore they appear to be too specialized to offer any further improvement on \(K\) (and, in particular, break the ``\(\sqrt{N}\) barrier'').

\subsection{A new framework using leakage-resilient extractors and extremal designs}\label{sec:new_frm_lke_ed}
If one hopes to significantly improve \(K\), it appears that one would need a multi-source extractor with \emph{even stronger} conditioning properties to use as a subroutine. Recently, exactly such an object was constructed in \cite{focs2020merged}, and is known as a \emph{leakage-resilient extractor} (LRE). LREs are very general objects with extremely strong conditioning properties. The exact variant that will be useful here is actually a specialization known as \emph{extractors for cylinder intersections}, first introduced in \cite{kms}. Informally, we define an \emph{\((r,s)\)-leakage-resilient extractor} to be an \(r\)-source extractor \(\mathsf{LRE}\) that outputs bits that look uniform, \emph{even conditioned on} the output of several functions that each act on \emph{fewer than \(s\)} of the inputs to \(\mathsf{LRE}\). Formally, it is defined as follows.

\begin{definition}[\hspace{1sp}\cite{kms,focs2020merged}]\label{def:lre}
A function \(\LRE:(\zo^n)^r\to\zo^m\) is an \emph{\((r,s)\)-leakage-resilient extractor} for entropy \(k\) and error \(\epsilon\) if the following holds. Let \(\X:=(\X_1,\dots,\X_r)\) be any \(r\) independent \((n,k)\) sources, let \(\mathcal{T}:={[N]\choose s-1}\), and let \(\mathcal{L}:=\{\Leak_T:(\zo^n)^{s-1}\to\zo^m\}_{T\in\mathcal{T}}\) be any collection of functions. Then:
\[
|\LRE(\X)\circ(\Leak_S(\X_S))_{S\in\mathcal{S}}-\U_m\circ(\Leak_S(\X_S))_{S\in\mathcal{S}}|\leq\epsilon.
\]
\end{definition}

Given such a robust extractor, it is now easy to generalize the original extractor of \cite{adversarial-sources} in a clean, natural way: instead of calling a three-source extractor over the hyperedges of an \((N,3,2)\)-design and XORing the results, we call an \((r,s)\)-leakage-resilient extractor over the hyperedges of an \((N,r,s)\)-design and XOR the results. Once again, we can ensure \emph{activation} as long as the number of good sources, \(K\), exceeds the independence number of the design. On the other hand, instead of using \cref{lem:entropy-drop} to ensure fragile correlation, we simply use the leakage-resilience of our leakage-resilient extractor: to see why this works, simply observe that an \((N,r,s)\)-design guarantees that the intersection of two hyperedges has size \(<s\), while a leakage-resilient extractor guarantees to output uniform bits \emph{even conditioned on} several leaks that each act on \(<s\) of its inputs.

Formally, we prove the following lemma, which provides a framework for combining leakage-resilient extractors with general designs in order to extract from adversarial sources.

\begin{lemma}\label{lem:LRE-design-framework}
Let \(G=([N],E)\) be an \((N,r,s)\)-design with independence number \(\alpha\), and let \(\Ext_0:(\zo^n)^r\to\zo^m\) be an \((r,s)\)-leakage resilient extractor for entropy \(k_0\) with error \(\epsilon_0\). Then for any \(K>\alpha\) and \(k\geq k_0\), the function \(\Ext_G:(\zo^n)^N\to\zo^m\) defined as
\[
\Ext_G(X) := \bigoplus_{e\in E(G)}\Ext_0(X_e)
\]
is an extractor for \((N,K,n,k)\) adversarial sources with error \(\epsilon=\epsilon_0\).
\end{lemma}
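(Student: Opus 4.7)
The plan is to realize the \emph{activation vs. fragile correlation} paradigm discussed above, using the design $G$ to deliver activation and the leakage-resilience of $\Ext_0$ to deliver fragile correlation. Fix an adversarial source $\X = (\X_1, \dots, \X_N)$ and let $S \subseteq [N]$, $|S| = K$, index the good sources. Since $K > \alpha(G)$, the set $S$ is not independent in $G$, so there exists at least one hyperedge $e^\ast \in E$ with $e^\ast \subseteq S$. I will think of the extractor call $\Ext_0(\X_{e^\ast})$ as the \emph{activated} call: its inputs are $r$ independent $(n,k)$ sources with $k \ge k_0$, so by the extractor property of $\Ext_0$ its output is $\epsilon_0$-close to $\U_m$.

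Next I would argue that all the \emph{other} hyperedges can be absorbed as leakage on $\X_{e^\ast}$. For any $e \in E$ with $e \neq e^\ast$, the design property gives $|e \cap e^\ast| < s$, i.e. $|e \cap e^\ast| \le s-1$. Condition on a fixing of the remaining coordinates $\X_{[N] \setminus e^\ast}$; by independence of the $\X_i$'s, the conditioned $\X_{e^\ast}$ is still a product of $r$ independent $(n,k)$ sources. Under this fixing, each value $\Ext_0(\X_e)$ becomes a deterministic function of $\X_{e \cap e^\ast}$, which involves at most $s-1$ of the inputs to the activated call. Indexing these by $T_e := e \cap e^\ast \in \binom{e^\ast}{\le s-1}$ (padding with arbitrary coordinates if $|T_e| < s-1$ so that it lies in the index set $\mathcal{T}$ from \cref{def:lre}), we obtain a family of leakage functions $\{\Leak_{T_e}\}_{e \neq e^\ast}$ acting on at most $s-1$ coordinates of $\X_{e^\ast}$ each.

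By \cref{def:lre}, the output $\Ext_0(\X_{e^\ast})$ is $\epsilon_0$-close to $\U_m$ even conditioned on the joint leakage $(\Leak_{T_e}(\X_{T_e}))_{e \neq e^\ast}$. In particular, by \cref{fact:stat-dist:add-constant}, adding the \emph{deterministic} (once the leakage is fixed) value $c := \bigoplus_{e \neq e^\ast} \Ext_0(\X_e)$ to $\Ext_0(\X_{e^\ast})$ keeps the distribution $\epsilon_0$-close to uniform, conditioned on this fixing. Averaging over fixings of $\X_{[N] \setminus e^\ast}$ and of the leakages via \cref{fact:stat-dist:convex-combination} yields $|\Ext_G(\X) - \U_m| \le \epsilon_0$.

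The only real subtlety is a bookkeeping one: the definition of an $(r,s)$-LRE in \cref{def:lre} quantifies over a fixed family of leakage functions indexed by $\binom{[r]}{s-1}$, not over leakages that may depend on auxiliary randomness from outside $\X_{e^\ast}$. I would handle this the standard way, by first conditioning on $\X_{[N] \setminus e^\ast} = x$, noting that for each such $x$ the maps $y \mapsto \Ext_0((\X_e \mid \X_{[N]\setminus e^\ast}=x)(y_{e \cap e^\ast}))$ are honest deterministic functions of $s-1$ coordinates of $\X_{e^\ast}$, applying the LRE guarantee pointwise, and then averaging. This step is the main place where one has to be careful; once it is set up correctly, the bound $\epsilon = \epsilon_0$ falls out with no further loss.
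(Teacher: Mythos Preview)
Your proof is correct and follows the same strategy as the paper's: locate an activated edge $e^\ast$, condition on the sources outside $e^\ast$, and treat the remaining extractor calls as leakage on $\X_{e^\ast}$. There is, however, one bookkeeping point you gloss over that the paper handles explicitly. The LRE definition (\cref{def:lre}) permits exactly one leakage function $\Leak_T$ per subset $T \in \binom{[r]}{s-1}$, yet your family $\{\Leak_{T_e}\}_{e \neq e^\ast}$ is indexed by \emph{edges} and may contain $|E|-1 \gg \binom{r}{s-1}$ functions, since many edges can share the same intersection $e \cap e^\ast$ (your ``padding'' remark fixes the \emph{size} of $T_e$ but not the multiplicity). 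The paper deals with this by partitioning $E \setminus \{e^\ast\}$ into buckets $\mathcal{W}_S$, one per $S \in \binom{e^\ast}{s-1}$, and defining a \emph{single} leakage map $\Leak_S := \bigoplus_{e \in \mathcal{W}_S} \Ext_0^e$; then $\Ext_G(\X) = \Ext_0(\X_{e^\ast}) \oplus \bigoplus_S \Leak_S(\X_S)$ and \cref{def:lre} applies verbatim. The collapse is of course trivial---an XOR of functions of $\X_S$ is again a function of $\X_S$---but without it your sentence ``By \cref{def:lre}, \ldots'' does not literally go through. The other subtlety you flag in your final paragraph, that the leakage functions depend on the fixing of $\X_{[N]\setminus e^\ast}$, is handled correctly and matches what the paper does.
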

\begin{proof}
Let \(\X\) be an \((N,K,n,k)\) adversarial source. We must show that \(|\Ext_G(\X)-\U_m|\leq\epsilon\). Because \(K>\alpha\), there is some \(e^\ast\in E\) containing only good sources, i.e., \(\X_i\) has entropy at least \(k\) for each \(i\in e^\ast\). Without loss of generality, we assume \(e^\ast=[r]\). We now fix all other sources \(\mathbf{Z}_1=(\X_j)_{j\notin e^\ast}\), using \cref{fact:stat-dist:convex-combination}:
\[
|\Ext_G(\X)-\U_m|\leq\E_{z_1\sim\mathbf{Z}_1}[|(\Ext_G(\X)\mid\mathbf{Z}_1=z_1)-\U_m|].
\]

Consider any \(z_1=(x_j)_{j\notin e^\ast}\). For each \(e\in E(G)\), we define the restriction \(\Ext_0^e : (\zo^n)^{|e\cap e^\ast|}\to\zo^m\) as \(\Ext_0^e(Y_1,\dots,Y_{|e\cap e^\ast|})=\Ext_0(Y_1,\dots,Y_{|e\cap e^\ast|},(x_j)_{j\in e\setminus e^\ast})\), so that we may write
\[
(\Ext_G(\X)\mid\mathbf{Z}=z_1)=\bigoplus_{e\in E(G)}\Ext^e_0(\X_{e\cap e^\ast})=\Ext_0(\X_{e^\ast})\oplus\bigoplus_{e\in E(G)\setminus\{e^\ast\}}\Ext_0^e(\X_{e\cap e^\ast}).
\]

Because \(G\) is an \((N,r,s)\)-design, any two edges share at most \(s-1\) vertices. Thus, we may partition \(E(G)\setminus\{e^\ast\}\) into \({r\choose s-1}\) sets, depending on the intersection behavior of each edge with \(e^\ast\). In particular, for each \(S\in{e^\ast\choose s-1}\), we define:
\[
\mathcal{W}_S := \{e\in E : e\cap e^\ast\subseteq S\}.
\]
If any \(e\in E\) ends up in more than one \(\mathcal{W}_S\), we simply remove it from all but one of these sets. Now, for each \(S\in{e^\ast\choose s-1}\), we define \(\Leak_S:(\zo^n)^{s-1}\to\zo^m\) such that for any \(X\in(\zo^n)^N\), \(\Leak_S(X_S)=\bigoplus_{e\in\mathcal{W}_S}\Ext_0^e(X_{e\cap e^\ast})\), which is a valid definition because \(e\cap e^\ast\) is always in \(S\), by definition of \(\mathcal{W}_S\). We may now write
\begin{equation}\label{eq:adversarial-rewrite}
    (\Ext_G(\X)\mid\mathbf{Z}_1=z_1)=\Ext_0(\X_{e^\ast})\oplus\bigoplus_{S\in{e^\ast\choose s-1}}\Leak_S(\X_S).
\end{equation}
To bound the distance of this random variable from uniform, we now define the second random variable we will fix, \(\mathbf{Z}_2:=(\Leak_S(\X_S))_{S\in{e^\ast\choose s-1}}\). Fixing this random variable, we have:
\begin{align*}
|(\Ext_G(\X)\mid\mathbf{Z}_1=z_1)-\U_m|&\leq\E_{z_2\sim\mathbf{Z}_2}[|(\Ext_G(\X)\mid\mathbf{Z}_1=z_1,\mathbf{Z}_2=z_2)-\U_m|]\\
&=\E_{z_2\sim\mathbf{Z}_2}[|(\Ext_0(\X_{e^\ast})\mid\mathbf{Z}_2=z_2)-\U_m|]\\
&=|\Ext_0(\X_{e^\ast})\circ\mathbf{Z}_2-\U_m\circ\mathbf{Z}_2|,
\end{align*}
where the first and last (in)equalities follow easily from the definition of statistical distance, and the second (in)equality follows from \cref{eq:adversarial-rewrite} and the fact that adding a constant to a random variable does not change its distance from uniform. But notice that by definition of \(\mathbf{Z}_2\) and the leakage-resilience of \(\Ext_0\), this quantity is bounded above by \(\epsilon_0\), which completes the proof.
\end{proof}

In order to highlight the generality of this framework, we observe that by \cref{lem:entropy-drop}, a standard three-source extractor is, in fact, a \((3,2)\)-leakage-resilient extractor (up to some negligible loss in parameters). Thus, by instantiating \cref{lem:LRE-design-framework} with \(r=3,s=2\), we recover the original extractor and analysis of \cite{adversarial-sources}. Even better, since \cref{thm:rodl-sinajova} tells us that the independence number \(\alpha\) of an \((N,r,s)\)-design decreases quickly as \(r,s\) grow large together, we see that \cref{lem:LRE-design-framework} offers a concrete way to construct extractors for adversarial sources with much fewer good sources, \(K\).

If we want to realize the above plan, we need two explicit objects. First, we need explicit \((N,r,s)\)-designs with independence numbers that decrease quickly as \(r,s\) grow together. \cref{thm:main:designs} of the current paper gives exactly this, and in fact the independence numbers of our designs decrease with \(r,s\) \emph{almost as quickly as possible}, as shown by the tightness of \cref{thm:rodl-sinajova}.

Second, we need explicit leakage-resilient extractors for polylogarithmic entropy that have exponentially small error. Very recently, these exact objects were constructed:

% Theorem 2 in [CGGL20b]
\begin{theorem}[\hspace{1sp}\cite{focs2020merged}]\label{thm:imported:LRE}
There is a universal constant \(C>0\) such that for any sufficiently large \emph{constant} \(r\in\N\) and all \(n,k\in\N\) satisfying \(k\geq\log^C n\), there exists an explicit \((r,r-1)\)-leakage resilient extractor \(\Ext:(\zo^n)^r\to\zo^m\) for min-entropy \(k\) with output length \(m=k^{\Omega(1)}\) and error \(\epsilon=2^{-k^{\Omega(1)}}\).
\end{theorem}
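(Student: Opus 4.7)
Since \cref{thm:imported:LRE} is imported from [focs2020merged], my proposal reconstructs the broad strategy I would pursue if building such an object from scratch with recent extractor machinery. The plan is a recursive construction on the number of sources \(r\), combining a low-error two-source extractor, a non-malleable extractor that serves as an advice generator, and a flip-flop style correlation breaker, composed so as to gradually strengthen the leakage-resilience guarantee while keeping the entropy requirement polylogarithmic and the error exponentially small.

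For the base case \(r=2\), a \((2,1)\)-leakage-resilient extractor reduces to a standard two-source extractor: each leakage function is a function of a single source \(\X_i\), which can be fixed via \cref{lem:entropy-drop} at the cost of only a polylogarithmic entropy loss, reducing the problem to the usual two-source setting. The required parameters (polylogarithmic entropy, exponentially small error) follow from the line of low-error two-source extractor constructions initiated by Chattopadhyay--Zuckerman and sharpened in subsequent work.

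For the inductive step, suppose we already have an \((r,s)\)-LRE \(\LRE'\) and wish to construct an \((r+1,s+1)\)-LRE on \(r+1\) independent sources \(\X_1,\dots,\X_{r+1}\). First, apply \(\LRE'\) to \(\X_1,\dots,\X_r\) to produce a string \(\W\) that, by induction, is close to uniform conditioned on any leakage touching \(s-1\) of those sources. Second, run a low-error two-source non-malleable extractor on \((\X_1,\X_{r+1})\) to obtain a short advice string \(\mathbf{A}\) of length \(O(\log n)\) whose key property is that, for any fixed collection of ``tampering'' functions induced by the leakage patterns we must resist, the tampered advice strings differ from \(\mathbf{A}\) with high probability. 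Third, feed \(\W\), \(\mathbf{A}\), and \(\X_{r+1}\) into an advice-based correlation breaker, whose output is the final LRE output. The leakage-resilience analysis splits by cases on which sources a leakage function depends on: if the leakage avoids \(\X_{r+1}\), it acts on \(s\) of the first \(r\) sources, so after fixing one additional source via \cref{lem:entropy-drop} (incurring only polylogarithmic entropy loss) it can be absorbed into the \(\LRE'\) guarantee on the remaining \(s-1\); if the leakage involves \(\X_{r+1}\), then the non-malleability of \(\mathbf{A}\) combined with the correlation breaker ensures that the output stays independent of the leaked view.

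The hard part will be making every subroutine work simultaneously at min-entropy \(k \ge \log^C n\) with error \(2^{-k^{\Omega(1)}}\). Off-the-shelf non-malleable extractors and correlation breakers typically either demand polynomial entropy or incur polynomial error, and a naive composition would destroy these parameters well before the recursion terminates. Overcoming this almost certainly requires entropy-boosting via somewhere-random sources or block-source extraction, Cohen's flip-flop construction to amplify correlation breaking at the cost of only a single additional source per step, and meticulous parameter bookkeeping ensuring that each level of the recursion spends only a small constant fraction of the available entropy and at most mildly squares the error exponent. This delicate parameter tracking through the recursion, rather than any single novel gadget, is the technical heart of the argument.
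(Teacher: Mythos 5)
This theorem is imported verbatim from \cite{focs2020merged}; the paper you are reading supplies no proof of it, so there is no ``paper's own proof'' to line your attempt up against. What you have written is a speculative reconstruction, and as such it should be judged on whether it would actually close the gap. It would not, for a reason that is specific to the cylinder-intersection leakage model.

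The defining difficulty of an \((r,s)\)-LRE (equivalently, an extractor for cylinder intersections) is that each leak \(\Leak_S\) is an arbitrary \emph{joint} function of up to \(s-1\) of the input sources simultaneously. Your inductive step handles this by a case split, and the problematic case is exactly ``the leakage involves \(\X_{r+1}\).'' There, you appeal to ``the non-malleability of \(\mathbf{A}\) combined with the correlation breaker.'' But non-malleable extractors and correlation breakers protect against \emph{tampered inputs}: one conditions on outputs computed from adversarially modified but still \emph{separately-generated} versions of the sources, and independence between the clean sources is preserved throughout. Here, by contrast, you must condition on a value like \(\Leak(\X_1,\X_{r+1})\) that is a joint function of two of the inputs. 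After that conditioning, \(\X_1\) and \(\X_{r+1}\) are no longer independent, the advice-correlation-breaker pipeline's hypothesis of independent inputs is violated, and \cref{lem:entropy-drop} alone does not restore it (you can fix away entropy, but you cannot fix away a joint leak on two sources without giving up one of the two sources entirely --- and you cannot afford to give up \(\X_{r+1}\), which is the fresh source your step relies on). This is precisely why cylinder-intersection extractors are strictly harder than standard multi-source extractors and why \cite{kms,focs2020merged} are built on a fundamentally different foundation: explicit functions with strong number-on-forehead / cylinder-intersection correlation bounds, bootstrapped to low min-entropy via condensing, rather than a non-malleable-extractor-plus-correlation-breaker recursion. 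Your base case (a \((2,1)\)-LRE is a two-source extractor) is fine, but the induction does not go through, and the acknowledged ``hard part'' of parameter bookkeeping is not where the real obstruction lies.
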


By combining these explicit LREs with our explicit designs, we can finally prove \cref{thm:restated:adversarial}, which significantly improves the adversarial source extractors of \cite{adversarial-sources}.

\begin{proof}[Proof of \cref{thm:restated:adversarial}]
Let \(C\) be the same universal constant from \cref{thm:imported:LRE}, and let \(r\in\N\) be a sufficiently large (even) constant such that \(2/r<\delta\), and such that \cref{thm:imported:LRE} guarantees the existence of an explicit \((r,r-1)\)-leakage resilient extractor \(\Ext_0:(\zo^n)^r\to\zo^m\) for min-entropy \(k\geq\log^Cn\) with output length \(m=k^{\Omega(1)}\) and error \(\epsilon=2^{-k^{\Omega(1)}}\). For sufficiently large \(N\in\N\), \cref{thm:main:designs} guarantees the existence of an \((N,r,r-1)\)-design \(G\) with independence number \(\alpha<N^{\delta}\) that is computable in \(\poly({N\choose r})=\poly(N)\) time. The result now follows by instantiating \cref{lem:LRE-design-framework} with \(\Ext_0\) and \(G\).
\end{proof}

Next, we will show how to use our new and improved low-error extractors for adversarial sources (\cref{thm:restated:adversarial}) to obtain improved improved low-error extractors for small-space sources.
%%%%%%%%%%%%%%%%%%%%%%%%%%%%%%%%%%%%%%%%%%%%%%%%%%%%%%%%%%%%%%%%%%%%%%%%%%%%%%%%%%%%%%%%%%%%%%%%%%%%%%%%%%%%%%%%%%%%%%%%%%%%%%%%%%%%%%%%%%
% EXTRACTORS FOR ADVERSARIAL SOURCES
%%%%%%%%%%%%%%%%%%%%%%%%%%%%%%%%%%%%%%%%%%%%%%%%%%%%%%%%%%%%%%%%%%%%%%%%%%%%%%%%%%%%%%%%%%%%%%%%%%%%%%%%%%%%%%%%%%%%%%%%%%%%%%%%%%%%%%%%%%

%%%%%%%%%%%%%%%%%%%%%%%%%%%%%%%%%%%%%%%%%%%%%%%%%%%%%%%%%%%%%%%%%%%%%%%%%%%%%%%%%%%%%%%%%%%%%%%%%%%%%%%%%%%%%%%%%%%%%%%%%%%%%%%%%%%%%%%%%%
% SMALL-SPACE --> ADVERSARIAL REDUCTION
%%%%%%%%%%%%%%%%%%%%%%%%%%%%%%%%%%%%%%%%%%%%%%%%%%%%%%%%%%%%%%%%%%%%%%%%%%%%%%%%%%%%%%%%%%%%%%%%%%%%%%%%%%%%%%%%%%%%%%%%%%%%%%%%%%%%%%%%%%
\section{A reduction from small-space sources to adversarial sources}\label{sec:small-space}
In this section, we will show how to use our extractors from \cref{sec:adversarial-sources} to obtain better extractors for small-space sources (as defined by \cref{def:small-space-source}). We will prove the following.

\begin{theorem}[\cref{thm:main:space}, restated]\label{thm:restated:space}
For any fixed \(\delta\in(0,1/2]\) there is a constant \(C>0\) such that for all \(n,k,s\in\N\) satisfying \(k\geq Cn^{1/2+\delta}s^{1/2-\delta}\), there exists an explicit extractor \(\Ext:\zo^n\to\zo^m\) for space \(s\) sources of min-entropy \(k\), with output length \(m=n^{\Omega(1)}\) and error \(\epsilon=2^{-n^{\Omega(1)}}\).
\end{theorem}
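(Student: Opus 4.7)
The plan is to apply the standard \emph{chunk-and-fix} reduction from small-space sources to adversarial sources, and then invoke our improved adversarial extractor, \cref{thm:restated:adversarial}. Given a space-\(s\) source \(\X\) on \(\zo^n\) generated by a random walk on a branching program \(B\) of width \(2^s\), I would first partition the \(n\) output bits into \(t\) consecutive blocks of length \(n':=n/t\), where \(t\) is to be chosen below. Let \(\mathbf{V}=(\mathbf{V}_1,\ldots,\mathbf{V}_{t-1})\) denote the tuple of vertices visited by the walk at the block boundaries. Since each \(\mathbf{V}_i\) lies in a layer of \(B\) of width \(2^s\), the random variable \(\mathbf{V}\) takes at most \(2^{st}\) distinct values, and the Markov property of \(B\) ensures that, conditioned on any fixing \(\mathbf{V}=v\), the bits produced in distinct blocks are mutually independent.

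Next, I would apply \cref{lem:entropy-drop} with \(\ell=2^{st}\) to conclude that, except with probability \(2^{-k^{\Omega(1)}}\) over \(v\sim\mathbf{V}\), the conditional source \((\X\mid\mathbf{V}=v)\) has min-entropy at least \(k-st-k^{\Omega(1)}\). Picking \(t\) so that \(st\leq k/3\) then guarantees that for every good \(v\), the conditional source is a tuple of \(t\) independent sources over \(\zo^{n'}\) whose min-entropies sum to at least \(k/2\). A standard pigeonhole argument then yields the desired adversarial structure: for any target \(K,k'\) with \((K-1)n'+(t-K+1)k'<k/2\), at least \(K\) of the blocks must have min-entropy at least \(k'\), or else the total min-entropy would be strictly less than \(k/2\). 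Hence \((\X\mid\mathbf{V}=v)\) is a \((t,K,n',k')\)-adversarial source, to which I would apply \cref{thm:restated:adversarial}. Combining the two error contributions via \cref{fact:stat-dist:convex-combination} then proves the theorem.

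The main technical obstacle is balancing parameters: \(t\) must simultaneously satisfy \(st\leq k/3\) (to preserve entropy through the fixing step), \(tk'\leq k/8\) (so the Markov step leaves \(k'\geq\log^C n\) usable bits per good block), and \(Kn'\leq k/8\) with \(K\geq t^{\delta}\) (as required by \cref{thm:restated:adversarial} for the theorem's parameter \(\delta\)). The natural choice is \(t=\lfloor k/\max(3s,\,8\log^C n)\rfloor\), \(K=\lceil t^{\delta}\rceil\), and \(k'=\log^C n\), where \(C\) is the universal constant from \cref{thm:restated:adversarial}. The first two inequalities then hold by construction, and \(k'\geq\log^C n'\) is immediate. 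The substantive remaining inequality is \(Kn'\leq k/8\), which unwinds to
\[
n\cdot s^{1-\delta}\;\lesssim\;k^{2-\delta}.
\]
Substituting the hypothesis \(k\geq Cn^{1/2+\delta}s^{1/2-\delta}\) and comparing exponents, this reduces to \((2-\delta)(1/2+\delta)\geq 1\) on the \(n\) side and \((2-\delta)(1/2-\delta)\leq 1-\delta\) on the \(s\) side; both are equivalent to \(\delta(3/2-\delta)\geq 0\) and hence valid throughout \(\delta\in(0,1/2]\). The resulting error is \(2^{-k^{\Omega(1)}}\leq 2^{-n^{\Omega(1)}}\) (using \(k\geq n^{1/2+\delta}\)), and the output length is \((k')^{\Omega(1)}=n^{\Omega(1)}\), as claimed.
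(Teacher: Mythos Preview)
Your overall strategy---chunk, fix the boundary states, pigeonhole into an adversarial source, then apply \cref{thm:restated:adversarial}---is exactly the paper's approach. The parameter balancing for \(Kn'\le k/8\) is also essentially what the paper does. However, there is a genuine gap in the last two sentences.

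The error and output length of \cref{thm:restated:adversarial} are governed by the \emph{per-block} entropy \(k'\), not by the global entropy \(k\): the theorem gives error \(2^{-(k')^{\Omega(1)}}\) and output \((k')^{\Omega(1)}\). With your choice \(k'=\log^C n\), these become \(2^{-\polylog(n)}\) and \(\polylog(n)\), not \(2^{-n^{\Omega(1)}}\) and \(n^{\Omega(1)}\) as the theorem claims. Your final line conflates the two \(k\)'s and asserts \((k')^{\Omega(1)}=n^{\Omega(1)}\), which is false for \(k'=\log^C n\). Moreover, you cannot simply raise \(k'\): your constraint \(tk'\le k/8\) together with \(t\approx k/(8\log^C n)\) forces \(k'\le\log^C n\), so the problem is structural, not just a matter of retuning.

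The paper handles this by inserting a \emph{regrouping} step (hidden in \cref{thm:total-entropy-extractor}): after cutting into \(r=\alpha k/s\) blocks of length \(\ell=ns/(\alpha k)\), it rebuckets them into \(N\) blocks of length \(n_{\mathrm{adv}}\), choosing \(N=n_{\mathrm{adv}}=\sqrt{n}\) when \(r\ge\ell\) and \(N=r,\ n_{\mathrm{adv}}=\ell\) otherwise. This lets it take \(k'=n_{\mathrm{adv}}^{\delta_0}=n^{\Omega(1)}\) in either case, so the adversarial extractor's error and output are genuinely \(2^{-n^{\Omega(1)}}\) and \(n^{\Omega(1)}\). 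Adding this regrouping step to your argument (and redoing the pigeonhole with the new block length) would close the gap.
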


Until very recently, the best explicit extractor for this setting \cite{kamp2006small} required entropy \(k\geq Cn^{1-\gamma}s^{\gamma}\), where \(\gamma>0\) is a tiny constant and \(C\) is a large one. In \cite{adversarial-sources}, this requirement was significantly improved to \(k\geq Cn^{2/3+\delta}s^{1/3-\delta}\), for an arbitrarily small constant \(\delta>0\), and the current paper (\cref{thm:restated:space}) further improves this to \(k\geq Cn^{1/2+\delta}s^{1/2-\delta}\). Note that this line of improvements is strict, since we always have \(s<n\) (or else the bounds become trivial). In particular, for say \(s=n^\delta\) space, the entropy requirement has dropped from \(k\geq O(n^{1-\gamma})\) to \(k\geq O(n^{2/3+\delta})\) to \(k\geq O(n^{1/2+\delta})\).

Non-constructively, it is known \cite{kamp2006small} that there exist extractors for space \(s\) sources that have error \(\epsilon\) for min-entropy \(k\geq O(s+\log n+\log(1/\epsilon))\). Thus, while \cref{thm:restated:space} significantly improves the state-of-art in low-error extraction, there is still a lot of room for improvement. However, we note (in \cref{rem:root-n-barrier}) that any substantial improvements to our low-error extractors (i.e., beyond entropy requirement \(k\geq\sqrt{n}\)) will require a new type of reduction that bypasses the need for so-called \emph{total-entropy sources}, which are used in \cite{kamp2006small,adversarial-sources} and are used here. We will see exactly such a reduction in \cref{sec:space:polylog-entropy}. (It will allow us to obtain near-optimal extractors with \emph{polynomial error}. To obtain improved low-error extractors using this new reduction, one needs improved low-error affine extractors.)

%Furthermore, in the large-error setting, Chattopadhyay and Li \cite{chattopadhyay2016extractors} constructed an extractor for small-space sources that requires just \(k\geq n^{o(1)}\) entropy, but has error \(\epsilon=n^{-\Omega(1)}\). Thus, while \cref{thm:restated:space} significantly improves the state-of-art in low-error extraction, there is still a lot of room for improvement. However, we note (in \cref{rem:root-n-barrier}) that any substantial improvements to our low-error extractors will require significantly new techniques.

We now proceed to prove \cref{thm:restated:space}. The techniques that follow, which will reduce the task of extracting from small space sources to the task of extracting from adversarial sources, are just slightly optimized versions of the exact arguments that appear in \cite{kamp2006small,adversarial-sources}. However, we include them here for completeness. The first step is to reduce small-space sources to a class of sources known as \emph{total entropy sources}, defined as follows.

\begin{definition}\label{sec:end_tot_def}
A random variable \(\X\) over \((\zo^\ell)^r\) is an \emph{\((r,\ell,k)\)-total entropy source} if \(\X=(\X_1,\X_2,\dots,\X_r)\), where each \(\X_i\) is an independent source over \(\zo^\ell\), and \(\sum_{i\in[r]}H_\infty(\X_i)\geq k\).
\end{definition}

In \cite{kamp2006small}, Kamp et al. showed that upon fixing a few positions in the random walk that generates a small space source \(\X\), it is straightforward to use \cref{lem:entropy-drop} to show that \(\X\) becomes a total-entropy source, with high probability. We include the proof for completeness.

\begin{lemma}[\hspace{1sp}\cite{kamp2006small}]\label{lem:small-space-to-TEI}
Let \(\X\) be a space \(s\) source over \(\zo^n\) with min-entropy \(k\). Then for any \(\alpha\in(0,1/4]\) such that \(r=\alpha k/s\) and \(\ell=ns/(\alpha k)\) are positive integers, it holds that \(\X\) is \(2^{-k/4}\)-close to a convex combination of \((r,\ell,k/2)\)-total entropy sources.
\end{lemma}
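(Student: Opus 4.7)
My plan is to mimic the standard argument of Kamp--Rao--Vadhan--Zuckerman: cut the branching program into $r$ equal-length blocks by identifying $r-1$ ``checkpoint'' layers, argue that fixing the vertices reached at these checkpoints turns the chunks into independent sources, and use the entropy chain rule (\cref{lem:entropy-drop}) to show that the total entropy is approximately preserved under this fixing.

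More concretely, let $\X$ be generated by a random walk on a branching program of width $2^s$ and length $n$, and think of the $n$ output bits as a concatenation $\X = (\X_1,\dots,\X_r)$ of $r$ consecutive chunks of length $\ell = n/r$ each (note $r\ell = n$ by the hypothesis on the integrality of $r,\ell$). Let $\mathbf{V} = (V_1,\dots,V_{r-1})$ denote the random variables giving the vertices visited at positions $\ell,2\ell,\dots,(r-1)\ell$ along the random walk. The key structural observation is that, conditioned on $\mathbf{V}=v$, the random walk on each chunk only depends on its prescribed start and end vertices; hence the chunks $\X_1,\dots,\X_r$ become mutually independent. In particular, for any fixing $\mathbf{V} = v$,
\[
H_\infty((\X_1,\dots,\X_r) \mid \mathbf{V}=v) \;=\; \sum_{i=1}^{r} H_\infty(\X_i \mid \mathbf{V}=v).
\]

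The remaining step is to lower-bound the LHS for typical $v$. Since each $V_i$ takes at most $2^s$ values, $\mathbf{V}$ takes at most $2^{s(r-1)} \leq 2^{sr} = 2^{\alpha k}$ values. Applying \cref{lem:entropy-drop} with $\epsilon = 2^{-k/4}$, with probability at least $1 - 2^{-k/4}$ over $v \sim \mathbf{V}$,
\[
H_\infty(\X \mid \mathbf{V}=v) \;\geq\; H_\infty(\X) - sr - k/4 \;\geq\; k - \alpha k - k/4 \;\geq\; k/2,
\]
where the last inequality uses $\alpha \leq 1/4$. Combined with the independence decomposition above, this means that $(\X \mid \mathbf{V}=v)$ is an $(r,\ell,k/2)$-total entropy source for all such ``good'' $v$. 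Interpreting $\X$ as the convex combination $\E_{v\sim\mathbf{V}}[(\X\mid\mathbf{V}=v)]$ and absorbing the $2^{-k/4}$-fraction of bad $v$'s into the statistical distance gives the claimed $2^{-k/4}$-closeness.

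I do not anticipate a genuine obstacle: the only things to be careful about are (i) that the chunks are truly independent after fixing $\mathbf{V}$ (which is just the Markov property of the random walk on the branching program, so it should be stated but not belabored), and (ii) that $H_\infty$ of a product of independent sources equals the sum of the individual min-entropies (a standard identity). The parameter bookkeeping ($\alpha \leq 1/4$ so that $sr + k/4 \leq k/2$) is straightforward once the reduction is set up.
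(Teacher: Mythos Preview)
Your proposal is correct and follows essentially the same argument as the paper: fix the checkpoint vertices between the $r$ length-$\ell$ chunks, invoke the Markov property to get independence, and apply \cref{lem:entropy-drop} with $\epsilon=2^{-k/4}$ to retain at least $k-\alpha k-k/4\geq k/2$ total min-entropy with probability $\geq 1-2^{-k/4}$. The only cosmetic difference is that the paper fixes $r$ layer variables (including a trivial first one) rather than your $r-1$, which does not affect the arithmetic.
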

\begin{proof}
For each \(i\in[n]\), let \(\mathbf{W}_i\sim\zo^s\) be the random variable denoting the state reached in layer \(i\) of the branching program in the random walk that generates \(\X\). Observe that fixing any \(\mathbf{W}_i\) breaks \(\X\) into two independent sources. More generally, observe that if we define \(\mathbf{W}^\ast := (\mathbf{W}_{i\ell+1})_{i\in[0,r-1]}\), then if we condition \(\X\) on any fixing of \(\mathbf{W}^\ast\), it must hold that \(\X\) becomes an \((r,\ell,\Gamma)\)-total entropy source, for \emph{some} \(\Gamma\). Furthermore, by \cref{lem:entropy-drop}, we know
\begin{equation}\label{eq:apply-entropy-drop}
\Pr_{w\sim\mathbf{W}^\ast}[H_\infty(\X\mid\mathbf{W}^\ast=w)\geq k-rs-k/4=k-\alpha k - k/4\geq k/2]\geq 1-2^{-k/4}.
\end{equation}
Thus, the random variable \((\X\mid\mathbf{W}^\ast=w)\) is an \((r,\ell,k/2)\)-total entropy source with probability at least \(1-2^{-k/4}\) over \(w\sim\mathbf{W}^\ast\), which completes the proof.
\end{proof}

The next step is to show that a total-entropy source looks like an adversarial source, using a standard Markov-type argument:

\begin{lemma}\label{lem:TEI-to-adversarial}
Let \(\X\) be an \((r,\ell,\Gamma)\)-total entropy source. Then for any \(N,K,n,k\in\N\) with \(Nn=r\ell\) and \(n\) a multiple of \(\ell\), \(\X\) is also an \((N,K,n,k)\)-adversarial source, as long as \(Kn + Nk\leq\Gamma\).
\end{lemma}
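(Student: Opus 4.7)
The plan is to show this by a simple regrouping argument combined with a Markov-style contradiction. Since $n$ is a multiple of $\ell$ and $Nn = r\ell$, the $r$ independent blocks $\X_1,\dots,\X_r$ of length $\ell$ can be partitioned into $N$ consecutive groups of $n/\ell$ blocks each, forming new random variables $\Y_1,\dots,\Y_N$ over $\zo^n$. First I would verify the easy structural claim: because the $\X_i$ are independent, so are the $\Y_j$, and min-entropy is additive over independent concatenation, i.e.\ $H_\infty(\Y_j)=\sum_{i\in\text{group }j}H_\infty(\X_i)$. In particular, $\sum_{j\in[N]}H_\infty(\Y_j)=\sum_{i\in[r]}H_\infty(\X_i)\geq\Gamma$.

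Next I would prove that at least $K$ of the $\Y_j$ have min-entropy $\geq k$, by contradiction. Suppose at most $K-1$ of them do; then at least $N-K+1$ of the blocks have $H_\infty(\Y_j)<k$, while the remaining (at most $K-1$) blocks trivially satisfy $H_\infty(\Y_j)\leq n$. Summing,
\[
\sum_{j\in[N]}H_\infty(\Y_j)<(K-1)n+(N-K+1)k \leq Kn+Nk \leq \Gamma,
\]
which contradicts the total-entropy lower bound above. (The middle inequality is just $-n+k\leq 0$, which we may assume since otherwise $n\leq k$ makes the statement degenerate; in that corner case the adversarial guarantee holds trivially.) Hence at least $K$ of the $\Y_j$ are $(n,k)$ sources, and $\X=(\Y_1,\dots,\Y_N)$ is an $(N,K,n,k)$-adversarial source by \cref{def:intro:adversarial-sources}.

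There is no real obstacle here; the only subtlety worth flagging is making sure the regrouping is compatible with the definitions, i.e.\ that the divisibility condition $\ell\mid n$ together with $Nn=r\ell$ is exactly what is needed to identify $(\zo^\ell)^r$ with $(\zo^n)^N$ without splitting any individual $\X_i$ across two $\Y_j$'s. Once that is in place, everything else is additivity of min-entropy across independent coordinates plus the trivial counting argument above.
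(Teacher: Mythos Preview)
Your proposal is correct and follows essentially the same approach as the paper: regroup the $r$ length-$\ell$ blocks into $N$ length-$n$ blocks, then derive a contradiction by bounding the total min-entropy if fewer than $K$ blocks have entropy $\geq k$. (One tiny note: your ``middle inequality'' $(K-1)n+(N-K+1)k\leq Kn+Nk$ actually holds unconditionally since the difference is $n+(K-1)k\geq 0$, so the corner-case discussion is unnecessary.)
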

\begin{proof}
By definition of total-entropy source, \(\X=(\X_1,\dots,\X_r)\), where each \(\X_i\) is an independent source over \(\zo^\ell\). By collecting the sources \(\X_i\) into \(N\) buckets containing \(n/\ell\) sources each, we see that \(\X\) is also an \((N,n,\Gamma)\)-total entropy source, and may be rewritten as \(\X=(\X_1,\dots,\X_N)\), where each \(\X_i\) is an independent source over \(\zo^n\). If \(\X\) were not an \((N,K,n,k)\)-adversarial source, then the \(K-1\) highest entropy sources in \(\X\) each have entropy at most \(n\), and the remaining each have entropy \(<k\). This yields \(H_\infty(\X)=\Gamma<(K-1)n+(N-(K-1))k<Kn+Nk\), contradicting the given lower bound on \(\Gamma\).
\end{proof}

Given the above reduction, we can now use our improved adversarial source extractors (from \cref{thm:main:adversarial}) to give improved extractors for total-entropy sources.

\begin{theorem}\label{thm:total-entropy-extractor}
For any fixed \(\delta>0\) and all sufficiently large \(r,\ell,\Gamma\in\N\) with \(\Gamma\geq\max\left\{(r\ell)^{1/2+\delta},r^\delta\ell\right\}\), there exists an explicit extractor \(\Ext:(\zo^\ell)^r\to\zo^m\) for \((r,\ell,\Gamma)\)-total entropy sources, with output length \(m=(r\ell)^{\Omega(1)}\) and error \(\epsilon=2^{-(r\ell)^{\Omega(1)}}\).
\end{theorem}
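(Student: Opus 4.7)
The plan is to combine \cref{lem:TEI-to-adversarial} with \cref{thm:restated:adversarial}. Given an \((r,\ell,\Gamma)\)-total entropy source \(\X\), I first apply \cref{lem:TEI-to-adversarial} to reinterpret \(\X\) as an \((N,K,n',k')\)-adversarial source with parameters chosen so that the hypotheses of \cref{thm:restated:adversarial} hold, and I then invoke \cref{thm:restated:adversarial} to explicitly extract. Operationally this means grouping the \(r\) coordinates of \(\X\) into \(N = r\ell/n'\) buckets of size \(n'\) each (with \(n'\) a multiple of \(\ell\)), and viewing the result as \(N\) independent sources over \(\zo^{n'}\).

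The bucket size \(n'\) is chosen based on which of the two hypothesized lower bounds on \(\Gamma\) is binding. When \(r \geq \ell\), I take \(n' \approx \sqrt{r\ell}\) (rounded up to a multiple of \(\ell\)), so \(N \approx \sqrt{r\ell}\). The constraint \(Kn' + Nk' \leq \Gamma\) becomes roughly \(\sqrt{r\ell}\,(K+k') \leq \Gamma\), and the hypothesis \(\Gamma \geq (r\ell)^{1/2+\delta}\) lets me take \(K = k' = \Theta((r\ell)^{\delta})\); this satisfies \(K \geq N^{\delta}\) and \(k' \geq \log^C n'\) as required by \cref{thm:restated:adversarial}. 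When \(r < \ell\), I instead take \(n' = \ell\), so \(N = r\). The bound \(\Gamma \geq r^{\delta}\ell\) then lets me set \(K = \Theta(r^{\delta/2})\) and \(k' = \Theta(\ell\, r^{\delta/2-1})\), which is \(\Omega(\ell^{\delta/2})\) since \(r < \ell\); once again \(K \geq N^{\delta/2}\) and \(k' \geq \log^C n'\).

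In both regimes \(k' \geq (r\ell)^{\Omega(1)}\), so applying \cref{thm:restated:adversarial} to the resulting adversarial source yields an explicit extractor with output length \(m = k'^{\Omega(1)} = (r\ell)^{\Omega(1)}\) and error \(\epsilon = 2^{-k'^{\Omega(1)}} = 2^{-(r\ell)^{\Omega(1)}}\), as desired.

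\textbf{Main obstacle.} The main obstacle is not conceptual but rather careful parameter bookkeeping. One must check that the rounding of \(n'\) up to a multiple of \(\ell\) does not meaningfully degrade any of the bounds (losing at most a factor of two in \(n'\) and \(N\)), and verify in both regimes that the small entropy threshold \(\log^C n'\) is comfortably beaten by the polynomial \(k'\) extracted. A minor subtlety is that the \(\delta\) parameter fed into \cref{thm:restated:adversarial} should be taken somewhat smaller than the \(\delta\) in this theorem's hypothesis, in order to absorb the constant-factor slack in the bucket-size inequalities.
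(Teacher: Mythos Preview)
The proposal is correct and follows essentially the same approach as the paper: reduce to adversarial sources via \cref{lem:TEI-to-adversarial} and invoke \cref{thm:restated:adversarial}, splitting into the two cases \(r\geq\ell\) (where one sets \(N\approx n'\approx\sqrt{r\ell}\)) and \(r<\ell\) (where one sets \(N=r,\ n'=\ell\)), and passing a smaller exponent \(\delta_0\) to the adversarial extractor to absorb slack. Your concrete choices of \(K,k'\) differ cosmetically from the paper's (which takes \(K=N^{\delta_0},\ k=n^{\delta_0}\) with \(\delta_0=\delta/2\)), and you are slightly more careful than the paper about the rounding of \(n'\) to a multiple of \(\ell\), but the argument is the same.
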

\begin{proof}
Fix any \(N,n\in\N\) such that \(Nn=r\ell\) and \(n\) is a multiple of \(\ell\). By \cref{lem:TEI-to-adversarial}, every \((r,\ell,\Gamma)\)-total entropy source is also an \((N,K,n,k)\)-adversarial source, provided \(Kn+Nk\leq\Gamma\). Thus, by \cref{thm:main:adversarial}, for any fixed \(\delta_0>0\) there exists an explicit extractor \(\Ext_0:(\zo^r)^\ell\to\zo^{m}\) for \((r,\ell,\Gamma)\)-total entropy sources with output length \(m=n^{\Omega(1)}\) and error \(\epsilon=2^{-n^{\Omega(1)}}\), provided \(N^{\delta_0}n+Nn^{\delta_0}\leq\Gamma\) and \(N,n\) are sufficiently large. To achieve the parameters claimed in the theorem statement, pick \(\delta_0=\delta/2\) and set \(N,n\) as follows: (i) if \(r\geq\ell\), set \(N=n=\sqrt{r\ell}\); (ii) if \(r<\ell\), set \(N=r\) and \(n=\ell\). We conclude by remarking that this casework was motivated by trying to minimize the requirement on \(\Gamma\) by setting \(N=n\). This is not possible in case (ii), but is possible in case (i) by assuming, without loss of generality, that \(r=x^2\ell\) for some \(x\in\N\).
\end{proof}

Previously, the best low-error explicit extractors for total-entropy sources \cite{adversarial-sources} required entropy \(\Gamma\geq\max\{(r\ell)^{2/3+\delta},r^{1/2+\delta}\ell\}\). Non-constructively, we know it is possible \cite{kamp2006small} to achieve an entropy requirement of \(\Gamma\geq O(\ell + \log r)\) and error of \(2^{-\Omega(\Gamma)}\). Thus, while there is still a lot of room to give improved explicit extractors for total-entropy sources, we remark that our total-entropy extractor is almost optimal when the source consists of ``a few long sources'':
\begin{remark}\label{rem:total-entropy-almost-optimal}
The entropy requirement in \cref{thm:total-entropy-extractor} becomes \(k\geq\ell^{1+\delta}\) when \(\ell\geq r\), which is close to the optimal requirement of \(k\geq O(\ell)\).
\end{remark}

Finally, we show how to combine our improved explicit extractors for total-entropy sources (\cref{thm:total-entropy-extractor}) with the standard reduction from small-space sources to total-entropy sources (\cref{lem:small-space-to-TEI}) to complete the proof of \cref{thm:restated:space}:

\begin{proof}[Proof of \cref{thm:restated:space}]
Fix any \(\delta\in(0,1/2]\), and let \(\alpha\in(0,1/4]\) be a sufficiently small constant and \(C>0\) a sufficiently large constant. Given a space \(s\) source \(\X\) over \(\zo^n\) with entropy \(k\geq Cn^{1/2+\delta}s^{1/2-\delta}\), we know by \cref{lem:small-space-to-TEI} that \(\X\) is \(\epsilon_0=2^{-k/4}\)-close to a convex combination of \((r,\ell,k/2)\)-total entropy sources, where \(r=\alpha k/s\) and \(\ell=ns/(\alpha k)\). (Here we assume \(r,\ell\in\N\), but it is easy to extend the argument when this is not the case.) In particular, this means there is some random variable \(\Y\) such that with probability at least \(1-\epsilon_0\) over \(y\sim\Y\), the random variable \((\X\mid\Y=y)\) is an \((r,\ell,k/2)\)-total entropy source.

Let \(\Ext_0:(\zo^\ell)^r\to\zo^m\) be the extractor from \cref{thm:total-entropy-extractor} for such total-entropy sources. We will argue that it also an extractor for the small-space source \(\X\). Notice we have \(|\Ext_0(\X)-\U_m|\leq\E_{y\sim\Y}[|\Ext_0(\X\mid\Y=y)-\U_m|]\leq \epsilon_0+|\Ext_0(\X^\pr)-\U_m|\), where \(\X^\pr\) is some \((r,\ell,k/2)\)-total entropy source. If we can argue that \(r,\ell,k/2\) are sufficiently large and \(k/2\geq\max\{(r\ell)^{1/2+\delta},r^\delta\ell\}\), then \cref{thm:total-entropy-extractor} tells us that \(|\Ext_0(\X)-\U_m|\leq\epsilon_0+|\Ext_0(\X^\pr)-\U_m|\leq 2^{-k/4}+2^{-(r\ell)^{\Omega(1)}}=2^{-n^{\Omega(1)}}\) and \(m=(r\ell)^{\Omega(1)}=n^{\Omega(1)}\), which would prove the current theorem. We know that \(r,\ell,k/2\) are sufficiently large because \(r=\alpha k/s\geq \alpha C(n/s)^{1/2+\delta}\geq\alpha C\), and \(\ell=ns/(\alpha k)\geq1/\alpha\), and \(k\geq C\), where \(\alpha\) is sufficiently small and \(C\) is sufficiently large. Next, we know \(k/2\geq(r\ell)^{1/2+\delta}=n^{1/2+\delta}\) by the provided lower bound on \(k\). Finally, to show \(k/2\geq r^\delta\ell=(\alpha k/s)^\delta ns/(\alpha k)\), rearrange the inequality to obtain \(k^{2-\delta}\geq 2 \alpha^{\delta-1}s^{1-\delta} n\), plug in the provided lower bound on \(k\) to obtain \((Cn^{1/2+\delta}s^{1/2-\delta})^{2-\delta}\geq2\alpha^{\delta-1}s^{1-\delta}n\), and observe that it therefore suffices to show \((0.5C^{2-\delta}\alpha^{1-\delta})\cdot n^{(1/2+\delta)(2-\delta)-1}\geq s^{1-\delta-(2-\delta)(1/2-\delta)}\), or rather
\[
(0.5C^{2-\delta}\alpha^{1-\delta})\cdot n^{2\delta-\delta/2-\delta^2}\geq s^{2\delta-\delta/2-\delta^2}.
\]
This holds because \(n\geq s\) (otherwise the provided lower bound on \(k\) gives \(k>n\)), because \(2\delta-\delta/2-\delta^2\geq 0\) over \(\delta\in(0,1/2]\), and because \(C\) is sufficiently large.
\end{proof}

We conclude this section with a remark about the \(\sqrt{n}\) ``barrier'' in this reduction technique.

\begin{remark}\label{rem:root-n-barrier}
It is not possible to obtain an entropy requirement of \(k<\sqrt{n}\) using the reduction from small-space sources to total-entropy sources from \cref{lem:small-space-to-TEI}, no matter how \(r,\ell\) are set. This is because \(r\ell=n\) implies either (i) \(\ell\geq\sqrt{n}\), or (ii) \(r\geq\sqrt{n}\). In case (i), all of the entropy could be trapped in a single source of length \(>k\), from which extraction is impossible. In case (ii), the application of \cref{eq:apply-entropy-drop} in \cref{lem:small-space-to-TEI} leaves the source with \(0\) bits of entropy, from which extraction is impossible.
\end{remark}

In the following section, we will give a \emph{new} reduction that allows us to bypass the \(\sqrt{n}\) barrier (for polynomial error). We are able to do this because (like in \cite{chattopadhyay2016extractors}), we reduce to a type of independent sources whose lengths need not be determined ahead of time. Unlike total-entropy sources, this will allow us to recurse whenever we get stuck in a tricky situation like case (i) in \cref{rem:root-n-barrier}.
%%%%%%%%%%%%%%%%%%%%%%%%%%%%%%%%%%%%%%%%%%%%%%%%%%%%%%%%%%%%%%%%%%%%%%%%%%%%%%%%%%%%%%%%%%%%%%%%%%%%%%%%%%%%%%%%%%%%%%%%%%%%%%%%%%%%%%%%%%
% SMALL-SPACE --> ADVERSARIAL REDUCTION
%%%%%%%%%%%%%%%%%%%%%%%%%%%%%%%%%%%%%%%%%%%%%%%%%%%%%%%%%%%%%%%%%%%%%%%%%%%%%%%%%%%%%%%%%%%%%%%%%%%%%%%%%%%%%%%%%%%%%%%%%%%%%%%%%%%%%%%%%%

%%%%%%%%%%%%%%%%%%%%%%%%%%%%%%%%%%%%%%%%%%%%%%%%%%%%%%%%%%%%%%%%%%%%%%%%%%%%%%%%%%%%%%%%%%%%%%%%%%%%%%%%%%%%%%%%%%%%%%%%%%%%%%%%%%%%%%%%%%
% SMALL-SPACE --> AFFINE REDUCTION
%%%%%%%%%%%%%%%%%%%%%%%%%%%%%%%%%%%%%%%%%%%%%%%%%%%%%%%%%%%%%%%%%%%%%%%%%%%%%%%%%%%%%%%%%%%%%%%%%%%%%%%%%%%%%%%%%%%%%%%%%%%%%%%%%%%%%%%%%%
\section{A reduction from small-space sources to affine sources}\label{sec:space:polylog-entropy}
In this section, we construct extractors for small-space sources that can handle just polylogarithmic entropy in the polynomial error regime, proving \cref{thm:MAIN:small-space-polylog-entropy}.

% also put the low error extractor here !!! low error boi!!
\begin{theorem}[\cref{thm:MAIN:small-space-polylog-entropy}, restated]\label{thm:technical:high-error-small-space}
There exists a universal constant \(C>0\) such that for all \(n,k,s\in\N\) satisfying \(k\geq s\cdot\log^C(n)\), there exists an explicit extractor \(\Ext:\zo^n\to\zo^m\) for space \(s\) sources with min-entropy \(k\), which has output length \(m=(k/s)^{\Omega(1)}\) and error \(\epsilon=n^{-\Omega(1)}\).
\end{theorem}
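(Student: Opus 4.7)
The plan is to reduce small-space sources to affine sources via \cref{thm:MAIN:structural-result} and then invoke an off-the-shelf affine extractor. Let \(C\) be a sufficiently large constant, fix any space \(s\) source \(\X\) over \(\zo^n\) with min-entropy \(k \geq s\log^C n\), and apply \cref{thm:MAIN:structural-result} to obtain a random variable \(\mathbf{Y}\) and a set \(\mathcal{G} \subseteq \supp(\mathbf{Y})\) with \(\Pr[\mathbf{Y} \in \mathcal{G}] \geq 1 - 2^{-\Omega(k)}\) such that \((\X \mid \mathbf{Y} = y)\) is an affine source of min-entropy \(k' := \Omega(k/(s\log(n/k)))\) for every \(y \in \mathcal{G}\). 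Since \(k/s \geq \log^C n\) and \(\log(n/k) \leq \log n\), we have \(k' \geq \log^{C-1} n\), which exceeds any prescribed polylog threshold once \(C\) is large enough.

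Next, instantiate the explicit affine extractor of Li \cite{li2016improved}, which for any sufficiently large polylog entropy bound yields an explicit \(\Ext_0 : \zo^n \to \zo^{m_0}\) for affine sources of min-entropy \(k'\) with output length \(m_0 = k'^{\Omega(1)}\) and error \(\epsilon_0 = n^{-\Omega(1)}\). Set \(\Ext := \Ext_0\). For each \(y \in \mathcal{G}\) the affine guarantee gives \(|\Ext(\X \mid \mathbf{Y}=y) - \U_{m_0}| \leq \epsilon_0\). Averaging over \(\mathbf{Y}\) via \cref{fact:stat-dist:convex-combination}, and bounding the statistical distance on the complement \(\mathbf{Y} \notin \mathcal{G}\) by \(1\), yields
\[
|\Ext(\X) - \U_{m_0}| \leq 2^{-\Omega(k)} + \epsilon_0 = n^{-\Omega(1)},
\]
using \(k \geq \log^C n \gg \log n\). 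For the output length, \(m_0 = k'^{\Omega(1)} = (k/(s \log(n/k)))^{\Omega(1)}\); since the \(\log(n/k) \leq \log n\) factor is a mere polylog compared to \(k/s \geq \log^C n\), it can be absorbed by slightly shrinking the \(\Omega(1)\) exponent, giving the claimed \(m = (k/s)^{\Omega(1)}\).

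Thus the theorem is essentially a two-line reduction once \cref{thm:MAIN:structural-result} is in hand, and the real content of the proof is the structural result itself. The proof I would give for it follows the win-win strategy of \cref{sec:overview}: chop the branching-program walk into \(t \approx k/s\) chunks, consolidate the intermediate layer vertices into a single random variable \(\mathbf{W}^\ast\), and apply a recursive dichotomy built on the observation of \cite{adversarial-sources} that every entropy-\(\geq 1\) source is a convex combination of entropy-\(1\) affine sources. Either \(t'\) of the chunks each carry entropy \(\geq 1\) (and we are done, since their joint distribution is then a convex combination of affine sources of min-entropy \(t'\)), or fewer than \(t'\) chunks carry any entropy at all and the residual \(\geq k - st\) bits of entropy are concentrated on those strictly fewer, strictly shorter sub-programs, on which we recurse. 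The main obstacle, and the reason this beats the \(k \geq 2^{\log^{0.51} n}\) bound of \cite{chattopadhyay2016extractors}, is keeping the reduction error at \(2^{-\Omega(k)}\) rather than \(2^{-k^{\Omega(1)}}\): this is handled by encoding \emph{all} adaptive fixings across \emph{all} recursion levels into the single random variable \(\mathbf{W}^\ast\) and invoking \cref{lem:entropy-drop} only once, so that the \(O(\log(n/k))\)-depth recursion produces only a single multiplicative polylog loss in the final affine min-entropy.
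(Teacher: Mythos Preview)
Your proposal is correct and matches the paper's proof essentially verbatim: the paper also derives \cref{thm:technical:high-error-small-space} in one line by combining the structural reduction (\cref{lem:small-space-to-affine}, i.e.\ \cref{thm:MAIN:structural-result}) with Li's affine extractor (\cref{thm:affine-extractor-li}), and your sketch of the structural result's proof accurately tracks the paper's adaptive chunking and single-shot application of \cref{lem:entropy-drop}.
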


The main tool we use to prove this theorem is a new reduction from small-space sources to affine sources. As we have seen, an affine source is simply a uniform distribution over some affine subspace of \(\F_2^n\). It will be useful, however, to have the following formal definition.

\begin{definition}[Affine source]\label{def:affine-source:main}
A distribution \(\X\) over \(\F_2^n\) is an \emph{affine source} with min-entropy \(k\) if there exists some \emph{shift vector} \(v_0\in\F_2^n\) and linearly independent basis vectors \(v_1,v_2,\dots,v_k\in\F_2^n\) such that \(\X\) is generated by sampling \(k\) bits uniformly at random \(\mathbf{x}_1,\mathbf{x}_2,\dots,\mathbf{x}_k\sim\F_2\) and computing \(v_0+\sum_{i\in[k]}\mathbf{x}_iv_i\).
\end{definition}

Given this definition, we are now ready to define the main lemma used in proving \cref{thm:technical:high-error-small-space}.

%main way we prove this is reduction to affine sources, which we introduce below...
%Affine source is uniform over some affine subspace of \(\F_2^n\).
% if we say two RVs are equal, we really mean their distributions are equal
% probably should move this to preliminaries... it's distracting here ...
%In order to prove \cref{thm:technical:high-error-small-space}, we prove the following.

%%%

% do we want this as a lemma or theorem?
%%%*** define flat sources somewhere
\begin{lemma}[\cref{thm:MAIN:structural-result}, restated]\label{lem:small-space-to-affine}
Let \(\X\) be a space \(s\) source over \(\zo^n\) with min-entropy \(k\). Then \(\X\) is \(\epsilon\)-close to a convex combination of affine sources with min-entropy \(\Gamma\), where
\[
\Gamma=\Omega\left(\frac{k}{s\log(n/k)}\right),
\]
and \(\epsilon=2^{-\Omega(k)}\).
\end{lemma}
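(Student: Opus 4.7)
The plan is to build a recursive procedure that repeatedly fixes layer states of the branching program at adaptively chosen positions, and to run a win-win argument showing that the resulting conditional distribution is a convex combination of affine sources of entropy $\Gamma$. The core ingredient, borrowed from \cite{adversarial-sources}, is the observation that any distribution with min-entropy $\geq 1$ is a convex combination of affine sources with entropy $1$. Thus, if after some conditioning I can identify $\Gamma$ independent sub-sources of $\X$ each with min-entropy $\geq 1$, their concatenation is a convex combination of affine sources with entropy $\Gamma$; tensoring with the (independent) remainder of $\X$ preserves this, since products of affine sources are again affine with additive entropies.

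Concretely, I would set $R = \lceil\log(n/k)\rceil + O(1)$, $t = 2\Gamma$, and $\Gamma = \Theta(k/(Rs))$. At recursion level $i \in \{0,1,\ldots,R-1\}$ I maintain an active small-space source $\X^{(i)}$ of length $n_i$, with $\X^{(0)}=\X$ and $n_0=n$. I fix the layer states of $\X^{(i)}$'s branching program at $t$ equally-spaced positions, partitioning it into $t$ independent chunks of length $n_i/t$ each. In \textbf{Case 1} at least $\Gamma$ chunks have conditional min-entropy $\geq 1$, and the procedure halts. In \textbf{Case 2} fewer than $\Gamma$ chunks are this good, and I recurse on the joint distribution of the high-entropy chunks, which is itself a space-$s$ source (obtained by concatenating their sub-branching-programs) of length $n_{i+1} \leq (\Gamma-1)n_i/t < n_i/2$.

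To obtain error $2^{-\Omega(k)}$ via only a single use of \cref{lem:entropy-drop}, I let $\Y=(\Y_0,\ldots,\Y_{R-1})$ collect every layer-state fixing made across the recursion, where $\Y_i$ is defined conditionally on $\Y_{<i}$. Then $|\supp(\Y)| \leq 2^{Rst} = 2^{2Rs\Gamma}$, so with probability $\geq 1-\epsilon$ over $y \sim \Y$ we have $H_\infty(\X \mid \Y=y) \geq k - 2Rs\Gamma - \log(1/\epsilon)$; call such $y$ \emph{good}. For each good $y$ I would argue by contradiction that Case 1 must trigger at some level $\leq R-1$: otherwise every level contributes less than $t = 2\Gamma$ bits of entropy from its discarded (entropy-$<1$) chunks, totalling less than $2R\Gamma$, while the final active source has length at most $n/2^R = O(k)$, so $H_\infty(\X \mid \Y=y) < 2R\Gamma + O(k)$, contradicting the entropy drop bound once the hidden constant in $\Gamma$ is taken small enough and $\epsilon=2^{-\Omega(k)}$. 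Combined with the opening observation (and closure of affine sources under product), every good $y$ thus induces a conditional distribution on $\zo^n$ that is a convex combination of affine sources of entropy $\geq \Gamma$, proving the claim.

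The main obstacle I anticipate is cleanly setting up $\Y$ so that its support size factors across levels uniformly despite the adaptive (fixing-dependent) choice of which layers to fix at each level. I expect to handle this by viewing $\Y$ as a deterministic projection of the full layer-state trajectory $(\mathbf{W}_1,\ldots,\mathbf{W}_n)$ of the random walk onto the (at most $Rt$) coordinates traversed along whichever branch of the recursion tree the walk induces, so that the cumulative bound $2^{Rst}$ holds uniformly no matter which branch is taken.
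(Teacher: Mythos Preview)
Your proposal is correct and follows essentially the same strategy as the paper: an adaptive, recursive splitting of the branching program with a win-win argument, where all layer fixings are bundled into a single random variable so that \cref{lem:entropy-drop} is applied only once to get error $2^{-\Omega(k)}$. The paper's version differs only cosmetically---it always retains the top-$\Gamma$ chunks and halves each (rather than retaining the entropy-$\geq 1$ chunks and re-splitting the concatenation into $2\Gamma$ pieces), and it manages all slices as intervals of the original source rather than re-forming a fresh space-$s$ source at each level---but the invariants (active length halves, discarded chunks have entropy $<1$, at most $O(\Gamma)$ layer fixings per level) and the final parameter settings are the same.
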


Before proving \cref{lem:small-space-to-affine}, we use it to prove \cref{thm:technical:high-error-small-space}. We recall the standard fact that if an extractor works for each source \(\X\) in a family \(\mathcal{X}\) of distributions, then it also works for any convex combination of sources from that family. In particular, this means that any extractor for affine sources is automatically an extractor for small-space sources, by \cref{lem:small-space-to-affine}. The following affine extractor of Li \cite{li2016improved}, which can handle polylogarithmic entropy, will be of particular interest.

\begin{theorem}[\hspace{1sp}\cite{li2016improved}]\label{thm:affine-extractor-li}
There exists a universal constant \(C>0\) such that for all \(n,k\in\N\) satisfying \(k\geq\log^Cn\), there exists an explicit extractor \(\Ext:\zo^n\to\zo^m\) for affine sources with min-entropy \(k\), which has output length \(m=k^{\Omega(1)}\) and error \(\epsilon=n^{-\Omega(1)}\).
\end{theorem}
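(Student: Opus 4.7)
The plan is to build the affine extractor through the two-stage paradigm that has become standard for modern affine extractor constructions: first reduce the affine source to an affine somewhere-random source (or similar block-structured object), and then extract from this structured object using a correlation breaker that is robust against affine-style dependencies. Throughout, I will use only linear maps for the early stages, since applying a linear map to an affine source produces another affine source (of no larger entropy), which keeps us inside the affine world.

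First, I would take the affine source $\X \in \F_2^n$ with min-entropy $k \geq \log^C n$ and apply a carefully chosen linear map $L: \F_2^n \to (\F_2^{n'})^t$ so that the $t$ output blocks $(\Y_1, \dots, \Y_t)$ form an \emph{affine somewhere-random source}: each $\Y_i$ is affine, and at least one $\Y_{i^\ast}$ is close to uniform. One way to arrange this is to let the rows of $L$ form a ``linear seeded condenser'': for every $k$-dimensional affine subspace $V \subseteq \F_2^n$, at least one row-block, when evaluated on $V$, is surjective onto $\F_2^{n'}$. This can be built from a suitable family of subspace-evasive linear maps, or via a linear seeded extractor applied over all possible short seeds in parallel.

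Second, I would apply an \emph{affine correlation breaker} $\mathsf{ACB}$: given a somewhere-random source $(\Y_1, \dots, \Y_t)$ where all blocks are jointly affine and at least one is uniform, $\mathsf{ACB}$ outputs a string that is statistically close to uniform. The construction of $\mathsf{ACB}$ is the technical heart of the proof and mirrors the independent-source correlation breakers of Chattopadhyay--Zuckerman and Cohen, but instantiated with \emph{linear} building blocks so that the intermediate objects remain affine. Concretely, one picks a short ``advice'' string $\alpha_i$ for each block $i$ (perhaps derived from $i$ itself via a good encoding), and then runs an alternating extraction that uses linear seeded extractors in such a way that at stage $j$, the advice characters $\alpha_{i^\ast,j}$ vs.\ $\alpha_{i,j}$ let us peel off the correlation between block $i^\ast$ and block $i$.

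The main obstacle, and where all the work goes, is the affine correlation breaker. In the independent-source regime, each alternating-extraction step relies on the fact that fixing a short function of one source barely decreases the entropy of the other; in the affine regime, fixing a short linear function of one block can in principle collapse many coordinates of another block, and one must argue (using the \emph{linearity} of the extractors and the bounded-weight structure of the advice encoding) that the residual affine source still has enough entropy to continue the extraction. Once $\mathsf{ACB}$ is in hand, setting parameters so that $t, n',$ and the number of alternating-extraction rounds are all $\polylog(n)$, and choosing the inner linear seeded extractors to work at entropy $\polylog(n)$ with polynomially small error, yields the claimed extractor with output length $m = k^{\Omega(1)}$ and error $n^{-\Omega(1)}$.
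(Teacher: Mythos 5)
This theorem is not proved in the paper: it is imported verbatim from Li \cite{li2016improved} and used as a black box, so there is no internal proof to compare against. Judged on its own terms, your sketch correctly captures the high-level architecture of Li's construction --- a linear reduction to an affine somewhere-random source, followed by an affine-robust correlation breaker --- and you rightly identify the affine correlation breaker as the crux and the entropy loss under linear conditioning as the central obstacle. But as written this is an outline, not a proof: the correlation breaker is described only by analogy to the independent-source case (Chattopadhyay--Zuckerman, Cohen), with no concrete construction, no parameter settings, and no analysis of how the advice encoding and linear alternating extraction actually prevent one block's fixing from collapsing another. That missing piece is precisely where all the technical work in Li's paper resides, so the argument as sketched does not establish the theorem. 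For the purposes of the present paper, however, no proof was required; the citation suffices, and the contribution of this paper lies elsewhere (the reduction from small-space to affine sources, \cref{lem:small-space-to-affine}).
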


% MAKE REMARK ABOUT REDUCTION BEING LOW-ERROR, PROVIDING NEW WAY TO EXTRACT, ETC
Resetting the universal constant \(C\) as necessary, \cref{thm:technical:high-error-small-space} follows immediately by combining \cref{lem:small-space-to-affine} and \cref{thm:affine-extractor-li}. Furthermore, since our reduction (\cref{lem:small-space-to-affine}) has extremely low error, we note that we can also combine it with a classical affine extractor  of Bourgain \cite{bourgain2007construction} to immediately get the following bonus result:

\begin{theorem}\label{thm:super-duper-low-error}
For any fixed constants \(C,\delta>0\) and all \(n,k,s\in\N\) satisfying \(k\geq\delta n\) and \(s\leq C\), there exists an explicit extractor \(\Ext:\zo^n\to\zo^m\) for space \(s\) sources with min-entropy \(k\), which has output length \(m=\Omega(n)\) and error \(\epsilon=2^{-\Omega(n)}\).
\end{theorem}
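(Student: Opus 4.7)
The plan is to combine our reduction from small-space sources to affine sources (\cref{lem:small-space-to-affine}) with Bourgain's classical low-error affine extractor \cite{bourgain2007construction}, which operates on affine sources of min-entropy $k' \geq \Omega(n/\sqrt{\log\log n})$ and produces $m = \Omega(n)$ output bits with error $2^{-\Omega(n)}$. Since \cref{lem:small-space-to-affine} already reduces small-space sources to affine sources with only $2^{-\Omega(k)}$ error, the proof is essentially a parameter check.

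First, given a space $s$ source $\X$ over $\zo^n$ with min-entropy $k \geq \delta n$ and $s \leq C$, I would apply \cref{lem:small-space-to-affine} to conclude that $\X$ is $2^{-\Omega(k)}$-close to a convex combination of affine sources with min-entropy
\[
\Gamma = \Omega\left(\frac{k}{s\log(n/k)}\right).
\]
The hypotheses give $s \leq C = O(1)$ and $n/k \leq 1/\delta$, hence $\log(n/k) \leq \log(1/\delta) = O(1)$, and both factors are absorbed into the hidden constant depending on $C$ and $\delta$. Therefore $\Gamma = \Omega(k) = \Omega(n)$, while the reduction error is $2^{-\Omega(k)} = 2^{-\Omega(n)}$.

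Next, let $\Ext$ be Bourgain's affine extractor. Since $\Gamma = \Omega(n)$ sits far above the min-entropy threshold $\Omega(n/\sqrt{\log\log n})$ required by Bourgain's construction, $\Ext$ outputs $m = \Omega(n)$ bits that are $2^{-\Omega(n)}$-close to $\U_m$ on every affine source appearing in the convex combination. Because an extractor for a family of distributions automatically extracts from any convex combination of them (by \cref{fact:stat-dist:convex-combination}), and because statistical distance cannot increase under deterministic post-processing (applied to the $2^{-\Omega(k)}$ approximation error), the overall error satisfies
\[
|\Ext(\X) - \U_m| \leq 2^{-\Omega(k)} + 2^{-\Omega(n)} = 2^{-\Omega(n)},
\]
with output length $m = \Omega(n)$, proving the theorem.

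There is no real obstacle: the theorem is an immediate corollary of \cref{lem:small-space-to-affine} coupled with the existence of a linear-entropy, exponentially-low-error affine extractor. The only step that demands even minor care is observing that the $s\log(n/k)$ factor appearing in $\Gamma$ collapses to a constant under the constant-space, linear-entropy regime, so that Bourgain's extractor is applicable and its error $2^{-\Omega(n)}$ matches (and hence dominates in rate) the reduction error $2^{-\Omega(k)}$ when $k = \Theta(n)$.
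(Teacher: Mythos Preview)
Your proposal is correct and matches the paper's approach exactly: the paper states that \cref{thm:super-duper-low-error} follows immediately by combining \cref{lem:small-space-to-affine} with Bourgain's affine extractor, and your argument is precisely this combination together with the parameter check that $s\log(n/k)=O(1)$ in the constant-space, linear-entropy regime. (One tiny quibble: Bourgain's original extractor already handles linear min-entropy $\delta n$; the $\Omega(n/\sqrt{\log\log n})$ threshold is a later refinement, but since you establish $\Gamma=\Omega(n)$ this is immaterial.)
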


To the best of our knowledge, this is the only nontrivial small-space extractor that achieves super low error \(\epsilon=2^{-\Omega(n)}\), as all previous constructions \cite{kamp2006small} have error at least \(\epsilon=2^{-\widetilde{\Omega}(n)}\). This improvement in error is extremely minor, but we include it as a nice demonstration that our \emph{reduction} has very low error and thus has the capability to produce low-error small-space extractors; our main application of it (\cref{thm:technical:high-error-small-space}), however, will use a polynomial-error affine extractor, whose error will subsume the very low error of the reduction.

At last, we are ready to prove \cref{lem:small-space-to-affine}, which will immediately yield \cref{thm:technical:high-error-small-space} (and \cref{thm:super-duper-low-error}). We do so in the following subsection.

%Notice that given this lemma, it is easy to prove \cref{thm:technical:high-error-small-space} using existing affine extractors ... bla bla and then you immediately get it. Note other consequences here ...

%So all we need to do is prove this lemma. That's what we do in the following subsection. And we actually prove that \(\X\) is a convex combination of a very special type of affine sources... bit-block sources (just like all other samplable sources from Viola)... but actually it's even more specialized... called simple bit-block sources! Tune into the next section for more!

\subsection{A reduction from small-space sources to simple bit-block sources}
In this subsection, we actually show a stronger result than \cref{lem:small-space-to-affine}. In particular, we prove that the reduction holds even for a special case of affine sources called \emph{bit-block sources}. Given a vector \(v\in\F_2^n\), we define \(\supp(v)\subseteq[n]\) to be the subset of all coordinates where \(v\) takes the value \(1\), and we define these sources as follows:

\begin{definition}[\hspace{1sp}\cite{viola2014extractors}]\label{def:bit-block}
A source \(\X\) over \(\F_2^n\) is a \emph{bit-block source} with min-entropy \(k\) if it is an affine source with min-entropy \(k\) (as per \cref{def:affine-source:main}) with the additional guarantee that \(\supp(v_i)\cap\supp(v_j)=\emptyset\), for all \(i\neq j\in[k]\).
\end{definition}
In fact, we even show that the reduction holds for a special case of bit-block sources, called \emph{simple bit-block sources}.
\begin{definition}\label{def:s-bit-block}%%% SIMPLE or SEQUENTIAL? sequential lines up better with independent source sequence
A source \(\X\) over \(\F_2^n\) is a \emph{simple bit-block source} with min-entropy \(k\) if it is a bit-block source with min-entropy \(k\) (as per \cref{def:bit-block}), with the additional guarantee that \(\max(\supp(v_i))<\min(\supp(v_j))\) for all \(i<j\in[k]\).
\end{definition}

Given these definitions, we are now able to state the technical version of \cref{lem:small-space-to-affine}.
\begin{lemma}[\cref{lem:small-space-to-affine}, technical version]\label{lem:technical-version}
Let \(\X\) be a space \(s\) source over \(\zo^n\) with min-entropy \(k\). Then \(\X\) is \(\epsilon\)-close to a convex combination of simple bit-block sources with min-entropy \(\Gamma\), where
\[
\Gamma=\Omega\left(\frac{k}{s\log(n/k)}\right),
\]
and \(\epsilon=2^{-\Omega(k)}\).
\end{lemma}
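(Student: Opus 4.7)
My plan is to follow the \emph{win--win} recursion sketched in the paper overview, producing the bit-block decomposition by adaptively fixing vertex labels of the random walk that generates $\X$. The key enabling fact, due to \cite{adversarial-sources}, is that any source with min-entropy $\geq 1$ is a convex combination of uniform distributions on $2$-element sets, i.e., $1$-entropy affine sources; this is what converts ``a piece with at least one bit of entropy'' into ``a single bit-block.''

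First I would define a recursive procedure on a sub-piece of the branching program of length $n_\ell$ and remaining min-entropy $k_\ell$. I subdivide the current piece into $t$ equal sub-pieces by fixing $t$ intermediate vertex labels of the random walk; after conditioning, the sub-pieces become mutually independent. Then I perform a dichotomy. In the \emph{good} case, at least $t'$ sub-pieces each have min-entropy $\geq 1$; applying the base fact inside each such sub-piece and using independence together with the fact that the sub-pieces occupy disjoint, order-preserved coordinate ranges in $\{0,1\}^n$, the conditioned piece becomes a convex combination of simple bit-block sources with min-entropy $\geq t'$ (the non-entropic sub-pieces contribute only to the shift vector), and the procedure halts. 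In the \emph{recursive} case, fewer than $t'$ sub-pieces are entropic, so essentially all remaining entropy is concentrated in $<t'$ sub-pieces; by averaging, the most entropic sub-piece has min-entropy $\geq (k_\ell - st)/(t'-1)$, and I recurse into it.

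I would then choose $t' = \Gamma = \Theta(k/(s\log(n/k)))$ with $t$ a small constant multiple of $t'$. The invariant driving termination is that each recursive call strictly increases the entropy rate of the current piece by a constant factor: after one step the new rate is at least $t(k_\ell - st)/((t'-1)n_\ell)$, which is $\Omega(1)$ times larger than $k_\ell/n_\ell$ provided $k_\ell \gg st'$. Since entropy rates cannot exceed $1$ and we start at rate $k/n$, the recursion must terminate in the good case within $L = O(\log(n/k))$ levels, yielding a convex combination of simple bit-block sources of min-entropy $\geq \Gamma$. Re-interpreting each bit-block in its true position inside $\{0,1\}^n$ gives the claimed decomposition of $\X$.

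To keep the error at $2^{-\Omega(k)}$ while allowing adaptivity, I would bundle \emph{all} vertex labels fixed across every level of the recursion into a single random variable $\W^*$. With depth $\leq L$, at most $t$ vertices fixed per level, and a width-$2^s$ program, $|\supp(\W^*)| \leq 2^{stL} = 2^{O(k/C)}$, where $C$ is the hidden constant in $\Gamma$. A \emph{single} application of \cref{lem:entropy-drop} to $(\X,\W^*)$ with error $2^{-\Omega(k)}$ then gives $H_\infty(\X \mid \W^* = w) \geq k - O(k/C) - \Omega(k) \geq k/2$ with probability $\geq 1 - 2^{-\Omega(k)}$ over $w$, so enough entropy survives the adaptive conditioning to make every step of the above dichotomy go through. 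The main obstacle I anticipate is balancing parameters so that simultaneously (i) $k_\ell \gg st'$ holds throughout the recursion so the rate genuinely grows at each level; (ii) the total entropy lost to the $stL$ fixings is only a constant fraction of $k$; and (iii) the recursion is forced into the good case within $L = O(\log(n/k))$ levels. A secondary subtlety is formalizing the adaptive $\W^*$ cleanly, so that its support-size bound $2^{stL}$ genuinely captures both the fact that the fix-positions at level $\ell+1$ depend on realizations of earlier fixings and the fact that the ``good case'' level may itself vary with the realization $w$.
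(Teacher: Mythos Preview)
Your overall architecture---adaptive vertex fixings bundled into a single random variable, one application of \cref{lem:entropy-drop}, the observation from \cite{adversarial-sources} turning each entropic piece into a bit-block, and a rate-based termination argument---matches the paper's. The gap is precisely at your anticipated obstacle (i), and it is fatal for the single-piece recursion you describe. With $t'=\Gamma=\Theta(k/(s\log(n/k)))$, one recursive step into \emph{the single most entropic sub-piece} leaves entropy $k_1\approx k/t'=\Theta(s\log(n/k))$. For the next step you need $k_1\gg st'=\Theta(k/\log(n/k))$, i.e., $s\log^2(n/k)\gg k$, which is exactly backwards in the target regime $k\geq s\cdot\polylog(n)$. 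More generally, recursing into one piece shrinks entropy by a \emph{multiplicative} factor $\Theta(t')$ per level, so surviving $L$ levels forces $(t')^{L+1}\lesssim k$; combining this with your own requirement $L=\Theta(\log(n/k))$ gives $t'\leq k^{O(1/\log(n/k))}$, far too small to deliver $\Gamma=\Omega(k/(s\log(n/k)))$ bit-blocks at termination.

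The paper's fix is to recurse not into one piece but into \emph{all $\Gamma$ most entropic pieces simultaneously}, bisecting each. The crucial accounting change is that the \emph{total} entropy across the $\Gamma$ tracked pieces drops by at most $\Gamma$ per level \emph{additively} (since in the bad case the $\Gamma$ discarded halves each have entropy $<1$), while the total tracked length halves. After $t$ levels the tracked region has length $\leq n2^{-t}$, so if the good case never triggers, the entire source has min-entropy $<\Gamma t+n2^{-t}$; setting $t\approx\log(n/k)$ makes this contradict the $\Omega(k)$ entropy guaranteed by the single chain-rule application, forcing the good case within $\log(n/k)$ levels. The number of layer fixings is then $O(\Gamma t)=O(k/s)$, and the support-size and error bounds go through exactly as you outlined.
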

Before we prove \cref{lem:technical-version}, we remark that the reduction also works in the \emph{reverse direction}, implying that small-space sources and simple bit-block sources are \emph{roughly equivalent}, up to a factor of about \(s\).
\begin{remark}
Using \cref{def:small-space-source,def:s-bit-block}, it is relatively straightforward to show: a simple bit-block source \(\X\) over \(n\) bits with min-entropy \(\Gamma\) is also a space \(s=1\) source over \(n\) bits with min-entropy \(\Gamma\). Combining this with \cref{lem:technical-version}, we see that simple bit-block sources and space \(s\) sources are \emph{roughly equivalent} (in the low-error convex combination sense), up to a factor of \(\widetilde{O}(s)\).
\end{remark}
% be consistent about coordinate indexing and iterative indexing !!!

%Thus, a simple bit-block source is clearly a bit-block source is clearly an affine source.
%So should technically be easier to extract from simple bit-block sources. But we don't know yet how to take advantage of this...

%To show that small space source is close to convex combo of simple bit-block sources, we use an intermediate type of source called an \emph{independent source sequence}, defined below.

Now, in order to prove \cref{lem:technical-version}, we will use an intermediate type of source, called an \emph{independent source sequence}, which is a natural generalization of independent sources to allow for uneven (and unknown) length. We will show that small-space sources are (close to) a convex combination of independent source sequences, which are a convex combination of simple bit-block sources. We prove the latter first.

\begin{definition}
A source \(\X\) over \(\zo^n\) is an \emph{\((n,r,k)\)-independent source sequence} if there exist some (unknown) lengths \(\ell_1,\dots,\ell_r\in[n]\) that sum to \(n\) such that \(\X=(\X_1,\dots,\X_r)\), where each \(\X_i\) is an independent \((\ell_i,k)\)-source.
\end{definition}
%Of course, an \((n,r,k)\)-independent source sequence is an \((n,rk)\)-source... but the whole point is that it has more structure than that.

\begin{lemma}\label{lem:indep-source-seq-reduct}
Let \(\X\) be an \((n,\Gamma,1)\)-independent source sequence. Then \(\X\) is a convex combination of simple bit-block sources with min-entropy \(\Gamma\).
\end{lemma}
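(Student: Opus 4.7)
The plan is to reduce the lemma to the single-block ($r=1$) case via independence, and then invoke the observation from \cite{adversarial-sources} that any source with min-entropy at least $1$ is a convex combination of affine sources with entropy exactly $1$.

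First, I would recall (and briefly verify) the following single-source claim: any source $\mathbf{Y}$ over $\F_2^\ell$ with $H_\infty(\mathbf{Y})\geq 1$ can be written as a convex combination of affine sources of the form $\mathbf{A}_{u,w}$, each uniform on $\{u, u+w\}$ for some $u \in \F_2^\ell$ and nonzero $w \in \F_2^\ell$. The standard proof is a greedy pairing argument: let $p_x = \Pr[\mathbf{Y}=x]$; since $H_\infty(\mathbf{Y})\geq 1$ every $p_x \leq 1/2$, so we can iteratively pair the heaviest atom $x$ with some other atom $y \neq x$ and peel off a probability-$2\min(p_x,p_y)$ contribution of the two-point uniform distribution $\mathbf{A}_{x, x+y}$. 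This process terminates with a convex decomposition of $\mathbf{Y}$ into such $\mathbf{A}_{u,w}$'s, each of which is an affine source of entropy $1$.

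Second, I would apply this claim separately to each block $\mathbf{X}_i$, obtaining a random variable $\mathbf{C}_i$ such that conditioned on $\mathbf{C}_i = c_i$, $\mathbf{X}_i$ is an affine source $\mathbf{A}_{u_i, w_i}$ of entropy $1$ over $\F_2^{\ell_i}$ with $w_i \neq 0$. Since the blocks $\mathbf{X}_1, \ldots, \mathbf{X}_\Gamma$ are independent, we may take $\mathbf{C} := (\mathbf{C}_1, \ldots, \mathbf{C}_\Gamma)$ to be drawn from the product distribution, and then conditioned on $\mathbf{C} = (c_1, \ldots, c_\Gamma)$ we get
\[
\mathbf{X} \;=\; (\mathbf{A}_{u_1,w_1}, \mathbf{A}_{u_2,w_2}, \ldots, \mathbf{A}_{u_\Gamma,w_\Gamma}),
\]
where the $\mathbf{A}_{u_i,w_i}$'s are mutually independent.

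Third, I would identify this conditional distribution with a simple bit-block source. Writing the concatenation in $\F_2^n$ and letting $\hat w_i \in \F_2^n$ be the vector that equals $w_i$ on the $i$-th block and zero elsewhere, and letting $\hat u = (u_1,\ldots,u_\Gamma) \in \F_2^n$, the conditional distribution is exactly $\hat u + \sum_{i=1}^\Gamma \mathbf{x}_i \hat w_i$ for independent uniform bits $\mathbf{x}_1, \ldots, \mathbf{x}_\Gamma$. The supports of $\hat w_1, \ldots, \hat w_\Gamma$ lie in the consecutive block intervals determined by $\ell_1, \ldots, \ell_\Gamma$, so they are pairwise disjoint and satisfy $\max(\supp(\hat w_i)) < \min(\supp(\hat w_j))$ for $i<j$, and each $\hat w_i$ is nonzero. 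Thus the conditional source is precisely a simple bit-block source with min-entropy $\Gamma$ in the sense of \cref{def:s-bit-block}, and $\mathbf{X}$ is the claimed convex combination.

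There is no significant obstacle here; the only thing to be a bit careful about is the single-source peeling argument of the first step (ensuring the process terminates and yields a genuine affine-source decomposition), and the bookkeeping that independence of the blocks lifts per-block convex decompositions to a joint convex decomposition. Both are routine.
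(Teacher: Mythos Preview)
Your proposal is correct and follows essentially the same three-step structure as the paper: decompose each block into entropy-$1$ affine (two-point uniform) sources, use independence to lift this to a product decomposition, and then read off the simple bit-block structure from the block-disjoint supports.

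One small caveat on your first step: the greedy pairing you sketch (peel off $2\min(p_x,p_y)$ for the heaviest atom $x$ and ``some other'' $y$) can actually get stuck---e.g.\ with masses $0.4,0.3,0.3$, any greedy order leaves a single atom with positive residual mass. The paper sidesteps this by invoking the standard fact that any $(\ell,1)$-source is a convex combination of \emph{flat} min-entropy-$1$ sources (uniform on two points), which is exactly the claim you want and is proved by the usual layered decomposition rather than greedy peeling. Since you already flagged this step as needing care, just swap in that argument and your proof goes through verbatim.
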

\begin{proof}
By definition, \(\X=(\X_1,\dots,\X_\Gamma)\), where each \(\X_i\) is an independent \((\ell_i,1)\)-source, for some \(\ell_i\in[n]\). We use the nice observation from \cite{adversarial-sources} that any \((\ell,1)\)-source \(\mathbf{Z}\) is a convex combination of affine sources with min-entropy exactly \(1\). Recall this observation goes as follows: since \(\mathbf{Z}\) is an \((\ell,1)\)-source, it follows via a standard argument (see, e.g., \cite{vadhan2012pseudorandomness}) that it is a convex combination of \emph{flat sources} with min-entropy exactly \(1\). But any such flat source \(\mathbf{Z}^\pr\) is, by definition, a uniform distribution over two distinct strings \(x,y\in\zo^\ell\), which must differ at \emph{some} coordinate \(i^\ast\in[\ell]\). Thus \(\mathbf{Z}^\pr_{i^\ast}\) is a uniform bit, and it is easy to verify that every other bit \(\mathbf{Z}^\pr_j,j\neq i^\ast\) is either constantly \(0\), constantly \(1\), or equal to exactly \(\mathbf{Z}^\pr_{i^\ast}\) or \(\mathbf{Z}^\pr_{i^\ast}\oplus1\). Using \cref{def:affine-source:main}, it is now straightforward to show this is an affine source with min-entropy \(1\).

Thus, we can write each \(\X_i\) as a convex combination of affine sources over \(\F_2^{\ell_i}\) with min-entropy \(1\). This means \(\X\) is a convex combination of sources of the form \(\X^\pr=(\X_1^\pr,\dots,\X_\Gamma^\pr)\), where each \(\X_i^\pr\) is still independent and has the same length \(\ell_i\) as before, but is now also guaranteed to be sampled by the process \(v_0^{(i)}+\mathbf{b}_iv_1^{(i)}\), where \(v_0^{(i)},v_1^{(i)}\in\F_2^{\ell_i}\) are fixed vectors with \(v_1^{(i)}\neq 0\), and \(\mathbf{b}_i\sim\F_2\) is a uniform bit. Thus, we can show \(\X^\pr\) is a simple bit-block source with min-entropy \(\Gamma\) as follows. First, define \(v_0:=(v_0^{(1)},v_0^{(2)},\dots,v_0^{(\Gamma)})\in\F_2^n\). Next, for each \(i\in[\Gamma]\), define \(v_i:=(\mathbbm{1}[1=i]\cdot v_1^{(1)},\mathbbm{1}[2=i]\cdot v_1^{(2)},\dots,\mathbbm{1}[\Gamma=i]\cdot v_1^{(\Gamma)})\in\F_2^n\), where \(\mathbbm{1}[\cdot]\) is the indicator function. Then it is straightforward to verify that \(\X^\pr\) is sampled by \(v_0+\sum_{i\in[k]}\mathbf{b}_iv_i\), and that the vectors \(v_0,v_1,\dots,v_k\) satisfy \cref{def:s-bit-block}. Thus \(\X^\pr\) is a simple bit-block source with min-entropy \(k\), and \(\X\) is a convex combination of such sources.
\end{proof}

At last, we are ready to prove that small-space sources are close to a convex combination of independent source sequences. By combining the following lemma with \cref{lem:indep-source-seq-reduct}, we immediately get \cref{lem:technical-version}.

% define multigraph somewhere...
% define conditional random variable somewhere...
% define start notation...

% REMEMBER TO MAKE STARTING NODE OF BRANCHING PGM IN LAYER 0!

\begin{lemma}
Let \(\X\) be a space \(s\) source over \(\zo^n\) with min-entropy \(k\). Then \(\X\) is \(\epsilon\)-close to a convex combination of \((n,\Gamma,1)\)-independent source sequences, where \(\Gamma=\Omega\left(\frac{k}{s\log(n/k)}\right)\) and \(\epsilon=2^{-\Omega(k)}\).
\end{lemma}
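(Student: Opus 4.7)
The plan is to build an adaptive algorithm that, by fixing a single carefully-designed random variable $\mathbf{W}^\ast$, exposes $\Gamma$ disjoint independent sub-chunks of $\X$ each with min-entropy at least $1$. Writing $\mathbf{W}_0,\mathbf{W}_1,\dots,\mathbf{W}_n$ for the walk states in the branching program that generates $\X$, the starting point is the standard observation that conditioning on $\mathbf{W}_I$ for any index set $I\subseteq\{0,\dots,n\}$ makes the bits in the intervals between consecutive elements of $I$ into independent sub-sources.

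The algorithm maintains a set of ``active'' sub-intervals, initially $\{[0,n]\}$. At each level it splits every active sub-interval into $t$ equal pieces by fixing the walk states at the $t-1$ new internal boundaries, and then replaces each active sub-interval by those of its $t$ resulting sub-chunks whose conditional min-entropy is at least $1$. The algorithm succeeds the moment the number of active sub-intervals reaches $\Gamma$, and otherwise continues up to a depth $D=O(\log(n/k))$. The crucial single-random-variable trick is to collapse all the adaptively-chosen fixings into one object $\mathbf{W}^\ast$: since the active set size never exceeds $\Gamma$ (otherwise we would have already succeeded), the total number of fixings is at most $(t-1)D\Gamma$, so $\mathbf{W}^\ast$ takes values in a set of size at most $2^{(t-1)Ds\Gamma}$. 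A single application of \cref{lem:entropy-drop} then gives, with probability $\geq 1-\epsilon$ over $w^\ast\sim\mathbf{W}^\ast$, that $H_\infty(\X\mid\mathbf{W}^\ast=w^\ast)\geq k-(t-1)Ds\Gamma-\log(1/\epsilon)$.

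For any such ``good'' $w^\ast$ I argue deterministically that the algorithm must have succeeded before reaching level $D$. Conditioned on $\mathbf{W}^\ast=w^\ast$, all sub-chunks across all levels are mutually independent, so letting $S_D$ be the sum of the conditional min-entropies of the sub-intervals that remain active at level $D$ gives $S_D \geq k-(t-1)Ds\Gamma-\log(1/\epsilon)-O(tD\Gamma)$, where the last term absorbs the $<1$ entropy discarded with each of the at most $tD\Gamma$ low-entropy sub-chunks ever encountered. On the other hand, if the algorithm has not succeeded by level $D$ then there are at most $\Gamma-1$ active sub-intervals each of length $n/t^D$, so $S_D\leq(\Gamma-1)\,n/t^D$. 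Taking $t$ a small constant (say $t=2$), $D=\lceil\log_2(2\Gamma n/k)\rceil$, and $\Gamma=c\cdot k/(s\log(n/k))$ for a sufficiently small constant $c$ drives the upper bound strictly below the lower bound, yielding the required contradiction. The $\geq\Gamma$ high-entropy sub-intervals produced at success are disjoint and mutually independent, and absorbing each adjacent low-entropy sub-chunk into a neighbor (using additivity of min-entropy for independent sources) converts them into $\Gamma$ contiguous pieces partitioning $[1,n]$, i.e.\ an $(n,\Gamma,1)$-independent source sequence.

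The main technical obstacle I anticipate is executing the single-random-variable trick rigorously in the face of adaptivity: the positions of the fixings at level $i$ depend on which sub-intervals are active, which in turn depend on the outcomes of the earlier fixings. So $\mathbf{W}^\ast$ must be carefully set up as a sequence of walk-state values whose coordinate positions are themselves determined by the algorithm from the earlier coordinate values. One has to verify both that this adaptive random variable still has range at most $2^{(t-1)Ds\Gamma}$, and that for every outcome $w^\ast$ the entire execution trace (active intervals at each level, entropy classifications of their sub-chunks, and the final $\Gamma$ chunks) is a deterministic function of $w^\ast$ alone, so that the deterministic contradiction argument above applies uniformly over all good $w^\ast$.
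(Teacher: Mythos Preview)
Your approach is essentially the paper's: adaptively fix walk states to break $\X$ into independent chunks, collapse all fixings into a single random variable, apply \cref{lem:entropy-drop} once, and derive a contradiction if fewer than $\Gamma$ chunks ever have entropy $\geq 1$. The paper differs only cosmetically in that it begins by splitting into $2\Gamma$ equal pieces \emph{before} recursing, and at each level retains the $\Gamma$ highest-entropy pieces (halving each) rather than all pieces with entropy $\geq 1$.

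That cosmetic difference, however, exposes a quantitative gap in your parameter setting. You assert $D=O(\log(n/k))$ but then take $D=\lceil\log_2(2\Gamma n/k)\rceil$, and these disagree by an additive $\log\Gamma$. Concretely, when $k=\Theta(n)$ and $s=O(1)$ the target is $\Gamma=\Omega(n)$, so your $D$ is $\Theta(\log n)$; your crude bound of $(t-1)D\Gamma$ fixings then makes the chain-rule loss $s\cdot\Theta(\Gamma\log n)\gg k$, and no contradiction follows. The paper avoids this because its initial split into $2\Gamma$ pieces costs only $2\Gamma-1$ fixings and collapses what in your scheme are the first $\log_2\Gamma$ levels, leaving only $t=O(\log(n/k))$ recursive rounds. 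In your version the repair is equally easy: at level $d$ there are at most $\min(2^d,\Gamma)$ active intervals, so the total number of fixings is $\sum_{d<D}\min(2^d,\Gamma)\leq 2\Gamma+(D-\log_2\Gamma)\Gamma=O(\Gamma\log(n/k))$ rather than $D\Gamma$, and the same correction applies to the count of discarded low-entropy chunks. With this sharper count your argument goes through and recovers the paper's bound.
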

\begin{proof}
Let \(\W\) be the random walk over the width \(2^s\), length \(n\) branching program that generates \(\X=(\X_1,\X_2,\dots,\X_n)\), and for each \(i\in[n]\), let \(\L_i\) be the vertex in layer \(i\) that is traversed by \(\W\). In other words, \((\L_1,\L_2,\dots,\L_n)\) is a random variable over \([2^s]^n\) that lists the vertices visited by \(\W\) in order (excluding the start vertex). Furthermore, for any \(1\leq i<j\leq n\), we define the \emph{slice} \(\X^{(i,j)}:=(\X_{i+1},\dots,\X_j)\).

For any indices \(0=i_0<i_1<\dots<i_T=n\), it is straightforward to verify that the slices \(\X^{(i_0,i_1)},\X^{(i_1,i_2)},\dots,\X^{(i_{T-1},i_T)}\) become mutually independent when conditioned on fixing \(\L_{i_1},\dots,\L_{i_{T-1}}\) to any \(\ell_{i_1},\dots,\ell_{i_{T-1}}\). Furthermore, given such a fixing, if we can guarantee that  \(T^\pr\) of these slices still have min-entropy at least \(1\) after this fixing, then the source \(\X\) conditioned on this fixing must be an \((n,T^\pr,1)\)-independent source sequence. This is simply because for each ``good'' slice with min-entropy \(1\), we can just concatenate to it all slices preceding it (until we reach another good slice or the start of the source), and for the last good slice with min-entropy \(1\), we can just concatenate to it all slices following it (until we reach the end of the source). Thus, the goal of this proof is to pick layers \(\L_i\) to fix such that with high probability over these fixings, we can make the abovementioned guarantee for the largest \(T^\pr\) possible. By the law of total probability, this will immediately imply \(\X\) is close to a convex combination of \((n,T^\pr,1)\)-independent source sequences.

Let \(\Gamma,t\) be parameters that we will set later. Informally, we will pick layers to fix in the following manner. We will split up the branching program into \(2\Gamma\) slices, and fix the layers in between them. Then, we will argue that with high probability, \(\X\) still has most of its entropy, and so we must be in one of two situations: (1) the \(\Gamma\) slices with the most entropy out of the \(2\Gamma\) slices each have at least \(1\) bit of entropy; or (2) they do not. In case (1), \(\X\) already looks like an \((n,\Gamma,1)\)-independent source sequence, and we are done. In case (2), we know the entropy must be highly concentrated in the \(\Gamma\) highest entropy slices, and so we can recurse on this sub-source that has half the size as the original source, but much more entropy. We will argue that it is impossible to forever avoid case (1) in this recursion, by showing that otherwise we would eventually (after at most \(t\) steps) find a sub-source with more entropy than its length, a contradiction. We will now describe our fixings more formally.

\paragraph{Fixings} We pick layers to fix as follows.\footnote{To reduce notation, we assume throughout the proof that all divisions yield positive integers. It is easy to extend the arguments to handle when this is not the case.} We start by defining a set of indices \(I^{(0)}\) that \emph{split} the branching program into \(2\Gamma\) slices of the same size. Namely, we define indices \(0=i_0^{(0)}<i_1^{(0)}<\dots<i_{2\Gamma-1}^{(0)}<i_{2\Gamma}^{(0)}=n\) such that \(i_j^{(0)}-i_{j-1}^{(0)}=\frac{n}{2\Gamma}\) for all \(j\in[2\Gamma]\), and set \(I^{(0)}:=\{i_j^{(0)}: j\in[2\Gamma-1]\}\). \emph{We now fix \((\L_i)_{i\in I^{(0)}}\) to some string \(\ell^{(0)}\in[2^s]^{2\Gamma-1}\)}.

In order to decide what to fix next, we construct a set \(B^{(0)}\) of the slices induced by \(I^{(0)}\), and we construct a set \(A^{(0)}\subseteq B^{(0)}\) of the \(\Gamma\) highest entropy slices indexed by \(B^{(0)}\), conditioned on the fixing we just performed. More formally, we define \(B^{(0)}=\{(i_0^{(0)},i_1^{(0)}),(i_1^{(0)},i_2^{(0)}),\dots,(i_{2\Gamma-1}^{(0)},i_{2\Gamma}^{(0)})\}\). We now pick the \(\Gamma\) \emph{largest} elements from \(B^{(0)}\) to create \(A^{(0)}\), where \emph{largest} is defined via the following total order: given \((a,b),(c,d)\in B^{(0)}\), we say \((a,b)>(c,d)\) if \(H_\infty(\X^{(a,b)}\mid(\L_i)_{i\in I^{(0)}}=\ell^{(0)})>H_\infty(\X^{(c,d)}\mid(\L_i)_{i\in I^{(0)}}=\ell^{(0)})\); or if these min-entropies are identical and \(a>c\). We now check the min-entropies of the slices in \(A^{(0)}\). If \(H_\infty(\X^a\mid(\L_i)_{i\in I^{(0)}}=\ell^{(0)})\geq1\) for all \(a\in A^{(0)}\), we stop our fixings here. 

Otherwise, we proceed with more fixings. We initialize a counter \(\tau=1\), and use \((\ast)\) to refer to the current location of this text on this page (i.e., the beginning of a loop that we are creating). Then, we define a set of indices \(I^{(\tau)}\) that \emph{split} each of the good slices from the previous round of fixings. More formally, we define \(I^{(\tau)}:=\{\frac{a_1+a_2}{2}: (a_1,a_2)\in A^{(\tau-1)}\}\). \emph{We now fix \((\mathbf{L}_i)_{i\in I^{(\tau)}}\) to some string \(\ell^{(\tau)}\in[2^s]^\Gamma\)}.

% say somewhere that I^\tau is a deterministic function of the previous I's? and I^0 is a constant?
In order to decide what to fix next, we construct a set \(B^{(\tau)}\) of the new slices induced by \(I^{(\tau)}\), and we construct a set \(A^{(\tau)}\subseteq B^{(\tau)}\) of the \(\Gamma\) highest entropy slices indexed by \(B^{(\tau)}\), conditioned on all of the fixings we have performed thus far. More formally, we define \(B^{(\tau)}=\{(a_1,\frac{a_1+a_2}{2}):(a_1,a_2)\in A^{(\tau-1)}\}\cup\{(\frac{a_1+a_2}{2},a_2):(a_1,a_2)\in A^{(\tau-1)}\}\). We now pick the \(\Gamma\) \emph{largest} elements from \(B^{(\tau)}\) to create \(A^{(\tau)}\), where \emph{largest} is defined via the following total order: given \((a,b),(c,d)\in B^{(\tau)}\), we say \((a,b)>(c,d)\) if \(H_\infty(\X^{(a,b)}\mid(\L_i)_{i\in I^{(0)}}=\ell^{(0)},(\L_i)_{i\in I^{(1)}}=\ell^{(1)},\dots,(\L_\tau)_{i\in I^{(\tau)}}=\ell^{(\tau)})>H_\infty(\X^{(c,d)}\mid(\L_i)_{i\in I^{(0)}}=\ell^{(0)},(\L_i)_{i\in I^{(1)}}=\ell^{(1)},\dots,(\L_\tau)_{i\in I^{(\tau)}}=\ell^{(\tau)})\); or if these min-entropy are identical and \(a>c\). We now check the min-entropies of the slices in \(A^{(\tau)}\). If \(H_\infty(\X^{a}\mid(\L_i)_{i\in I^{(0)}}=\ell^{(0)},(\L_i)_{i\in I^{(1)}}=\ell^{(1)},\dots,(\L_\tau)_{i\in I^{(\tau)}}=\ell^{(\tau)})\geq1\) for all \(a\in A\), we stop our fixings here. Also, if \(\tau=t\), we stop our fixings here. Otherwise, we increment \(\tau\gets\tau+1\), and we go back to \((\ast)\).\footnote{In order for this process to be well-defined, we should stop if there is a slice \((a,b)\in A^{(\tau)}\) with \(b-a=1\). We will make sure to set our parameter \(t\) to guarantee this.} This concludes our fixings.

% *really* need to make sure the concatenation happens in the right order !!! SEE PRELIMS!!!
% talk somewhere about space 0 sources?
\paragraph{Analysis} For convenience, let \(\mathbf{Q}\) denote a single random variable such that fixing \(\mathbf{Q}\) is equivalent to performing all of the fixings described above. Note that \(\mathbf{Q}\) is a deterministic function of \((\L_1,\dots,\L_n)\), and is of the form \((\L_i)_{i\in I}\), where \(I\) is not a single constant subset of \([n]\), but is chosen adaptively. Furthermore, observe that not all elements in the support of \(\mathbf{Q}\) have the same length (depending on when the fixing of layers stopped); indeed, \(\mathbf{Q}\) is a random variable over \([2^s]^{2\Gamma-1}\cup [2^s]^{2\Gamma-1+\Gamma}\cup\dots\cup[2^s]^{2\Gamma-1+t\Gamma}\). However, notice that for every \(q_1,q_2\in\supp(\mathbf{Q})\), \(q_1\) is cannot be a prefix of \(q_2\); it is therefore straightforward to construct an injection from \(\supp(\mathbf{Q})\to[2^s]^{2\Gamma-1+t\Gamma}\), and so \(|\supp(\mathbf{Q})|\leq 2^{s\cdot((t+2)\Gamma-1)}\leq 2^{(t+2)s\Gamma}\).

% maybe put the convex combination thing from here into the prelims!
Recall that we currently have parameters \(\Gamma,t\) that we said we would fix later. We will add \(\epsilon\) to the parameters that we will fix later. The goal now is to show that with probability \(1-\epsilon\) over fixing \(\mathbf{Q}\) to \(q\), the conditional distribution \((\mathbf{X}\mid\mathbf{Q}=q)\) is an \((n,\Gamma,1)\)-independent source sequence, since this would immediately imply \(\X\) is \(\epsilon\)-close to a convex combination of \((n,\Gamma,1)\)-independent source sequences. We would like to show this holds for the best possible choices of \(\Gamma,\epsilon,t\).

% define source somewhere!!!
We start by invoking \cref{lem:entropy-drop}, which tells us that with probability at least \(1-\epsilon\) over fixing \(\mathbf{Q}\) to \(q\), we have \(H_\infty(\X\mid\mathbf{Q}=q)\geq k-\log(|\supp(\mathbf{Q})|)-\log(1/\epsilon)\geq k-(t+2)s\Gamma-\log(1/\epsilon)\). Consider now some fixing \(\mathbf{Q}=q\) where this holds. We know that there is some \(\tau^\ast\in\{0,1,\dots,t\}\) such that \(q\in[2^s]^{2\Gamma-1+\tau^\ast\Gamma}\), where \(\tau^\ast\) simply counts the number of times we iterated through the fixing loop from above. Recall that by definition, the fixing \(\mathbf{Q}=q\) refers to the fixings \((\L_i)_{i\in I^{(0)}}=\ell^{(0)},\dots,(\L_i)_{i\in I^{(\tau^\ast)}}=\ell^{(\tau^\ast)}\).

Thus, by definition of our fixing procedure, the source \(\X\mid\mathbf{Q}=q\) is simply the concatenation of the slices \((\X^{(\beta,\beta^\pr)}|\mathbf{Q}=q)\), where \((\beta,\beta^\pr)\) ranges over the set \[(B^{(0)}\setminus A^{(0)})\cup(B^{(1)}\setminus A^{(1)})\cup\dots\cup(B^{(\tau^\ast)}\setminus A^{(\tau^\ast)})\cup A^{\tau^\ast},\] and the concatenation happens in increasing order of \(\beta^\pr\). Also notice that the unions above are in fact disjoint. Furthermore, given our discussion at the very beginning of the proof, we know that these slices are mutually independent, because of the conditioning on the layers separating them. Now, we could be in one of two cases: either \(\tau^\ast<t\), or \(\tau^\ast=t\).

Case (1): \(\tau^\ast<t\). In this case, by definition of our fixing procedure, we know that the \(\Gamma\) distinct slices in \((\X\mid\mathbf{Q}=q)\) that are indexed by \(A^{(\tau^\ast)}\) each have min-entropy at least \(1\). Thus, \((\X\mid\mathbf{Q}=q)\) is a sequence of independent slices, with the guarantee that at least \(\Gamma\) of them have min-entropy at least \(1\). By our discussion at the very beginning of this proof, \((\X\mid\mathbf{Q}=q)\) is an \((n,\Gamma,1)\)-independent source sequence.

Case (2): \(\tau^\ast=t\). In this case, observe that in our fixing procedure, we only proceed from iteration \(j\) to \(j+1\) in the loop if some slice in \(A^{(j)}\) has min-entropy \(<1\), which means that all slices in \(B^{(j)}\setminus A^{(j)}\) have min-entropy \(<1\) (since \(A^{(j)}\) contains the \(\Gamma\) slices with the highest min-entropy out of the \(2\Gamma\) slices in \(B^{(j)}\)). Thus, for every \((\beta,\beta^\pr)\in(B^{(0)}\setminus A^{(0)})\cup(B^{(1)}\setminus A^{(1)})\cup\dots\cup(B^{(\tau^\ast-1)}\setminus A^{(\tau^\ast-1)})\), we know \(H_\infty(\X^{(\beta,\beta^\pr)}\mid\mathbf{Q}=q)<1\). (This was also true in the previous case, but we did not need this observation there.) For all other \((\beta,\beta^\ast)\in(B^{(\tau^\ast)}\setminus A^{(\tau^\ast)})\cup A^{(\tau)}\), it trivially holds that \(H_\infty(\X^{(\beta,\beta^\pr)}\mid\mathbf{Q}=q)\leq\beta^\pr-\beta\), since this slice is just a random variable over \(\beta^\pr-\beta\) bits. It is straightforward to show that \(\beta^\pr-\beta=\frac{n}{2\Gamma}\cdot2^{-\tau^\ast}\), since our first slices \(B^{(0)}\) divide the \(n\) bit source into \(2\Gamma\) equal sized pieces, and \(A^{(0)}\subseteq B^{(0)}\), and at each iteration \(j\) of the loop we cut each slice from \(A^{(j-1)}\) in half to get \(B^{(j)}\). Thus, \(H_\infty(\X^{(\beta,\beta^\pr)}\mid\mathbf{Q}=q)\leq\frac{n}{2\Gamma}\cdot2^{-\tau^\ast}\) for all \((\beta,\beta^\ast)\in(B^{(\tau^\ast)}\setminus A^{(\tau^\ast)})\cup A^{(\tau)}\).

Thus, we know an upper bound on the entropy of each slice in \((B^{(0)}\setminus A^{(0)})\cup(B^{(1)}\setminus A^{(1)})\cup\dots\cup(B^{(\tau^\ast-1)}\setminus A^{(\tau^\ast-1)})\cup(B^{(\tau^\ast)}\setminus A^{(\tau^\ast)})\cup A^{\tau^\ast}\). Furthermore, the sets in the union are disjoint, and each set in this union contains \(\Gamma\) distinct slices, which we have already mentioned are mutually independent. Thus, we have:
\begin{align*}
H_\infty(\X\mid\mathbf{Q}=q)&<(1+\tau^\ast-1)\cdot\Gamma\cdot1 + (1+1)\cdot\Gamma\cdot\left(\frac{n}{2\Gamma}\cdot2^{-\tau^\ast}\right)\\
&=\Gamma\tau^\ast + n\cdot2^{-\tau^\ast}\\
&=\Gamma t + n2^{-t}.
\end{align*}
Combining this with the assumption we made about \(q\) near the beginning of our analysis, we have:
\begin{align}\label{eq:entropy-crossover}
k-(t+2)s\Gamma-\log(1/\epsilon)\leq H_\infty(\X\mid\mathbf{Q}=q)<\Gamma t + n2^{-t}.
\end{align}
We finally arrive at our last goal: setting parameters \(\Gamma,t,\epsilon\). We know that for any setting of these parameters that contradicts \cref{eq:entropy-crossover}, Case (2) simply cannot occur. Thus, for any such setting, we know that with probability \(1-\epsilon\) over fixing \(\mathbf{Q}=q\), we have \(H_\infty(\X\mid\mathbf{Q}=q)\geq k-(t+2)s\Gamma-\log(1/\epsilon)\), and this implies Case (1) must occur. In other words, with probability \(1-\epsilon\) over \(q\sim\mathbf{Q}\), we have that \((\X\mid\mathbf{Q}=q)\) is an \((n,\Gamma,1)\)-independent source sequence, which immediately implies that \(\X\) is \(\epsilon\)-close to a convex combination of \((n,\Gamma,1)\)-independent source sequences.

So all that remains is to pick the best possible \(\Gamma,t,\epsilon\) that contradict \cref{eq:entropy-crossover}, and in particular show that our selected \(\Gamma,\epsilon\) matches the claimed parameters in the theorem statement. We only have one minor restriction in our freedom to pick these parameters. We briefly recall the footnote from our fixings procedure, and note that the only requirement we have is that \(t\) is set so the procedure remains valid; namely, so that for every \(\tau\in[t]\) and \((\beta,\beta^\pr)\in B^{(\tau)}\) created by the fixing procedure, \(\beta^\pr-\beta\geq 1\), since this will ensure that we are creating valid slices. Above, we showed that \(\beta-\beta=\frac{n}{2\Gamma}\cdot2^{-\tau}\), and so the only restriction we have is that \(\frac{n}{2\Gamma}\cdot2^{-t}\geq1\).

Thus, to complete the proof, we may pick any \(\Gamma,t,\epsilon\) that satisfy the above restriction, while contradicting \cref{eq:entropy-crossover}. In particular, these parameters just need to satisfy
\begin{align*}
k-(t+2)s\Gamma-\log(1/\epsilon)&\geq\Gamma t + n2^{-t},\text{ and}\\
\frac{n}{2\Gamma}\cdot2^{-t}&\geq 1.
\end{align*}
Combining these, we just require:
\[
\Gamma\leq\min\left\{\frac{k-n2^{-t}-\log(1/\epsilon)}{t\cdot(s-1)+2s},\frac{n}{2^{t+1}}\right\}.
\]
We take \(t:=\log(4n/k)\) and \(\epsilon:=2^{-k/2}\) and \(\Gamma:=\frac{k}{20s\cdot\log(n/k)}\) to complete the proof.
\end{proof}
%%%%%%%%%%%%%%%%%%%%%%%%%%%%%%%%%%%%%%%%%%%%%%%%%%%%%%%%%%%%%%%%%%%%%%%%%%%%%%%%%%%%%%%%%%%%%%%%%%%%%%%%%%%%%%%%%%%%%%%%%%%%%%%%%%%%%%%%%%
% SMALL-SPACE --> AFFINE REDUCTION
%%%%%%%%%%%%%%%%%%%%%%%%%%%%%%%%%%%%%%%%%%%%%%%%%%%%%%%%%%%%%%%%%%%%%%%%%%%%%%%%%%%%%%%%%%%%%%%%%%%%%%%%%%%%%%%%%%%%%%%%%%%%%%%%%%%%%%%%%%

%%%%%%%%%%%%%%%%%%%%%%%%%%%%%%%%%%%%%%%%%%%%%%%%%%%%%%%%%%%%%%%%%%%%%%%%%%%%%%%%%%%%%%%%%%%%%%%%%%%%%%%%%%%%%%%%%%%%%%%%%%%%%%%%%%%%%%%%%%
% FUTURE DIRECTIONS
%%%%%%%%%%%%%%%%%%%%%%%%%%%%%%%%%%%%%%%%%%%%%%%%%%%%%%%%%%%%%%%%%%%%%%%%%%%%%%%%%%%%%%%%%%%%%%%%%%%%%%%%%%%%%%%%%%%%%%%%%%%%%%%%%%%%%%%%%%
\section{Future directions}\label{sec:conclusions}
In this paper, we give new constructions of extractors for small-space sources based on (i) a new reduction from small-space sources to affine sources, and (ii) improved extractors for adversarial sources. The new key ingredient we use for our adversarial source extractors is (the first) derandomization of R\"{o}dl and \v{S}inajov\'{a}'s probabilistic designs \cite{sts}, which we combine with recent explicit constructions \cite{kms,focs2020merged} of a certain kind of leakage resilient extractors, known as extractors for cylinder intersections. These constructions demonstrate new applications of these two pseudorandom objects, and it would be interesting to explore whether these objects have further applications in pseudorandomness and complexity.

Beyond the above, the three most natural open problems are as follows.

\begin{problem}\label{prob:space}
\emph{Better low-error extractors for small-space sources:} Reduce the entropy requirement for low-error small-space extraction (\cref{thm:main:space}) so that it is closer to the entropy requirement for polynomial-error small-space extraction (\cref{thm:MAIN:small-space-polylog-entropy}).
\end{problem}

\begin{problem}\label{prob:adversarial}
\emph{Better extractors for adversarial sources:} Improve the requirement on good sources in \cref{thm:main:adversarial} from \(K\geq N^\delta\) to \(K\geq\polylog(N)\), or (less ambitiously) \(K\geq N^{o(1)}\).
\end{problem}

\begin{problem}\label{prob:designs}
\emph{Better explicit designs with small independence number:} Improve the constant in the power of \(n\) of \cref{thm:main:designs} from 2 to 1.99.
\end{problem}

Given our new reduction from small-space extractors to affine sources, a concrete way to approach \cref{prob:space} is to simply pursue the construction of better low-error affine extractors. In particular, solving the affine extraction problem would effectively also ``finish off'' the small-space extraction problem. Meanwhile, \cref{prob:adversarial} can be solved by constructing a leakage-resilient extractor against \emph{number-on-forehead} leakage: that is, an extractor whose output looks uniform even conditioned on joint functions of all but one of its inputs. Finally, it would be interesting to see if \cref{prob:designs} could be answered using more elaborate properties of specific codes (i.e., beyond their distance and dimension).
%%%%%%%%%%%%%%%%%%%%%%%%%%%%%%%%%%%%%%%%%%%%%%%%%%%%%%%%%%%%%%%%%%%%%%%%%%%%%%%%%%%%%%%%%%%%%%%%%%%%%%%%%%%%%%%%%%%%%%%%%%%%%%%%%%%%%%%%%%
% FUTURE DIRECTIONS
%%%%%%%%%%%%%%%%%%%%%%%%%%%%%%%%%%%%%%%%%%%%%%%%%%%%%%%%%%%%%%%%%%%%%%%%%%%%%%%%%%%%%%%%%%%%%%%%%%%%%%%%%%%%%%%%%%%%%%%%%%%%%%%%%%%%%%%%%%

%%%%%%%%%%% NEW TEMPLATE END	  %%%%%%%%%%%
%%%%%%%%%%%%%%%%%%%%%%%%%%%%%%%%%%%
%\newpage
%\subfile{backup/junk}

\bibliographystyle{alpha}
\bibliography{references}

\end{document}